%% file: main_arxiv.tex
\documentclass[12pt]{article}

\usepackage[a4paper,margin=1in]{geometry}
\usepackage{amsmath,amssymb}
\usepackage{natbib}
\usepackage{hyperref}
\usepackage{orcidlink}
\usepackage{enumitem}
\usepackage{booktabs}
\usepackage{microtype}
\usepackage{algorithm}
\usepackage{algpseudocode}
\usepackage{graphicx}
\graphicspath{{Figures/}}
\usepackage{amsthm}
\usepackage{tikz}
\usetikzlibrary{arrows.meta,positioning,shapes.geometric,calc}

\DeclareMathOperator*{\argmax}{argmax}

\usepackage{dsfont}
\usepackage{mathtools}

\newtheorem{theorem}{Theorem}[section]
\newtheorem{corollary}[theorem]{Corollary}
\newtheorem{lemma}[theorem]{Lemma}
\newtheorem{definition}[theorem]{Definition}
\newtheorem{assumption}[theorem]{Assumption}

\newtheorem{prop}[theorem]{Proposition}
\newtheorem{remark}[theorem]{Remark}

\title{BOOST: Power-Optimal Strong-FWER Testing for Block-Structured Multiplicity}
\author{%
  Prasanjit Dubey\,\orcidlink{0000-0002-3667-5507}%
  \qquad
  Xiaoming Huo\,\orcidlink{0000-0003-0101-1206}\\
  H.~Milton Stewart School of Industrial and Systems Engineering,\\
  Georgia Institute of Technology, Atlanta, GA 30332, U.S.A.
}
\date{}

\begin{document}
\let\oldthefootnote\thefootnote
\renewcommand{\thefootnote}{}
\maketitle
\let\thefootnote\oldthefootnote
\setcounter{footnote}{0}

\begin{abstract}
Structured multiple-testing problems (gatekeeping trials, dose-finding, multi-tissue eQTL mapping, bundled-challenger A/B experiments) organize hypotheses into design-imposed blocks and demand strong family-wise error rate (FWER) control for confirmatory claims. Practitioners currently use objective-agnostic stepwise rules (Bonferroni, Holm, Hochberg, Hommel), closed-testing and graphical extensions, or hierarchical and resampling methods; none is power-optimal within the block-separable class these designs induce. We introduce \textsc{BOOST} (Block-Optimal Objective-driven Strong-FWER Testing), the power-optimal strong-FWER procedure for block size three, with three guarantees: (i)~finite-sample strong-FWER validity at $O(K)$ cost (versus $O(K^2)$ for general closed testing) without independence assumptions, with a strict \v{S}id\'ak improvement under cross-block independence; (ii)~power-optimal allocation across heterogeneous blocks via an equalized-marginal KKT condition, solvable by bisection in $O(B\log(1/\varepsilon))$; and (iii)~a sample-split plug-in variant for unknown alternative density $g$, attaining $\alpha$-control up to $O(B_{\mathcal T}\,\mathbb E\|g-\widehat g\|_\infty)$ inflation with per-hypothesis power deficit \emph{independent of $B_{\mathcal T}$}. Simulations across independent, equicorrelated, sparse, and mis-specified regimes show $1.4$--$1.7\times$ power gains over the strongest existing baseline at calibrated FWER. On two published datasets (BLUEPRINT cross-lineage cis-eQTL and Upworthy bundled-challenger A/B experiments), \textsc{BOOST} certifies an order of magnitude more full-block discoveries than existing baselines at controlled FWER.
\end{abstract}

\noindent\textit{Keywords:} multiple hypotheses testing; objective-driven inference; strong family-wise error rate; equalized-marginal allocation; power-optimal procedures.

\section{Introduction}
\label{sec:intro}

Confirmatory multiple-testing analyses rarely treat hypotheses as an unstructured list. Clinical trials test primary endpoints grouped by outcome domain and dose level; gatekeeping designs partition hypotheses into pre-specified families with logical precedence; multi-tissue eQTL studies ask, for each gene, whether a candidate SNP modulates expression across a biologically related panel of tissues; online publishers run bundled-challenger A/B tests asking whether all challenger arms beat a shared baseline~\citep{Pocock87,dmitrienko2007gatekeeping,bretz2011graphical,chen2016blueprint,matias2021upworthy}. All share an operational form: each design-imposed \emph{block} is tested using only its own data and the global error budget is apportioned across blocks. The error metric of record is strong family-wise error rate (strong FWER), the probability of at least one false rejection, uniformly over null/alternative configurations~\citep{holm,hommel,marcus1976closed}, because a single false confirmatory claim can trigger a costly follow-up; the weaker FDR control of~\citet{BH95} is appropriate for exploratory screens rather than confirmatory decisions.

Strong-FWER procedures fall into four categories, each with a concrete limitation in structured settings.
\emph{(i)~Classical stepwise rules} (Bonferroni, Holm, Hochberg, Hommel, \v{S}id\'ak)~\citep{holm,hommel,hochberg1988sharper,sidak1967,romano,HochbergTamhane1987,LehmannRomano2005} apply to any $K$ and tolerate arbitrary or positive dependence, but are objective-agnostic and provably sub-optimal in power.
\emph{(ii)~Closed testing and partitioning}~\citep{marcus1976closed,goeman2010sequential,dobriban2020fact,karmakar2025bottomup,hartog2025evalues} is the dominant general framework, but local tests are themselves classical stepwise rules, worst-case evaluation scales as $O(K^2)$~\citep{dobriban2020fact,karmakar2025bottomup}, and optimality within the class is not claimed.
\emph{(iii)~Graphical / weighted gatekeeping}~\citep{Bretz09,bretz2011graphical,dmitrienko2007gatekeeping,dmitrienko2008multistage,dmitrienko2010book} honors design-imposed block structure and encodes regulatory precedence, but power optimality within the admissible graph is not a design criterion and attainable weights are fixed a priori.
\emph{(iv)~Hierarchical and resampling methods}~\citep{meinshausen2008hierarchical,mandozzi2016hierarchical,WestfallYoung1993} exploit tree structure or empirical dependence, but remain objective-agnostic.

A parallel \emph{objective-driven} line~\citep{RHPA22,dubey25,dubey26esp} maximizes discovery subject to strong-FWER constraints via a dual program with strong duality and is power-optimal, but existing constructive solutions assume three conditions: full exchangeability of the joint $p$-value vector, a single common alternative density $g$ shared across all hypotheses, and a known $g$. Structured multiplicity routinely violates all three.

Existing procedures do not jointly deliver (a) power optimality, (b) block separability, i.e., operational alignment with the designs practitioners deploy, and (c) validity under unknown, heterogeneous alternative densities. Structured multiplicity needs all three: block-level signal strengths differ (endpoints, tissues, doses), the joint law is exchangeable only within blocks, and $g$ must be estimated from data.

\paragraph{Our contributions.}
We give a procedure that delivers all three, for block size three.

\begin{itemize}[leftmargin=1.25em,itemsep=2pt,topsep=2pt]
\item \textbf{Power-optimal block-separable procedure at $O(K)$ cost.}
    \textsc{BOOST} (Algorithm~\ref{alg:blockwise_k3}) maximizes the average-power objective $\Pi_K$ (the ``all-$K$-reject'' probability, defined in Section~\ref{sec:formulation}) within the block-separable class $\mathfrak D_{\mathrm{sep}}$, which contains the fixed-weight-allocation closed-testing and graphical procedures deployed in clinical and genomic pipelines (alpha-propagating schemes that transfer weight across families lie outside $\mathfrak D_{\mathrm{sep}}$), at $O(K)$ evaluation, versus $O(K^2)$ worst case for general closed testing~\citep{dobriban2020fact}. The optimum holds for heterogeneous blocks with block-specific alternative laws.
\item \textbf{Finite-sample FWER with no independence.}
    Theorem~\ref{thm:blockwise_strong_fwer_clean} gives $\mathrm{FWER}\le\alpha$ under only block-measurability and local validity against the true block marginal; no cross- or within-block independence is assumed. Under cross-block independence, a \v{S}id\'ak tightening strictly improves optimal power (Theorem~\ref{thm:global_opt_sep_allocation_sidak}, Corollary~\ref{cor:sidak_blockwise_optimality}).
\item \textbf{Allocation by equalized-marginal KKT.}
    Theorems~\ref{thm:global_opt_sep_within_block}--\ref{thm:global_opt_sep_allocation} characterize the power-optimal allocation across heterogeneous blocks; the optimizer is computable by bisection in $O(B\log(1/\varepsilon))$.
\item \textbf{Plug-in theory for unknown $g$.}
    A sample-split plug-in variant (Algorithm~\ref{alg:plugin_blockwise_k3}) attains $\alpha$-control up to $O(B_{\mathcal T}\,\mathbb E\|g-\widehat g\|_\infty)$ inflation (Theorem~\ref{thm:plugin_fwer}), with per-hypothesis power deficit \emph{independent of $B_{\mathcal T}$} (Theorem~\ref{thm:plugin_power}, Appendix~\ref{app:deferred_main}), addressing limitation (c).
\end{itemize}

Empirically (Section~\ref{sec:simulations}), \textsc{BOOST} leads Bonferroni, Holm, Hochberg, Hommel, and \v{S}id\'ak step-down on $\Pi_K$ across all signal settings, with relative gains widening at moderate signal strengths and at stringent significance levels, the joint regime where confirmatory structured-multiplicity studies operate.

Optimality is established within the block-separable class $\mathfrak D_{\mathrm{sep}}$, the class practitioners already use, and is not claimed against all monotone strong-FWER rules. Block size three is the smallest non-trivial case (block size two reduces to per-marginal Neyman--Pearson, since the only structured configurations are $\vec h_0$ and $\vec h_1$); the framework is modular in the block-level atom (any power-optimal $K$-block solver can be substituted, Section~\ref{subsec:blockwise_method}). The partition is taken as dictated by design; principled partition selection is left for future work.

\section{Preliminaries and Problem Formulation}
\label{sec:formulation}

\subsection{Multiple hypothesis testing model and decision rules}
\label{sec:mht}
For $K$ hypotheses $\{H_k\}_{k=1}^K$ with data $\vec X=(X_1,\ldots,X_K)\in\mathcal X$, a testing policy $\vec D(\vec X)\in\{0,1\}^K$ rejects $H_k$ when $D_k=1$. The truth configuration $h\in\{0,1\}^K$ encodes $h_k=0$ (null) vs.\ $h_k=1$ (alternative), with marginal densities $f_{k,0},f_{k,1}$ and, under independence,
\begin{equation}
\label{eq:independence}
f_h(x)=\prod_{k=1}^K f_{k,h_k}(x_k).
\end{equation}
The structured configurations $h_\ell=(1,\ldots,1,0,\ldots,0)^\top$ with $\ell$ alternatives and $K-\ell$ nulls drive the strong-FWER constraints below.

\paragraph{Power and error functions.}
Let $V=\sum_k(1-h_k)D_k$ be the number of false rejections. We work with \emph{average power} and the per-configuration family-wise error rate
\begin{equation}
\label{eq:pil_final}
\Pi_\ell(\vec{D}) = \frac{1}{\ell}\,\mathbb{E}_{\vec{h}_\ell}\!\left[\sum_{k=1}^{\ell} D_k(\vec{X})\right],\qquad 1 \le \ell \le K,
\end{equation}
\begin{equation}
\label{eq:fwer_final}
\mathrm{FWER}_\ell(\vec D) = \mathbb{P}_{\vec h_\ell}(V>0), \qquad 0\le \ell < K.
\end{equation}
The summation in \eqref{eq:pil_final} runs over the first $\ell$ coordinates because the canonical configuration $\vec h_\ell$ places its alternatives there; under $\vec h$-exchangeability of the joint law (asserted later in this subsection), this is without loss of generality.
Strong FWER control at level $\alpha$ asks $\mathrm{FWER}_\ell\le\alpha$ for all $\ell\in\{0,\ldots,K-1\}$.
The objective-driven multiple-testing problem is then
\begin{equation}
\label{eq:general_objective_driven}
\max_{\vec D}\ \Pi(\vec{D})
\qquad \text{s.t.}\qquad
\mathrm{FWER}_\ell(\vec{D})\le \alpha,\ \ \ell=0,1,\dots,K-1,
\end{equation}
where $\Pi(\vec{D})$ is a chosen power functional (e.g., $\Pi_K(\vec D)$ or $\Pi_{\mathrm{any}}(\vec D)$).
Choosing the average power objective $\Pi_3(\vec{D})$ and the strong-FWER constraints $\mathrm{FWER}_\ell(\vec{D})\le \alpha$ leads to the central $K=3$ optimization problem studied in~\citet{dubey25}:
\begin{equation} \label{eq:MHTK3} \begin{aligned} \max_{\vec{D} \in \{0,1\}^3}\ \ & \Pi_{3}(\vec{D}) \\ \text{s.t.}\ \ & \mathrm{FWER}_{\ell}(\vec{D})\leq \alpha, \qquad 0 \leq \ell\leq 2. \end{aligned} \end{equation}
For $K>3$, directly solving \eqref{eq:general_objective_driven} over all monotone procedures is in general computationally infeasible.
Our strategy is to build a scalable $K>3$ procedure using an exact $K=3$ strong-FWER optimizer as a building block.

Under standard $\vec h$-exchangeability and arrangement-increasing assumptions~\citep{RHPA22}, we restrict attention without loss of generality to symmetric, likelihood-ratio-ordered rules and work in this class throughout. Write $u_k=F_{0,k}(X_k)$ for the $p$-value, with $u_k\sim\mathrm{Unif}(0,1)$ under $H_{0k}$ and $u_k\sim g$ under $H_{Ak}$ for a common non-increasing alternative density $g$~\citep[cf.~eq.~\eqref{eq:pvalue_model}]{RHPA22}; Appendix~\ref{app:k3_full} gives the linear-functional representation of $\Pi_K$ and $\mathrm{FWER}_\ell$.

\subsection{Target problem: block-separable strong-FWER design}
\label{sec:target_problem_generalK}

The unrestricted global program is the strong-FWER analogue of \eqref{eq:general_objective_driven} with the global average power objective:
\begin{equation}
\label{eq:MHTK}
\begin{aligned}
\max_{\vec D\in\mathcal D_K}\quad & \Pi_K(\vec D)\\
\text{s.t.}\quad & \mathrm{FWER}_\ell(\vec{D})\le \alpha,\qquad \ell=0,1,\dots,K-1.
\end{aligned}
\end{equation}
We target \eqref{eq:MHTK} within the \emph{block-separable} class $\mathfrak D_{\mathrm{sep}}$ (Definition~\ref{def:block_separable}), under which each block is tested using only its own data and the global budget is apportioned across blocks. The next section presents the procedure, reusing the~\citet{dubey25} $K=3$ solver as the per-block atom.

\section{Main Result}
\label{sec:main}

The design of a strong-FWER procedure for structured multiplicity must answer three coupled questions: \emph{(i)} how to decide within each block, \emph{(ii)} how to apportion the global error budget across blocks, and \emph{(iii)} how to do so when the alternative density driving within-block power is not known exactly. This section gives a unified answer within the block-separable class.

\subsection{A blockwise computational strategy for finding the optimizer for \texorpdfstring{$K>3$}{K>3}}
\label{subsec:blockwise_method}

Within $\mathfrak D_{\mathrm{sep}}$, the per-block value function $\pi_3^{(b)}$ is attained by any power-optimal $K=3$ strong-FWER solver applied to the per-block law; we use the closed-form solver of~\citet{dubey25} as the block-level atom (the framework is modular: the elementary-symmetric-polynomial solver of~\citet{dubey26esp} for block sizes $\ge 4$ can be substituted without changing the allocation theory). The remaining design choice is the allocation of $\alpha$ across blocks.

Fix $K=3B$ and a partition $\{\mathcal B_b\}_{b=1}^B$ of $[K]$ with $|\mathcal B_b|=3$. Write $u^{(b)}$ for the within-block $p$-values and $u^{(b)}_{(1)}\le u^{(b)}_{(2)}\le u^{(b)}_{(3)}$ for their order statistics. Per-block levels $\alpha_{\mathrm{blk}}^{(b)}$ satisfy $\sum_b\alpha_{\mathrm{blk}}^{(b)}\le\alpha$; a natural choice is $\alpha_{\mathrm{blk}}^{(b)}=\alpha/B$.

Per block, Algorithm~\ref{alg:blockwise_k3} calls \texttt{ComputeOptimalMu} (Algorithm~\ref{alg:compute_optimal_mu_K3_main}) at level $\alpha_{\mathrm{blk}}^{(b)}$ to obtain multipliers $\hat{\vec\mu}^{(b)}$, applies the induced $K=3$ rule $\vec D^{\hat{\vec\mu}^{(b)}}$ to the ordered triple, and unions the within-block rejections. When $\alpha_{\mathrm{blk}}^{(1)}=\cdots=\alpha_{\mathrm{blk}}^{(B)}$ (e.g., the uniform split $\alpha/B$), a single $\hat{\vec\mu}$ is precomputed once and reused across all blocks.

\begin{algorithm}[htbp]
\caption{\textsc{BOOST}: Block-Optimal Objective-driven Strong-FWER Testing}
\label{alg:blockwise_k3}
\begin{algorithmic}[1]
\Statex \textbf{Input:} $p$-values $(u_1,\ldots,u_K)$, $K=3B$; partition $\{\mathcal B_b\}_{b=1}^B$; levels $\alpha_{\mathrm{blk}}^{(b)}$; solver tolerances $(\delta,\varepsilon,T_{\max},U_{\max})$.
\State $\mathcal R\gets\emptyset$.
\For{$b=1$ to $B$}
\State Sort $u^{(b)}=(u_i)_{i\in\mathcal B_b}$ to $u^{(b)}_{(1)}\le u^{(b)}_{(2)}\le u^{(b)}_{(3)}$.
\State $\hat{\vec\mu}^{(b)}\gets\texttt{ComputeOptimalMu}(\alpha_{\mathrm{blk}}^{(b)},\delta,\varepsilon,T_{\max},U_{\max})$ \Comment{Algorithm~\ref{alg:compute_optimal_mu_K3_main}}
\State $R^{(b)}\gets\sum_{i=1}^3 D_i^{\hat{\vec\mu}^{(b)}}(u^{(b)}_{(1)},u^{(b)}_{(2)},u^{(b)}_{(3)})$ via \eqref{eq:dimus2}.
\State Add to $\mathcal R$ the indices in $\mathcal B_b$ corresponding to the $R^{(b)}$ smallest $p$-values.
\EndFor
\State \textbf{Output:} $\mathcal R$.
\end{algorithmic}
\end{algorithm}

With equal per-block levels, Algorithm~\ref{alg:blockwise_k3} has $O(K)$ evaluation cost after a one-time $K=3$ precompute.

\subsection{Finite-sample strong-FWER validity}
\label{subsec:theorems}

We establish finite-sample strong-FWER validity of Algorithm~\ref{alg:blockwise_k3} under only block-measurability and local validity against the block marginal; no independence is required.

Fix a partition $\{\mathcal B_1,\ldots,\mathcal B_B\}$ of $[K]$ with $|\mathcal B_b|=3$ for all $b\in[B]$,
and per-block levels $\alpha_{\mathrm{blk}}^{(1)},\ldots,\alpha_{\mathrm{blk}}^{(B)}\in(0,1)$.
For each block $b\in[B]$, fix an ordering $\mathcal B_b=\{k_{b,1},k_{b,2},k_{b,3}\}$ and write
\[
X^{(b)} := (X_{k_{b,1}},X_{k_{b,2}},X_{k_{b,3}})\in\mathcal X^{(b)}.
\]
Let $\vec D^{(b)}:\mathcal X^{(b)}\to\{0,1\}^3$ denote the local decision rule implemented by
Algorithm~\ref{alg:blockwise_k3} within block $b$.

For any within-block configuration $\eta=(\eta_1,\eta_2,\eta_3)\in\{0,1\}^3$, define the within-block product law
$\mathbb P_{b,\eta}$ as the probability measure on $\mathcal X^{(b)}$ with density
\begin{equation}
\label{eq:block_product_law_density}
f_{b,\eta}(x^{(b)}) := \prod_{i=1}^3 f_{k_{b,i},\eta_i}(x_{k_{b,i}}),
\qquad x^{(b)}\in\mathcal X^{(b)}.
\end{equation}
Define the within-block number of false rejections under configuration $\eta$ by
\begin{equation}
\label{eq:Vb_def_clean}
V^{(b)}(X^{(b)};\eta)
\;:=\;
\sum_{i=1}^3 (1-\eta_i)\,D^{(b)}_i(X^{(b)}).
\end{equation}

The first ingredient is structural: each block's decision must use only that block's data, a property that holds by construction for Algorithm~\ref{alg:blockwise_k3}.

\begin{assumption}[Block-measurability of the local rule]
\label{as:block_measurability}
For each $b\in[B]$, there exists a measurable map $\phi_b:\mathcal X^{(b)}\to\{0,1\}^3$ such that
\begin{equation}
\label{eq:block_measurability}
\vec D^{(b)}(\vec X)=\phi_b\!\left(X^{(b)}\right)\qquad\text{for all }\vec X\in\mathcal X,
\end{equation}
i.e., $\vec D^{(b)}$ is $\sigma(X^{(b)})$-measurable.
\end{assumption}

The second ingredient is a local strong-FWER guarantee stated against the \emph{actual} marginal law of the block data under the global model, rather than against a product-law ideal. For $b\in[B]$ and $\ell\in\{0,1,\dots,K-1\}$, let
\begin{equation}
\label{eq:block_marginal_law}
\mathbb Q^{(b)}_{h_\ell}\;:=\;\mathcal L_{h_\ell}\!\bigl(X^{(b)}\bigr)
\end{equation}
denote the marginal distribution of $X^{(b)}$ induced by $\vec X\sim\mathbb P_{h_\ell}$, and write
\[
\eta^{(b)}(\ell)
\;:=\;
\bigl((h_\ell)_{k_{b,1}},(h_\ell)_{k_{b,2}},(h_\ell)_{k_{b,3}}\bigr)\in\{0,1\}^3
\]
for the within-block configuration induced by $h_\ell$.

\begin{assumption}[Local strong-FWER validity against the block marginal]
\label{as:local_validity_marginal}
For each $b\in[B]$ and each $\ell\in\{0,1,\dots,K-1\}$ such that $\eta^{(b)}(\ell)$ has at least one true null (i.e., $\sum_{i=1}^{3}\eta^{(b)}(\ell)_i \le 2$),
\begin{equation}
\label{eq:block_strong_fwer_assump_clean}
\mathbb Q^{(b)}_{h_\ell}\!\Bigl(V^{(b)}\bigl(X^{(b)};\eta^{(b)}(\ell)\bigr)>0\Bigr)\;\le\; \alpha_{\mathrm{blk}}^{(b)}.
\end{equation}
\end{assumption}

Assumption~\ref{as:local_validity_marginal} imposes no condition on the global model beyond what it forces on the marginal of each block: whatever law $X^{(b)}$ actually has under $h_\ell$, the local rule must control its own strong FWER at level $\alpha_{\mathrm{blk}}^{(b)}$ against that law. In particular, it does \emph{not} require the block marginal to factorize, either within a block or across blocks.

These two assumptions, combined with a Bonferroni budget across blocks, deliver global strong-FWER control with no independence requirement on the joint law.

\begin{theorem}[Strong-FWER control of the blockwise procedure]
\label{thm:blockwise_strong_fwer_clean}
Suppose Assumptions~\ref{as:block_measurability} and \ref{as:local_validity_marginal} hold, and that the per-block levels satisfy
\[
\sum_{b=1}^B \alpha_{\mathrm{blk}}^{(b)}\;\le\; \alpha.
\]
Let $\vec D^{\mathrm{blk}}:\mathcal X\to\{0,1\}^K$ denote the global decision rule produced by Algorithm~\ref{alg:blockwise_k3}, and let $\mathcal R(\vec X):=\{k\in[K]:D^{\mathrm{blk}}_k(\vec X)=1\}$ be its rejection set. Then, for every $\ell\in\{0,1,\ldots,K-1\}$,
\[
\mathrm{FWER}_\ell(\vec D^{\mathrm{blk}})
\;=\;
\mathbb P_{h_\ell}\!\left(V>0\right)
\;\le\;
\alpha,
\]
so Algorithm~\ref{alg:blockwise_k3} is feasible for \eqref{eq:MHTK}; the choice $\alpha_{\mathrm{blk}}^{(b)}=\alpha/B$ saturates the budget.
\end{theorem}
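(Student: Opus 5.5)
The plan is a union bound over blocks, with each block's error event rewritten as an event about that block's data alone. Fix $\ell\in\{0,\dots,K-1\}$ and write $V=\sum_{k}(1-(h_\ell)_k)D^{\mathrm{blk}}_k(\vec X)$. Since the blocks $\{\mathcal B_b\}$ partition $[K]$, and Algorithm~\ref{alg:blockwise_k3} sets $D^{\mathrm{blk}}_{k_{b,i}}=D^{(b)}_i$, this count splits as
\[
V=\sum_{b=1}^B V^{(b)}\bigl(X^{(b)};\eta^{(b)}(\ell)\bigr),
\]
with $V^{(b)}$ as in \eqref{eq:Vb_def_clean}. One small bookkeeping point needs checking here. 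The algorithm rejects the indices carrying the $R^{(b)}$ smallest $p$-values, so the local rule $\vec D^{(b)}$ must be read in the fixed labelling $k_{b,1},k_{b,2},k_{b,3}$ rather than in sorted order. This relabelling is a measurable function of $X^{(b)}$, so the identity above holds pointwise.

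Because every $V^{(b)}$ is nonnegative, $\{V>0\}=\bigcup_b\{V^{(b)}>0\}$. If $\eta^{(b)}(\ell)=(1,1,1)$, then $V^{(b)}\equiv 0$ and block $b$ drops out. Let $\mathcal N_\ell$ be the set of remaining blocks, those with at least one true null. The union bound then gives
\[
\mathbb P_{h_\ell}(V>0)\le\sum_{b\in\mathcal N_\ell}\mathbb P_{h_\ell}\bigl(V^{(b)}>0\bigr).
\]

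The key step is to transfer each summand to the block marginal. By Assumption~\ref{as:block_measurability}, $V^{(b)}$ is a measurable function of $X^{(b)}$ alone, namely $\sum_i(1-\eta_i)\phi_b(X^{(b)})_i$. Therefore $\mathbb P_{h_\ell}(V^{(b)}>0)=\mathbb Q^{(b)}_{h_\ell}(V^{(b)}>0)$ by the definition \eqref{eq:block_marginal_law} of the pushforward law. For $b\in\mathcal N_\ell$, Assumption~\ref{as:local_validity_marginal} bounds this probability by $\alpha_{\mathrm{blk}}^{(b)}$. Summing,
\[
\mathbb P_{h_\ell}(V>0)\le\sum_{b\in\mathcal N_\ell}\alpha_{\mathrm{blk}}^{(b)}\le\sum_{b=1}^B\alpha_{\mathrm{blk}}^{(b)}\le\alpha.
\]
No step uses independence, within blocks or across blocks: the union bound holds for any joint law, and the marginal transfer is just a change of variables.

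Since $\ell$ was arbitrary, all constraints of \eqref{eq:MHTK} hold, so the procedure is feasible. The saturation remark is immediate because $\sum_b\alpha/B=\alpha$.

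The only step needing real care is the pointwise decomposition of $V$. It requires that the sort-then-reject step of Algorithm~\ref{alg:blockwise_k3} is a well-defined, block-measurable map into the fixed labelling; ties can be broken by a deterministic rule. That is exactly what Assumption~\ref{as:block_measurability} encodes, so I expect no genuine obstacle.
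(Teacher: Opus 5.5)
Your proposal is correct and follows essentially the same route as the paper's proof: the exact decomposition $V=\sum_b V^{(b)}$ over the partition, Boole's inequality, the transfer $\mathbb P_{h_\ell}(V^{(b)}>0)=\mathbb Q^{(b)}_{h_\ell}(V^{(b)}>0)$ via block-measurability, and Assumption~\ref{as:local_validity_marginal} on blocks with at least one true null, with all-alternative blocks contributing zero. The only cosmetic differences are that you discard the all-alternative blocks before the union bound rather than bounding their terms by $0$ afterward, and that you spell out the sorted-to-fixed-labelling bookkeeping, which the paper absorbs into its definition of the within-block rejection set $\mathcal R^{(b)}(X^{(b)})$.
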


The proof is provided in Appendix~\ref{app:proof_blockwise_strong_fwer_clean}.

Independence enters only to verify Assumption~\ref{as:local_validity_marginal} in concrete models: under \eqref{eq:independence} the $K=3$ optimizer of~\citet{dubey25} discharges it, with weakenings (cross-block-only independence, $\vec h$-exchangeability reductions) in Appendix~\ref{app:thm1_scope_instantiations}.

\subsection{Plug-in estimation of the alternative density}
\label{subsec:plugin}

Algorithm~\ref{alg:blockwise_k3} treats $g$ as known; in practice $g$ is estimated from external or held-out data. Replacing $g$ by a data-independent estimate $\widehat g$ preserves strong-FWER validity up to a $\|g-\widehat g\|_\infty$-driven perturbation, with the sample-splitting instantiation given by Algorithm~\ref{alg:plugin_blockwise_k3}. The plug-in analysis below operates against the within-block product-law block marginal $\mathbb Q^{(b)}_{h_\ell,g}$. This matches the actual block marginal under within-block independence with a common alternative density $g$, which is the standard setting of the canonical model class~\eqref{eq:pvalue_model}.

We adopt the following terminology. For a candidate density $\tilde g$ on $[0,1]$, write $\hat{\vec\mu}(\tilde g;\alpha')$ for the output of Algorithm~\ref{alg:compute_optimal_mu_K3_main} applied with $\tilde g$ as the alternative density at block level $\alpha'$. The solver guarantees that
\begin{equation}
\label{eq:solver_guarantee_against_ghat}
\mathbb Q^{(b)}_{\!h_\ell,\tilde g}\!\Bigl(V^{(b)}\!\bigl(X^{(b)};\eta^{(b)}(\ell)\bigr)>0\Bigr)\;\le\;\alpha',
\qquad \ell=0,\dots,K-1,
\end{equation}
where $\mathbb Q^{(b)}_{h_\ell,\tilde g}$ denotes the per-block law built with $\tilde g$ in place of $g$. The \emph{plug-in blockwise procedure} computes $\hat{\vec\mu}^{(b)}=\hat{\vec\mu}(\widehat g;\alpha_{\mathrm{blk}}^{(b)})$ with a data-independent $\widehat g$ and then proceeds exactly as Algorithm~\ref{alg:blockwise_k3}.

The analysis rests on three perturbation lemmas with constants $L_3=2M$ and $L_3^{\mathrm{pow}}=3M^2$, $M:=\max(\|g\|_\infty,\|\widehat g\|_\infty)$, stated in Appendix~\ref{app:plugin_lemmas}.

The FWER side first: an additive inflation linear in $B_{\mathcal T}$ and $\mathbb E\|g-\widehat g\|_\infty$.

\begin{theorem}[Plug-in blockwise procedure: approximate strong-FWER validity]
\label{thm:plugin_fwer}
Assume Assumption~\ref{as:block_measurability} and fix a partition of the blocks $[B]=\mathcal E\sqcup\mathcal T$, $|\mathcal T|=B_{\mathcal T}\ge 1$. Suppose the following hold:
\begin{enumerate}
\item[(i)] $\widehat g$ is a $[0,1]$-valued density function that is measurable with respect to a $\sigma$-algebra $\mathcal G$ independent of $(X^{(b)})_{b\in\mathcal T}$.
\item[(ii)] Per-block levels $\{\alpha_{\mathrm{blk}}^{(b)}\}_{b\in\mathcal T}$ satisfy the Bonferroni budget $\sum_{b\in\mathcal T}\alpha_{\mathrm{blk}}^{(b)}\le\alpha$.
\item[(iii)] For each $b\in\mathcal T$, $\hat{\vec\mu}^{(b)}:=\hat{\vec\mu}(\widehat g;\alpha_{\mathrm{blk}}^{(b)})$ from Algorithm~\ref{alg:compute_optimal_mu_K3_main} applied with $\widehat g$, and $\vec D^{(b)}$ is the induced block rule of Algorithm~\ref{alg:blockwise_k3}.
\item[(iv)] For every $b\in\mathcal T$ and every $\ell\in\{0,\ldots,K\}$, the within-block law $\mathbb Q^{(b)}_{h_\ell,g}$ factorizes as the within-block product law with marginal $g$ on alternative coordinates and uniform on null coordinates (within-block independence with common alternative density $g$; automatic under~\eqref{eq:pvalue_model}). The case $\ell=K$ specializes to $g^{\otimes 3}$ and is what Theorem~\ref{thm:plugin_power} uses; the cases $\ell\in\{0,\ldots,K-1\}$ are what the FWER bound below uses.
\end{enumerate}
Let $\vec D^{\mathrm{blk},\mathcal T}$ denote the plug-in blockwise rule restricted to blocks in $\mathcal T$. Then, for every $\ell\in\{0,\dots,K-1\}$,
\begin{equation}
\label{eq:plugin_fwer_bound}
\mathrm{FWER}_\ell(\vec D^{\mathrm{blk},\mathcal T})
\;\le\;\alpha\;+\;L_3\,B_{\mathcal T}\,\mathbb E\!\left[\|g-\widehat g\|_{\infty}\right],
\end{equation}
with $L_3=2M$ and $M:=\max(\|g\|_\infty,\|\widehat g\|_\infty)$ as in Lemma~\ref{lem:fwer_perturbation}. When $\widehat g$ is deterministic (e.g., pre-specified from external data), $\mathbb E\|g-\widehat g\|_\infty$ is the deterministic distance $\|g-\widehat g\|_\infty$. When $\widehat g$ is random, $M$ is random and $L_3\,\mathbb E\|g-\widehat g\|_\infty$ is read as $2\,\mathbb E[M\,\|g-\widehat g\|_\infty]$; a $\mathcal G$-measurable uniform upper bound $\|\widehat g\|_\infty\le \overline M$ a.s.\ (automatic for Grenander estimators of bounded $g$) reduces this to the stated form with $L_3=2\max(\|g\|_\infty,\overline M)$.
\end{theorem}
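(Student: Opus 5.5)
We prove Theorem~\ref{thm:plugin_fwer} by conditioning on $\mathcal G$, applying a Bonferroni union bound across the test blocks, and controlling each block's error with the FWER perturbation lemma (Lemma~\ref{lem:fwer_perturbation}).

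\textbf{Step 1 (union bound).} Fix $\ell\in\{0,\dots,K-1\}$. A false rejection by $\vec D^{\mathrm{blk},\mathcal T}$ must occur in some block $b\in\mathcal T$. By Assumption~\ref{as:block_measurability}, that event depends only on $X^{(b)}$ and on $\hat{\vec\mu}^{(b)}$, which is $\mathcal G$-measurable. Hence
\[
\mathbb P_{h_\ell}(V>0\mid\mathcal G)\;\le\;\sum_{b\in\mathcal T}\mathbb P_{h_\ell}\bigl(V^{(b)}(X^{(b)};\eta^{(b)}(\ell))>0\mid\mathcal G\bigr).
\]
Blocks with $\eta^{(b)}(\ell)=(1,1,1)$ contribute zero, since they contain no true nulls.

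\textbf{Step 2 (freezing $\widehat g$).} By (i), $\widehat g$ is independent of $(X^{(b)})_{b\in\mathcal T}$. Conditionally on $\mathcal G$, each block rule is therefore a fixed deterministic rule $\phi_b$ built from the fixed density $\widehat g$. The conditional law of $X^{(b)}$ is still its unconditional marginal, which by (iv) equals $\mathbb Q^{(b)}_{h_\ell,g}$. Each summand is thus $\mathbb Q^{(b)}_{h_\ell,g}(V^{(b)}>0)$, evaluated at the frozen rule.

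\textbf{Step 3 (perturbation).} The solver guarantee~\eqref{eq:solver_guarantee_against_ghat} with $\tilde g=\widehat g$ gives $\mathbb Q^{(b)}_{h_\ell,\widehat g}(V^{(b)}>0)\le\alpha^{(b)}_{\mathrm{blk}}$.

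We bound the discrepancy between the two laws uniformly over measurable events $A\subseteq[0,1]^3$. Both $\mathbb Q_{h_\ell,g}$ and $\mathbb Q_{h_\ell,\widehat g}$ are product laws with uniform factors on null coordinates and $g$ (respectively $\widehat g$) on alternative coordinates, of which there are $m\le 2$. Telescoping the difference of products one coordinate at a time gives
\[
|\mathbb Q_{h_\ell,g}(A)-\mathbb Q_{h_\ell,\widehat g}(A)|\;\le\;\int|f_{g}-f_{\widehat g}|\;\le\; m\,M^{m-1}\|g-\widehat g\|_{L^1}\;\le\;2M\|g-\widehat g\|_\infty.
\]
The middle inequality uses that every other factor in each telescoped term is bounded by $M$ and integrates over $[0,1]$. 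The last inequality uses $m\le 2$, together with $M\ge1$ (or the uniform factor equal to $1$) when $m=2$.

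This is exactly Lemma~\ref{lem:fwer_perturbation} with $L_3=2M$, so we invoke it rather than reprove it. It is the step to check carefully: the constant must not pick up a spurious factor $3$, because the fully alternative block contributes no null rejections and is excluded. If $M<1$ were possible, we would take $M\ge1$ implicitly, since any density on $[0,1]$ has supremum norm at least $1$.

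\textbf{Step 4 (assembly).} Summing over $b\in\mathcal T$ and using (ii),
\[
\mathbb P_{h_\ell}(V>0\mid\mathcal G)\;\le\;\alpha+2B_{\mathcal T}M\|g-\widehat g\|_\infty.
\]
Taking expectations over $\mathcal G$ gives~\eqref{eq:plugin_fwer_bound}, read as $2\,\mathbb E[M\|g-\widehat g\|_\infty]$ when $\widehat g$ is random. Under a $\mathcal G$-measurable bound $\|\widehat g\|_\infty\le\overline M$, we replace $M$ by the deterministic $\max(\|g\|_\infty,\overline M)$ before taking expectations, which yields the stated form.

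The only subtle point besides Step 3 is that the conditioning in Step 2 is legitimate. This relies on measurability of $\widehat g\mapsto\hat{\vec\mu}(\widehat g;\alpha')$, which we assume for Algorithm~\ref{alg:compute_optimal_mu_K3_main}; it then justifies treating the rule as fixed via the substitution (freezing) lemma for independent variables.
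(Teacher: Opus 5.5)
Your proposal is correct and follows the paper's proof essentially step for step. Both arguments condition on $\mathcal G$ and use (i) and (iv) to identify each block's conditional false-rejection probability with $\mathbb Q^{(b)}_{h_\ell,g}(V^{(b)}>0)$ for the frozen rule. Both then combine the solver guarantee against $\widehat g$ with Lemma~\ref{lem:fwer_perturbation}, sum over $\mathcal T$ by Boole's inequality under the budget (ii), and integrate out $\mathcal G$.

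As a side remark, your Step~3 telescoping can give an $M$-free constant. If you use that the untouched factors $g$ and $\widehat g$ integrate to one, rather than bounding them by $M$, you get $m\|g-\widehat g\|_{L^1}\le 2\|g-\widehat g\|_\infty$. Invoking the lemma as stated is all the theorem needs, though.
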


The proof is provided in Appendix~\ref{app:proof_plugin_fwer}.

The cleanest way to obtain a $\widehat g$ independent of the testing fold is to fit it on a held-out estimation fold; under cross-block independence, a deterministic estimation/testing split delivers a $\widehat g$ satisfying condition~(i) of Theorem~\ref{thm:plugin_fwer} (Corollary~\ref{cor:plugin_sample_splitting} in Appendix~\ref{app:plugin_corollaries}; the bound is unchanged under the \v{S}id\'ak budget).

The power side is asymmetric: per-hypothesis power loss does not pick up the $B_{\mathcal T}$ factor.

\begin{theorem}[Plug-in blockwise procedure: realized-power lower bound]
\label{thm:plugin_power}
Under the setup of Theorem~\ref{thm:plugin_fwer}, suppose additionally:
\begin{enumerate}
\item[(v)] All $K=3B$ hypotheses on the testing fold are alternatives, i.e., the within-block configuration is $\eta^{(b)}=(1,1,1)$ for every $b\in\mathcal T$.
\end{enumerate}
Let
\[
\mathrm{Power}_{g}^{(b)}\!\bigl(\widehat{\vec D}^{(b)}\bigr)
\;:=\;\Pi_{3}^{g}\!\bigl(\widehat{\vec D}^{(b)}\bigr)
\;=\;\tfrac{1}{3}\,\mathbb E_{g}\!\bigl[\textstyle\sum_{i=1}^{3}D_{i}^{\hat{\vec\mu}^{(b)}}\!\bigl(U^{(b)}\bigr)\bigr]
\]
denote the per-hypothesis average power of the plug-in rule on block $b$ under the within-block product law $U^{(b)}\sim g^{\otimes 3}$, normalized as in \eqref{eq:pil_final}. By Theorem~\ref{thm:plugin_fwer}(iv) and~(v), the testing-fold block law under $\vec h_K$ is $g^{\otimes 3}$, so $\mathrm{Power}_{g}^{(b)}$ equals the realized per-hypothesis power under the global model. Then, for the average per-hypothesis power on the testing fold,
\begin{equation}
\label{eq:plugin_power_bound}
\frac{1}{B_{\mathcal T}}\sum_{b\in\mathcal T}\mathbb E\!\left[\mathrm{Power}_{g}^{(b)}\!\bigl(\widehat{\vec D}^{(b)}\bigr)\right]
\;\ge\;\frac{1}{B_{\mathcal T}}\sum_{b\in\mathcal T}\mathbb E\!\left[\pi_{3}^{\widehat g}\!\bigl(\alpha_{\mathrm{blk}}^{(b)}\bigr)\right]
\;-\;L_{3}^{\mathrm{pow}}\,\mathbb E\!\left[\|g-\widehat g\|_{\infty}\right],
\end{equation}
with $L_{3}^{\mathrm{pow}}=3M^{2}$ as in Lemma~\ref{lem:power_perturbation}. Equivalently, writing $N_{\mathrm{rej}}^{(b)}:=\sum_{i=1}^{3}D_{i}^{\hat{\vec\mu}^{(b)}}\!(U^{(b)})$ for the rejection count on block $b$, the total expected number of correct rejections on the testing fold satisfies
\begin{equation}
\label{eq:plugin_power_bound_sum}
\sum_{b\in\mathcal T}\mathbb E\!\left[N_{\mathrm{rej}}^{(b)}\right]
\;\ge\;3\sum_{b\in\mathcal T}\mathbb E\!\left[\pi_{3}^{\widehat g}\!\bigl(\alpha_{\mathrm{blk}}^{(b)}\bigr)\right]
\;-\;3B_{\mathcal T}\,L_{3}^{\mathrm{pow}}\,\mathbb E\!\left[\|g-\widehat g\|_{\infty}\right].
\end{equation}
\end{theorem}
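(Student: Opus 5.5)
The argument is blockwise: bound the power loss on each testing block by a per-block perturbation bound, then average over the blocks. Because the deficit on each block is controlled by the same quantity $\|g-\widehat g\|_\infty$, averaging gives a deficit that does not depend on $B_{\mathcal T}$.

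\textbf{Step 1 (conditioning).} Fix $b\in\mathcal T$ and condition on $\mathcal G$. By Theorem~\ref{thm:plugin_fwer}(i), $\widehat g$, and hence $\hat{\vec\mu}^{(b)}=\hat{\vec\mu}(\widehat g;\alpha_{\mathrm{blk}}^{(b)})$, are $\mathcal G$-measurable. The block data $U^{(b)}$ are independent of $\mathcal G$. Under~(iv) and~(v), the conditional law of $U^{(b)}$ given $\mathcal G$ is $g^{\otimes 3}$. Therefore
\[
\mathbb E\bigl[\tfrac13 N_{\mathrm{rej}}^{(b)}\mid\mathcal G\bigr]=\Pi_3^{g}\bigl(\vec D^{\hat{\vec\mu}^{(b)}}\bigr),
\]
and this equals $\mathrm{Power}_g^{(b)}(\widehat{\vec D}^{(b)})$, which is a $\mathcal G$-measurable random variable.

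\textbf{Step 2 (per-block perturbation).} Conditionally on $\mathcal G$, the rule $\vec D^{\hat{\vec\mu}^{(b)}}$ is a fixed deterministic $K=3$ rule. It is the solver output for $\widehat g$, so its power under $\widehat g^{\otimes 3}$ equals $\pi_3^{\widehat g}(\alpha_{\mathrm{blk}}^{(b)})$, up to the solver tolerance, which we absorb or take to be exact. Lemma~\ref{lem:power_perturbation} then compares the power of the same fixed rule under $g^{\otimes 3}$ and under $\widehat g^{\otimes 3}$:
\[
\bigl|\Pi_3^{g}(\vec D)-\Pi_3^{\widehat g}(\vec D)\bigr|\le L_3^{\mathrm{pow}}\|g-\widehat g\|_\infty .
\]
The mechanism behind this bound is a telescoping of the product densities, $g^{\otimes3}-\widehat g^{\otimes3}=\sum_{i}(\cdots)(g-\widehat g)(\cdots)$, in which each term is bounded by $M^2\|g-\widehat g\|_\infty$. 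Applying it gives
\[
\mathrm{Power}_g^{(b)}\ge \pi_3^{\widehat g}(\alpha_{\mathrm{blk}}^{(b)})-L_3^{\mathrm{pow}}\|g-\widehat g\|_\infty .
\]

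\textbf{Step 3 (averaging).} Take expectations over $\mathcal G$ and average over $b\in\mathcal T$. The error term is the same for every block, so its average is again $L_3^{\mathrm{pow}}\mathbb E\|g-\widehat g\|_\infty$; this yields~\eqref{eq:plugin_power_bound}. Multiplying through by $3B_{\mathcal T}$ and using $\mathbb E[N_{\mathrm{rej}}^{(b)}]=3\,\mathbb E[\mathrm{Power}_g^{(b)}]$ from Step 1 gives~\eqref{eq:plugin_power_bound_sum}.

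\textbf{Main obstacle.} The delicate point is the measurability bookkeeping when $M$ is random. I will read $L_3^{\mathrm{pow}}\mathbb E\|g-\widehat g\|_\infty$ as $3\,\mathbb E[M^2\|g-\widehat g\|_\infty]$, or use the a.s.\ bound $\overline M$, exactly as in Theorem~\ref{thm:plugin_fwer}. I also need to make sure Lemma~\ref{lem:power_perturbation} is applied to a deterministic rule, which is why I condition on $\mathcal G$ first. A secondary concern is whether the solver achieves $\pi_3^{\widehat g}$ exactly or only within tolerance $\varepsilon$. If it is only within tolerance, I will define $\pi_3^{\widehat g}$ as the value the solver achieves, or carry an additive $\varepsilon$ term through the bound.
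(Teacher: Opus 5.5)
Your proposal is correct and follows the paper's proof essentially step for step: condition on $\mathcal G$ so that the plug-in rule is fixed and the testing block has law $g^{\otimes 3}$, use that the solver attains $\pi_3^{\widehat g}(\alpha_{\mathrm{blk}}^{(b)})$ under $\widehat g$, apply Lemma~\ref{lem:power_perturbation}, then take expectations, average over $b\in\mathcal T$, and multiply by $3B_{\mathcal T}$. The paper's proof takes for granted the two points you flag (reading $L_3^{\mathrm{pow}}\mathbb E\|g-\widehat g\|_\infty$ as $3\,\mathbb E[M^2\|g-\widehat g\|_\infty]$ when $M$ is random, and exact attainment of $\pi_3^{\widehat g}$ by the solver), so handling them explicitly makes your version more careful than the paper's.
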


The proof is provided in Appendix~\ref{app:proof_plugin_power}.

At per-block levels not too small relative to the estimation error, the plug-in power lower bound can be expressed against the \emph{true-$g$} oracle at a slightly deflated level $\alpha_{\mathrm{blk}}^{(b)}-L_3\|g-\widehat g\|_\infty$, with the same $B_{\mathcal T}$-free penalty (Corollary~\ref{cor:plugin_power_oracle} in Appendix~\ref{app:plugin_corollaries}).

\begin{algorithm}[htbp]
\caption{plug-in \textsc{BOOST}: plug-in variant for unknown $g$}
\label{alg:plugin_blockwise_k3}
\begin{algorithmic}[1]
\Statex \textbf{Input:} $p$-values $(u_1,\ldots,u_K)$, $K=3B$; partition $\{\mathcal B_b\}$; fold split $[B]=\mathcal E\sqcup\mathcal T$; density estimator $\widehat g$ (e.g., Grenander); levels $\{\alpha_{\mathrm{blk}}^{(b)}\}_{b\in\mathcal T}$ with $\sum_{b\in\mathcal T}\alpha_{\mathrm{blk}}^{(b)}\le\alpha$ (or \v{S}id\'ak); tolerances $(\delta,\varepsilon,T_{\max},U_{\max})$.
\State $\widehat g\gets\widehat g(\{u_i:i\in\mathcal B_b,\,b\in\mathcal E\})$. \Comment{fit on estimation fold only}
\State $\mathcal R\gets\emptyset$.
\For{$b\in\mathcal T$}
\State Sort $u^{(b)}=(u_i)_{i\in\mathcal B_b}$ to $u^{(b)}_{(1)}\le u^{(b)}_{(2)}\le u^{(b)}_{(3)}$.
\State $\hat{\vec\mu}^{(b)}\gets\texttt{ComputeOptimalMu}(\alpha_{\mathrm{blk}}^{(b)},\delta,\varepsilon,T_{\max},U_{\max};\widehat g)$.
\State $R^{(b)}\gets\sum_{i=1}^3 D_i^{\hat{\vec\mu}^{(b)}}(u^{(b)}_{(1)},u^{(b)}_{(2)},u^{(b)}_{(3)})$ via \eqref{eq:dimus2}.
\State Add to $\mathcal R$ the indices in $\mathcal B_b$ corresponding to the $R^{(b)}$ smallest $p$-values.
\EndFor
\State \textbf{Output:} $\mathcal R\subseteq\bigcup_{b\in\mathcal T}\mathcal B_b$.
\end{algorithmic}
\end{algorithm}

Estimation-fold blocks are not tested; if hypotheses on $\mathcal E$ also need to be tested, re-run with the roles of $\mathcal E$ and $\mathcal T$ swapped at half level.

Specializing to canonical density estimators turns both bounds into explicit polynomial-in-$n$ rates. Under monotone $g$ and the sample-splitting Grenander estimator on $n_{\mathcal E}$ estimation-fold $p$-values, $\mathbb E\|g-\widehat g\|_\infty=O((n_{\mathcal E}^{-1}\log n_{\mathcal E})^{1/3})$~\citep{groeneboom2014nonparametric}, giving FWER excess $O(B_{\mathcal T}(n_{\mathcal E}^{-1}\log n_{\mathcal E})^{1/3})$ and average-power deficit $O((n_{\mathcal E}^{-1}\log n_{\mathcal E})^{1/3})$; under H\"older-$s$ alternatives and a bandwidth-optimal kernel density estimator, both rates become $(n_{\mathcal E}^{-1}\log n_{\mathcal E})^{s/(2s+1)}$~\citep{tsybakov2009book} (Corollary~\ref{cor:plugin_rates} in Appendix~\ref{app:plugin_corollaries}). The procedure is asymptotically $\alpha$-level exact whenever $B_{\mathcal T}\,r_{n_{\mathcal E}}\to 0$.

\subsection{Power-optimal allocation within $\mathfrak D_{\mathrm{sep}}$}
\label{subsec:allocation}

We state the optimality result within the block-separable class $\mathfrak D_{\mathrm{sep}}$ (Definition~\ref{def:block_separable}), the operational form of procedures deployed in gatekeeping, multi-endpoint trials, and per-locus genomic tests, which decide within a block using only that block's data and apportion the global budget across blocks. The analysis decouples: Theorem~\ref{thm:global_opt_sep_within_block} upper-bounds each block's contribution by a $K=3$ value function $\pi_3^{(b)}$; Theorem~\ref{thm:global_opt_sep_allocation} characterizes the power-optimal allocation under concavity (Lemma~\ref{lem:pi3_concavity}) via an equalized-marginal KKT condition; Corollary~\ref{cor:blockwise_sep_optimality} combines them. When cross-block independence is additionally assumed, Theorem~\ref{thm:global_opt_sep_allocation_sidak} tightens the Bonferroni budget to the \v{S}id\'ak budget $\prod_b(1-\alpha_b)\ge 1-\alpha$, strictly improving the attainable power.

For each block $b\in[B]$, write $\mathcal B_b=\{k_{b,1},k_{b,2},k_{b,3}\}$ and let
\[
P^{(b)} \;:=\; \bigl(u^{(b)}_{(1)},u^{(b)}_{(2)},u^{(b)}_{(3)}\bigr)\in Q,
\qquad 0\le u^{(b)}_{(1)}\le u^{(b)}_{(2)}\le u^{(b)}_{(3)}\le 1,
\]
denote the order statistics of the block $p$-values $\{u_{k_{b,1}},u_{k_{b,2}},u_{k_{b,3}}\}$ obtained via \eqref{eq:pvalue_def}. Under $\vec h_K$ we have $h_k=1$ for all $k\in[K]$, so the induced within-block configuration is $\eta^{(b)}=(1,1,1)$ for every $b\in[B]$. Write
\begin{equation}
\label{eq:per_block_alt_law}
\mathbb Q^{(b)}_{\!\vec h_K}\;:=\;\mathcal L_{\vec h_K}\!\bigl(P^{(b)}\bigr)
\end{equation}
for the within-block marginal law of $P^{(b)}$ under the global alternative; in general these laws may differ across blocks.

We first define a $K=3$ value function $\pi_3^{\mathbb Q}$ associated with an arbitrary $K=3$ generative law $\mathbb Q$ on $Q$. For a fixed level $\alpha\in(0,1)$, recall that $\Pi_3(\vec{D})$ denotes the average power of a given $K=3$ rule $\vec{D}$ as in \eqref{eq:pil_final}. We define
\begin{equation}
\label{eq:pi3_value_def}
\begin{aligned}
\pi_3^{\mathbb Q}(\alpha)
\;:=\;\sup\bigl\{\Pi_3^{\mathbb Q}(\vec{D}) \,:\, &\vec{D}:Q\to\{0,1\}^3 \text{ measurable, }\\
&\mathrm{FWER}_l^{\mathbb Q}(\vec{D})\le \alpha \text{ for } l=0,1,2\bigr\},
\end{aligned}
\end{equation}
where $\Pi_3^{\mathbb Q}$ and $\mathrm{FWER}_l^{\mathbb Q}$ denote the average power and per-configuration FWER computed under $P\sim\mathbb Q$. We write $\pi_3:=\pi_3^{\mathbb P_{h_3}}$ when $\mathbb Q=\mathbb P_{h_3}$ is the canonical $K=3$ generative law of~\citet{dubey25}, and define the per-block value function as the value function induced by $\mathbb Q^{(b)}_{\!\vec h_K}$:
\begin{equation}
\label{eq:pi3_per_block_def}
\pi_3^{(b)}(\alpha)\;:=\;\pi_3^{\mathbb Q^{(b)}_{\!\vec h_K}}(\alpha),\qquad b\in[B].
\end{equation}
Under any law $\mathbb Q$ for which the $K=3$ FWER--power tradeoff admits the structural conditions of~\citet{dubey25}, the supremum in \eqref{eq:pi3_value_def} is attained by the corresponding $K=3$ strong-FWER optimizer.

The next lemma gives the monotonicity and concavity of $\pi_3^{\mathbb Q}$ used in the allocation analysis.

\begin{lemma}[Monotonicity and concavity of the $K=3$ value function]
\label{lem:pi3_concavity}
For each \emph{within-block alternative count} $l\in\{0,1,2,3\}$, let $\mathbb Q_l$ denote the joint law on $Q$ of the within-block $p$-value vector under any configuration with exactly $l$ within-block alternatives (and $3-l$ within-block nulls), with $\mathbb Q_3=\mathbb Q$ the complete-alternative law. Suppose each $\mathbb Q_l$ is absolutely continuous with respect to Lebesgue measure on $[0,1]^3$ and invariant under coordinate permutations of the within-block-alternative indices (i.e., $\vec h$-exchangeable within $\mathbb Q_l$). Then $\pi_3^{\mathbb Q}:[0,1]\to[0,1]$ is non-decreasing and concave. In particular, under the model class \eqref{eq:pvalue_model}, every per-block configuration law has a Lebesgue density and is exchangeable within the alternative indices, so each per-block value function $\pi_3^{(b)}$ in \eqref{eq:pi3_per_block_def} is non-decreasing and concave on $[0,1]$.
\end{lemma}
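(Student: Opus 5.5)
Monotonicity and concavity will both come from the fact that the feasible set in \eqref{eq:pi3_value_def} is defined by finitely many constraints that are linear in the decision rule, and that the objective is also linear, once we pass to randomized rules. Concretely, we relax $\vec D:Q\to\{0,1\}^3$ to randomized rules $\vec\phi:Q\to[0,1]^3$. Under this relaxation, $\Pi_3^{\mathbb Q}(\vec\phi)=\tfrac13\int\sum_i\phi_i\,d\mathbb Q_3$ is linear in $\vec\phi$. The FWER constraints are written as $\mathbb Q_l(V>0)\le\alpha$ for $l=0,1,2$. The event $\{V>0\}$ is not linear in $\vec\phi$, so we first restrict to the symmetric, likelihood-ratio-ordered, step-down class used throughout (Appendix~\ref{app:k3_full}). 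There a rule is encoded by the number $R(P)\in\{0,1,2,3\}$ of rejections applied to the smallest order statistics. False rejection then occurs under $\mathbb Q_l$ exactly when $R(P)\ge l+1$, because under $\vec h$-exchangeability the alternatives may be placed in the first $l$ coordinates. Writing $\psi_r(P)=\mathbb P(R\ge r\mid P)\in[0,1]$ with $\psi_1\ge\psi_2\ge\psi_3$, both the objective $\tfrac13\int(\psi_1+\psi_2+\psi_3)\,d\mathbb Q_3$ and the constraints $\int\psi_{l+1}\,d\mathbb Q_l\le\alpha$ are linear in $\psi=(\psi_1,\psi_2,\psi_3)$.

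\emph{Monotonicity} is then immediate: if $\alpha\le\alpha'$, the feasible set at $\alpha$ is contained in the one at $\alpha'$, so the supremum can only increase.

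\emph{Concavity} follows by mixing. Take $\alpha_0,\alpha_1$ and $\lambda\in[0,1]$, and choose feasible $\psi^{(0)},\psi^{(1)}$ at levels $\alpha_0,\alpha_1$ that are $\epsilon$-optimal. The convex combination $\psi^\lambda=\lambda\psi^{(1)}+(1-\lambda)\psi^{(0)}$ is again ordered and $[0,1]$-valued. By linearity it is feasible at level $\lambda\alpha_1+(1-\lambda)\alpha_0$, and its power is the corresponding convex combination of the two powers. Letting $\epsilon\to0$ gives $\pi_3^{\mathbb Q}(\lambda\alpha_1+(1-\lambda)\alpha_0)\ge\lambda\pi_3^{\mathbb Q}(\alpha_1)+(1-\lambda)\pi_3^{\mathbb Q}(\alpha_0)$.

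The main obstacle is showing that the randomized relaxation does not change the value of the original deterministic program \eqref{eq:pi3_value_def}. The relaxation can only increase the supremum, so we need the reverse inequality: every randomized feasible $\psi$ can be matched by a deterministic rule with the same power and no larger FWER. Here absolute continuity of the $\mathbb Q_l$ is essential. Since the $\mathbb Q_l$ are atomless, Lyapunov's convexity theorem applies to the finite vector measure $(\mathbb Q_0,\mathbb Q_1,\mathbb Q_2,\mathbb Q_3)$ on $Q$. We apply it iteratively to replace each $\psi_r$ by an indicator $\mathbf 1_{A_r}$ with the same integrals against all four measures. The nesting $A_3\subseteq A_2\subseteq A_1$ must be preserved, so we apply Lyapunov successively: first inside the region where $\psi_1\in(0,1)$, then relative to the already chosen sets, using the vector measure restricted to each layer. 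Working in increments $\psi_r-\psi_{r+1}$ keeps this ordering.

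Alternatively, one can bypass the purification step. Invoke the strong duality of the Lagrangian program from \citet{RHPA22,dubey25}, which expresses $\pi_3^{\mathbb Q}(\alpha)=\min_{\vec\mu\ge0}\{\sup_{\vec D}L(\vec D,\vec\mu)+\alpha\sum_l\mu_l\}$. For each fixed $\vec\mu$ the bracketed expression is affine in $\alpha$, so the minimum over $\vec\mu$ is a pointwise infimum of affine functions and hence concave. The restriction $\vec\mu\ge0$ makes each affine function non-decreasing, which also gives monotonicity. I would present the Lyapunov argument as primary and note the duality route as a check.

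The final sentence of the lemma, concerning model class \eqref{eq:pvalue_model}, only requires verifying the hypotheses. The product of uniform and $g$ marginals has a Lebesgue density, and it is permutation invariant within the alternative coordinates.
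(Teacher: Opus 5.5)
Your architecture (relax to randomized ordered rules, use linearity to mix, purify back via Lyapunov) matches the paper's. However, the step that supplies the linearity is wrong as written.

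\textbf{The gap.} You assert that under $\mathbb Q_l$ a false rejection occurs exactly when $R(P)\ge l+1$, because ``the alternatives may be placed in the first $l$ coordinates.'' Exchangeability lets you put the alternatives in the first $l$ \emph{coordinates} of the unordered vector, not in the first $l$ \emph{ranks}. The rule rejects the $R$ smallest $p$-values, and for $l\in\{1,2\}$ a null $p$-value lies among the $R\le l$ smallest with positive probability. For example, with one alternative and two nulls, $\{R=1,\ u_{(1)}\text{ is a null}\}$ is a false rejection that $\int\psi_2\,d\mathbb Q_1=\mathbb Q_1(R\ge2)$ does not count. Consequences:
\begin{itemize}
\item Your constraint functionals are only lower bounds on $\mathrm{FWER}_1,\mathrm{FWER}_2$; they are exact only for $l=0$.
\item Your feasible set therefore contains, in general strictly, that of \eqref{eq:pi3_value_def}.
\item Concavity of the value of your program then says nothing about $\pi_3^{\mathbb Q}$.
\end{itemize}

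\textbf{The repair.} Let $c_{l,r}(u)$ be the conditional probability, given the ordered vector $u$, that at least one of the $r$ smallest $p$-values is null under the $l$-alternative configuration. Then $c_{l,0}=0$, $c_{l,r}=1$ for $r\ge l+1$, and $c_{l,r}$ is non-decreasing in $r$. Summation by parts gives
\[
\mathrm{FWER}_l=\int_Q\sum_{r=1}^{3}\psi_r(u)\,\bigl(c_{l,r}(u)-c_{l,r-1}(u)\bigr)\,q_l(u)\,du ,
\]
which is linear in $\psi$ with non-negative kernels. In the product model this is exactly \eqref{eq:objconst}: e.g.\ $q_1=2(g(u_1)+g(u_2)+g(u_3))$ and $c_{1,1}=(g(u_2)+g(u_3))/(g(u_1)+g(u_2)+g(u_3))$ give $2\int_Q[\psi_1(g(u_2)+g(u_3))+\psi_2\,g(u_1)]\,du$. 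These weights come from the joint law of the order statistics and the null labels, not from $\mathbb Q_l$ as a law on $Q$ alone. With them your mixing step is unchanged. The purification must then match integrals against the finitely many non-atomic measures $(c_{l,r}-c_{l,r-1})q_l\,du$ and $q_3\,du$, not against $\mathbb Q_0,\dots,\mathbb Q_3$.

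\textbf{Comparison with the paper.} The paper mixes two near-optimal deterministic rules through an independent uniform. It then derandomizes with Lyapunov--Halmos on $Q\times\{1,2,3\}$, which by itself does not force the matched rule to satisfy $D_3\le D_2\le D_1$. That ordering is what makes the linear representation equal the FWER. Your increment-based nesting handles this point, and it is cleanest stated as Dvoretzky--Wald--Wolfowitz with action set $\{0,1,2,3\}$:
\begin{itemize}
\item Partition $Q$ into $E_0,\dots,E_3$ with $\int_{E_s}d\nu=\int(\psi_s-\psi_{s+1})\,d\nu$ for every measure $\nu$ above, where $\psi_0:=1$ and $\psi_4:=0$.
\item Set $A_r=\bigcup_{s\ge r}E_s$.
\end{itemize}
If you do it ``successively,'' the first split must already carry the later-layer densities (e.g.\ $(\psi_2/\psi_1)\,d\nu$) in its vector measure. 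Otherwise the later layers need not be attainable inside the chosen set.

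\textbf{The duality route.} It does not bypass purification. Since $\sup_{\vec D}L(\vec D,\vec\mu)$ is the same over deterministic and randomized rules, strong duality for the deterministic program already contains the purification. Moreover, the cited result is proved only in the canonical model under Assumptions~\ref{as:assumption3}--\ref{as:assumption5}, not under the lemma's hypotheses. Note also that the paper's $L$ already includes $\alpha\sum_l\mu_l$, so your formula counts that term twice.
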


The proof is provided in Appendix~\ref{app:proof_pi3_concavity}.

We name the concavity hypothesis used in Theorems~\ref{thm:global_opt_sep_allocation}--\ref{thm:global_opt_sep_allocation_sidak} below. Cross-block homogeneity is used only to specialize the heterogeneous results to the uniform split $\alpha/B$; it is deferred to Appendix~\ref{app:deferred_main} (Assumption~\ref{assump:exchangeable_blocks}).

\begin{assumption}[Concavity of the per-block optimal power curves]
\label{assump:concavity}
Each per-block value function $\pi_3^{(b)}:[0,1]\to[0,1]$ defined in \eqref{eq:pi3_per_block_def} is concave on $[0,1]$. By Lemma~\ref{lem:pi3_concavity}, this assumption is satisfied automatically whenever each within-block configuration law is absolutely continuous and exchangeable within the alternative indices, in particular under \eqref{eq:pvalue_model}; we retain it as a named hypothesis only for use in Theorem~\ref{thm:global_opt_sep_allocation} when working outside this regularity class.
\end{assumption}

\begin{definition}[Block-separable procedures]
\label{def:block_separable}
Let $\mathfrak{D}_{\mathrm{sep}}$ denote the class of procedures that (i) choose a non-negative allocation $(\alpha_1,\ldots,\alpha_B)$ with $\sum_{b=1}^B \alpha_b\le \alpha$ and (ii) for each block $b\in[B]$, apply a within-block rule $\vec D^{(b)}$ that depends only on $P^{(b)}$ and is feasible for the $K=3$ strong-FWER problem at level $\alpha_b$ defining $\pi_3^{(b)}(\alpha_b)$ in \eqref{eq:pi3_per_block_def}, i.e., its within-block configuration FWERs are controlled at level $\alpha_b$ under the family $\{\mathbb Q^{(b)}_{\eta}:\eta\in\{0,1\}^3, \sum_i\eta_i\le 2\}$.
\end{definition}

For a fixed allocation, the within-block subproblem decouples: the global average power factors into the sum of per-block average powers, each upper-bounded by its $K=3$ value function.

\begin{theorem}[Within-block optimality]
\label{thm:global_opt_sep_within_block}
Fix any allocation $(\alpha_1,\ldots,\alpha_B)$ with $\sum_{b=1}^B\alpha_b\le\alpha$. For every $D^{\mathrm{sep}}\in\mathfrak D_{\mathrm{sep}}$ that uses this allocation,
\begin{equation}
\label{eq:thm_within_block_bound}
\Pi_K\!\bigl(D^{\mathrm{sep}}\bigr)
\;=\;\frac{1}{B}\sum_{b=1}^B \Pi_3^{\mathbb Q^{(b)}_{\!\vec h_K}}\!\bigl(\vec D^{(b)}\bigr)
\;\le\;\frac{1}{B}\sum_{b=1}^B \pi_3^{(b)}(\alpha_b),
\end{equation}
and the upper bound is attained when each within-block rule $\vec D^{(b)}$ attains $\pi_3^{(b)}(\alpha_b)$ in \eqref{eq:pi3_value_def}; in particular, when $\vec D^{(b)}$ is the $K=3$ strong-FWER optimizer of~\citet{dubey25} (Algorithm~\ref{alg:compute_optimal_mu_K3_main}) at level $\alpha_b$, applied to the per-block law $\mathbb Q^{(b)}_{\!\vec h_K}$. Under cross-block homogeneity ($\pi_3^{(b)}\equiv\pi_3$), the bound reduces to $\frac1B\sum_b\pi_3(\alpha_b)$.
\end{theorem}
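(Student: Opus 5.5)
The proof is a direct decomposition followed by a per-block comparison with the value function; no allocation analysis is needed at this stage. We carry it out in three steps.

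\emph{Step 1: the equality.} Since the blocks partition $[K]$ with $|\mathcal B_b|=3$ and $K=3B$, write the global rejection count as $\sum_{k=1}^K D_k=\sum_{b=1}^B\sum_{i=1}^3 D^{(b)}_i(P^{(b)})$. By part (ii) of Definition~\ref{def:block_separable}, the $b$-th inner sum is a function of $P^{(b)}$ alone. Take expectations under $\vec h_K$ and use linearity. Then $\mathbb E_{\vec h_K}\bigl[\sum_i D^{(b)}_i(P^{(b)})\bigr]$ depends only on the marginal law $\mathbb Q^{(b)}_{\vec h_K}$ of $P^{(b)}$ in \eqref{eq:per_block_alt_law}. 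Dividing by $K=3B$ gives
\[
\Pi_K(D^{\mathrm{sep}})=\frac1B\sum_b \tfrac13\,\mathbb E_{\mathbb Q^{(b)}_{\vec h_K}}\Bigl[\textstyle\sum_i D^{(b)}_i\Bigr]=\frac1B\sum_b\Pi_3^{\mathbb Q^{(b)}_{\vec h_K}}(\vec D^{(b)}).
\]
No cross-block independence is used, only linearity of expectation and block-measurability.

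\emph{Step 2: the inequality.} By Definition~\ref{def:block_separable}, each $\vec D^{(b)}$ is a measurable map $Q\to\{0,1\}^3$ whose within-block configuration FWERs are at most $\alpha_b$ under the laws $\mathbb Q^{(b)}_\eta$ with $\sum_i\eta_i\le 2$. This makes it a member of the feasible set in \eqref{eq:pi3_value_def}, with $\mathbb Q=\mathbb Q^{(b)}_{\vec h_K}$ and level $\alpha_b$. Hence $\Pi_3^{\mathbb Q^{(b)}_{\vec h_K}}(\vec D^{(b)})\le\pi_3^{(b)}(\alpha_b)$ by definition of the supremum, and summing over $b$ gives \eqref{eq:thm_within_block_bound}.

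The one point to check carefully is that the feasibility notion in Definition~\ref{def:block_separable} matches the $\mathrm{FWER}_l^{\mathbb Q}$ constraints, $l=0,1,2$, in \eqref{eq:pi3_value_def}. The definition indexes constraints by all $\eta$ with at most two alternatives, while \eqref{eq:pi3_value_def} indexes them by the canonical configurations $l$. Controlling every $\eta$ implies controlling the canonical ones, so the feasible set of Definition~\ref{def:block_separable} is contained in that of \eqref{eq:pi3_value_def}, and the inequality goes the right way. I expect this bookkeeping to be the only real subtlety.

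\emph{Step 3: attainment and the homogeneous case.} If each $\vec D^{(b)}$ attains $\pi_3^{(b)}(\alpha_b)$, equality holds term by term in Step 2. For the \citet{dubey25} optimizer (Algorithm~\ref{alg:compute_optimal_mu_K3_main}) at level $\alpha_b$ applied to the per-block law, we invoke the stated fact that, under the structural conditions of \citet{dubey25}, the supremum in \eqref{eq:pi3_value_def} is attained by that optimizer. We must also confirm that it is feasible in the sense of Definition~\ref{def:block_separable}. Under $\vec h$-exchangeability within alternatives, each $\eta$ with $l$ alternatives has the same law as the canonical configuration with $l$ alternatives, modulo permutation. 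The optimizer is a symmetric rule acting on order statistics, so its FWER under $\eta$ equals $\mathrm{FWER}_l$. Finally, under cross-block homogeneity we have $\pi_3^{(b)}=\pi_3$, and substituting this gives the reduced bound $\frac1B\sum_b\pi_3(\alpha_b)$.
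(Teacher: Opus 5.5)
Your proposal is correct and follows the paper's proof essentially step for step: you decompose $\Pi_K$ over blocks by linearity and block-measurability to obtain the per-block terms $\Pi_3^{\mathbb Q^{(b)}_{\vec h_K}}(\vec D^{(b)})$, then bound each by $\pi_3^{(b)}(\alpha_b)$ because Definition~\ref{def:block_separable} makes $\vec D^{(b)}$ feasible for the level-$\alpha_b$ value problem, with equality for a maximizer. The extra bookkeeping you add is left implicit in the paper but is sound: controlling every $\eta$ with $\sum_i\eta_i\le 2$ implies the canonical $l=0,1,2$ constraints, and the symmetric order-statistic optimizer is feasible for every $\eta$ under within-block exchangeability.
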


The proof is provided in Appendix~\ref{app:proof_global_opt_sep_within_block}.

With the within-block ceilings in hand, the outer problem is to allocate $\alpha$ across $B$ concave one-dimensional gains. Concavity forces every active block to share a common marginal-power slope at the optimum, the multi-block analogue of the equimarginal principle.

\begin{theorem}[Allocation optimality under concavity: equalized-marginal characterization]
\label{thm:global_opt_sep_allocation}
Assume Assumption~\ref{assump:concavity} (in particular, this holds under \eqref{eq:pvalue_model} by Lemma~\ref{lem:pi3_concavity}). Let $\partial_+\pi_3^{(b)}$ denote the (non-negative, non-increasing) right derivative of the concave, non-decreasing $\pi_3^{(b)}$ on $[0,\alpha]$. Among all allocations $(\alpha_1,\ldots,\alpha_B)\in[0,\alpha]^B$ with $\sum_{b=1}^B\alpha_b\le\alpha$, the maximum
\begin{equation}
\label{eq:hetero_alloc_max}
M(\alpha;\,\pi_3^{(1)},\ldots,\pi_3^{(B)})
\;:=\;
\max_{\substack{\alpha_b\ge 0,\\ \sum_b\alpha_b\le\alpha}}\;\frac{1}{B}\sum_{b=1}^B\pi_3^{(b)}(\alpha_b)
\end{equation}
is attained, and any maximizer $\vec\alpha^{\,*}=(\alpha_1^{*},\ldots,\alpha_B^{*})$ satisfies the equalized-marginal (KKT) condition: there exists $\mu^{*}\ge 0$ such that for all $b\in[B]$,
\begin{equation}
\label{eq:KKT_condition}
\partial_+\pi_3^{(b)}(\alpha_b^{*})\;=\;\mu^{*}\;\text{ if }\alpha_b^{*}\in(0,\alpha),\qquad
\partial_+\pi_3^{(b)}(0^{+})\;\le\;\mu^{*}\;\text{ if }\alpha_b^{*}=0,
\end{equation}
together with $\sum_{b=1}^{B}\alpha_b^{*}=\alpha$ (a binding-budget maximizer exists by monotonicity of each $\pi_3^{(b)}$; when every active marginal is zero, $\mu^{*}=0$ and any extension to a binding-budget allocation is also a maximizer). Under cross-block homogeneity ($\pi_3^{(b)}\equiv\pi_3$), \eqref{eq:KKT_condition} forces $\alpha_b^{*}=\alpha/B$ for every $b$, and \eqref{eq:hetero_alloc_max} reduces to
\begin{equation}
\label{eq:uniform_is_optimal}
M(\alpha;\,\pi_3,\ldots,\pi_3)\;=\;\pi_3(\alpha/B).
\end{equation}
\end{theorem}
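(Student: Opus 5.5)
Existence comes first. Each $\pi_3^{(b)}$ is concave on $[0,1]$ (Assumption~\ref{assump:concavity}), hence continuous on the open interval $(0,1)$. Monotonicity also needs care at the endpoint $0$: a concave non-decreasing function can only jump up at $0$, i.e.\ $\pi_3^{(b)}(0)\le\pi_3^{(b)}(0^+)$. I would handle this by noting that the objective is upper semicontinuous once we replace $\pi_3^{(b)}(0)$ by its limit where needed. Alternatively, I would argue directly that $\pi_3^{(b)}(0)$ is the value at level zero, which equals the right limit. For concave functions the latter follows from lower semicontinuity of the value function in $\alpha$, so I expect to work with the upper-semicontinuous regularization. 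Either way, the feasible set $\{\alpha_b\ge0,\sum_b\alpha_b\le\alpha\}$ is compact and the sup is attained by Weierstrass. For a binding budget, take any maximizer and push the slack $\alpha-\sum_b\alpha_b^*$ onto one block; monotonicity shows the value does not decrease.

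For the KKT condition I would avoid differentiability and use a pairwise exchange argument. Fix a binding maximizer and consider two blocks $b,c$ with $\alpha_b^*>0$. Move $t>0$ of budget from $b$ to $c$. Optimality gives
\[
\pi_3^{(c)}(\alpha_c^*+t)-\pi_3^{(c)}(\alpha_c^*)\le \pi_3^{(b)}(\alpha_b^*)-\pi_3^{(b)}(\alpha_b^*-t).
\]
Dividing by $t$ and letting $t\downarrow0$ yields $\partial_+\pi_3^{(c)}(\alpha_c^*)\le\partial_-\pi_3^{(b)}(\alpha_b^*)$. Set $\mu^*:=\max_c\partial_+\pi_3^{(c)}(\alpha_c^*)$. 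Then every active block has $\partial_-\pi_3^{(b)}(\alpha_b^*)\ge\mu^*\ge\partial_+\pi_3^{(b)}(\alpha_b^*)$, and every inactive block has $\partial_+\pi_3^{(b)}(0^+)\le\mu^*$.

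The main obstacle is the statement's use of equality with the right derivative for interior blocks. Concavity only gives the superdifferential inclusion $\mu^*\in[\partial_+,\partial_-]$ at $\alpha_b^*$. I would first try to show that $\pi_3^{(b)}$ is differentiable on $(0,\alpha)$, via the Lagrangian/LP structure from \citet{dubey25}. If that fails, I would select a maximizer for which equality holds. I would do this by sliding each active $\alpha_b^*$ within the flat piece where its right derivative stays constant; along such a slide the objective is unchanged, so optimality is preserved. In the worst case I would state the condition in the superdifferential form, which is the rigorous content. The degenerate case $\mu^*=0$ needs no further work: all active marginals vanish, so extending to a binding budget keeps the value.

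Homogeneity: with $\pi_3^{(b)}\equiv\pi_3$, Jensen's inequality gives $\frac1B\sum_b\pi_3(\alpha_b)\le\pi_3(\frac1B\sum_b\alpha_b)\le\pi_3(\alpha/B)$ by concavity and monotonicity. The uniform split attains this bound, which gives \eqref{eq:uniform_is_optimal}. The claim that KKT \emph{forces} $\alpha/B$ needs strict concavity. Otherwise I would state that the uniform split is a maximizer and that any maximizer differs from it only along affine pieces of $\pi_3$.
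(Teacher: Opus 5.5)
Where the statement is true, your argument is correct, and it takes a more elementary route than the paper. The paper gets existence from ``continuity plus compactness''. It derives the marginal condition from Lagrange multipliers for concave programs (Rockafellar's Theorem~28.3), with $\mu^*$ on the budget and $\lambda_b^*$ on $\alpha_b\ge 0$. It handles the homogeneous case by noting that the level set $\{\partial_+\pi_3=\mu^*\}$ is an interval containing the average allocation. Your pairwise budget exchange has three advantages:
\begin{itemize}
\item it needs no constraint qualification or subdifferential calculus;
\item it applies to any maximizer, since moving $t$ from $b$ to $c$ preserves feasibility whether or not the budget binds;
\item it automatically rules out an inactive block with infinite slope at $0$.
\end{itemize}
Your Jensen step gives $M(\alpha;\pi_3,\ldots,\pi_3)=\pi_3(\alpha/B)$ directly, without KKT and without a separate existence argument.

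Both defects you flag are genuine, and the paper's proof has exactly these two holes. First, its stationarity equation uses $\partial_+\pi_3^{(b)}(\alpha_b^*)$, whereas only some element of the superdifferential $[\partial_+\pi_3^{(b)}(\alpha_b^*),\partial_-\pi_3^{(b)}(\alpha_b^*)]$ is justified. Second, the ``symmetric solution lies in this set'' step shows only that the uniform split is \emph{a} maximizer.

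For general concave non-decreasing functions the equality form is false. Take $B=2$, $\pi^{(1)}(a)=\min(2a,\alpha)$ and $\pi^{(2)}(a)=a$. The unique maximizer is $(\alpha/2,\alpha/2)$, both coordinates lie in $(0,\alpha)$, yet the right derivatives there are $0$ and $1$. Your sliding repair therefore cannot work in general, because no maximizer satisfies the equality. Drop it, and state the conclusion as follows:
\begin{itemize}
\item $\partial_+\pi_3^{(b)}(\alpha_b^*)\le\mu^*\le\partial_-\pi_3^{(b)}(\alpha_b^*)$ on active blocks;
\item $\partial_+\pi_3^{(b)}(0^{+})\le\mu^*$ on inactive blocks.
\end{itemize}
This is exactly what your exchange argument proves. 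Likewise, $\pi_3(a)=a$ makes every binding allocation optimal, so uniqueness of $\alpha/B$ needs strict concavity, as you say.

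One step of yours does need repair: existence. Weierstrass needs upper semicontinuity of the objective on the faces $\{\alpha_b=0\}$, i.e.\ $\limsup_{a\downarrow 0}\pi_3^{(b)}(a)\le\pi_3^{(b)}(0)$. Concavity gives only the reverse inequality, so invoking ``lower semicontinuity'' points the wrong way. Replacing $\pi_3^{(b)}(0)$ by its right limit changes the problem. For general concave functions the maximum can in fact fail to exist: $\pi^{(1)}(a)=\mathbf 1\{a>0\}$ and $\pi^{(2)}(a)=a$ give supremum $(1+\alpha)/2$, which is never attained.

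What rescues existence is a property of value functions, not of concavity. In the canonical model the complete-null block law has density $6$ on $Q$. So $\mathrm{FWER}_0\le a$ forces the rejection region $R$ to have Lebesgue measure at most $a/6$. The power is then at most $\mathbb Q^{(b)}_{\vec h_K}(R)\le\|g\|_\infty^3\,a$; more generally it tends to $0$ by absolute continuity. Hence $\pi_3^{(b)}(0)=0=\pi_3^{(b)}(0^{+})$, the objective is continuous on the polytope, and your Weierstrass step goes through. The paper's ``continuity plus compactness'' silently assumes this same fact.
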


The proof is provided in Appendix~\ref{app:proof_global_opt_sep_allocation}.

Under cross-block independence, the Bonferroni budget relaxes to the larger \v{S}id\'ak budget $\prod_b(1-\alpha_b)\ge 1-\alpha$, strictly improving the optimum.

\begin{theorem}[Allocation under cross-block independence: \v{S}id\'ak tightening]
\label{thm:global_opt_sep_allocation_sidak}
Suppose the $p$-value vector $P=(P^{(1)},\ldots,P^{(B)})$ is \emph{cross-block independent}: for every configuration $h\in\{0,1\}^K$ the block vectors $P^{(1)},\ldots,P^{(B)}$ are mutually independent under $\mathcal L_h(P)$. Define the enlarged block-separable class
\begin{equation}
\label{eq:sidak_class}
\mathfrak D_{\mathrm{sep}}^{\mathrm{ind}}
\;:=\;\Bigl\{D^{\mathrm{sep}}:\text{within-block rules as in Definition~\ref{def:block_separable} with }\textstyle\prod_{b=1}^{B}(1-\alpha_b)\ge 1-\alpha\Bigr\},
\end{equation}
so that $\mathfrak D_{\mathrm{sep}}\subseteq\mathfrak D_{\mathrm{sep}}^{\mathrm{ind}}$ (every Bonferroni allocation satisfies the \v{S}id\'ak constraint). Every $D^{\mathrm{sep}}\in\mathfrak D_{\mathrm{sep}}^{\mathrm{ind}}$ is feasible for the global problem \eqref{eq:MHTK}. Under Assumption~\ref{assump:concavity}, the maximum
\begin{equation}
\label{eq:sidak_alloc_max}
M^{\mathrm{ind}}\!\bigl(\alpha;\pi_3^{(1)},\ldots,\pi_3^{(B)}\bigr)
\;:=\;
\max_{\substack{\alpha_b\in[0,1]\\ \prod_{b=1}^{B}(1-\alpha_b)\ge 1-\alpha}}
\;\frac{1}{B}\sum_{b=1}^{B}\pi_3^{(b)}(\alpha_b)
\end{equation}
is attained, and any maximizer $\vec\alpha^{*}=(\alpha_1^{*},\ldots,\alpha_B^{*})$ satisfies the \emph{weighted} equalized-marginal KKT condition: there exists $\mu^{*}\ge 0$ such that for every $b\in[B]$ with $\alpha_b^{*}\in(0,1)$,
\begin{equation}
\label{eq:sidak_KKT}
(1-\alpha_b^{*})\,\partial_+\pi_3^{(b)}(\alpha_b^{*})\;=\;B\mu^{*},
\end{equation}
together with $\prod_{b=1}^{B}(1-\alpha_b^{*})=1-\alpha$ (the \v{S}id\'ak constraint binds). Under cross-block homogeneity ($\pi_3^{(b)}\equiv\pi_3$), the optimum is the uniform \v{S}id\'ak split $\alpha_b^{*}\equiv 1-(1-\alpha)^{1/B}$, and
\begin{equation}
\label{eq:sidak_uniform_is_optimal}
M^{\mathrm{ind}}\!\bigl(\alpha;\pi_3,\ldots,\pi_3\bigr)
\;=\;\pi_3\!\bigl(1-(1-\alpha)^{1/B}\bigr)
\;\ge\;\pi_3(\alpha/B)
\;=\;M\!\bigl(\alpha;\pi_3,\ldots,\pi_3\bigr),
\end{equation}
with strict inequality whenever $B>1$, $\alpha\in(0,1)$, and $\pi_3$ is strictly increasing on $[\alpha/B,\,1-(1-\alpha)^{1/B}]$.
\end{theorem}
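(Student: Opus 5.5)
The proof has four parts: global feasibility via a product bound, existence of a maximizer, the KKT characterization after a change of variables, and the homogeneous comparison with Bonferroni.

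\textbf{Feasibility.} Fix $\ell\in\{0,\dots,K-1\}$ and let $N_\ell\subseteq[B]$ be the blocks whose induced configuration $\eta^{(b)}(\ell)$ contains at least one true null. Blocks outside $N_\ell$ cannot produce false rejections, so
\[
\{V=0\}=\bigcap_{b\in N_\ell}\{V^{(b)}=0\}.
\]
Each within-block rule depends only on $P^{(b)}$ (Definition~\ref{def:block_separable}), and the blocks are independent under $\mathcal L_{h_\ell}$. Hence the events $\{V^{(b)}=0\}$ are independent, and
\[
\mathbb P_{h_\ell}(V=0)=\prod_{b\in N_\ell}\bigl(1-\mathbb Q^{(b)}_{h_\ell}(V^{(b)}>0)\bigr).
\]
By the within-block feasibility in Definition~\ref{def:block_separable}, each factor is at least $1-\alpha_b$. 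Dropping the remaining factors, all of which lie in $[0,1]$, gives
\[
\mathbb P_{h_\ell}(V=0)\ge\prod_{b}(1-\alpha_b)\ge 1-\alpha .
\]
Therefore $\mathrm{FWER}_\ell\le\alpha$. Since every Bonferroni allocation satisfies $\prod_b(1-\alpha_b)\ge 1-\sum_b\alpha_b$, we also get $\mathfrak D_{\mathrm{sep}}\subseteq\mathfrak D^{\mathrm{ind}}_{\mathrm{sep}}$.

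\textbf{Existence.} The feasible set $\{\alpha\in[0,1]^B:\prod_b(1-\alpha_b)\ge 1-\alpha\}$ is compact. Each $\pi_3^{(b)}$ is concave on $[0,1]$, hence continuous on $(0,1)$. At the endpoints, monotonicity together with concavity gives upper semicontinuity, which suffices for the maximum to be attained. If $\pi_3^{(b)}$ is not continuous at $0$, I would replace it by its right limit there; this does not change the supremum, because any budget placed on such a block can instead be spread slightly.

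\textbf{KKT characterization.} I change variables to $s_b=-\log(1-\alpha_b)\in[0,\infty]$. The constraint becomes linear, $\sum_b s_b\le s:=-\log(1-\alpha)$. The objective becomes $\sum_b\psi_b(s_b)$ with $\psi_b(s)=\pi_3^{(b)}(1-e^{-s})$. This reparametrization is the main point to check, because concavity of $\psi_b$ does not follow directly from concavity of $\pi_3^{(b)}$. It does hold here, since $\pi_3^{(b)}$ is concave and non-decreasing and $s\mapsto 1-e^{-s}$ is concave. This places us in exactly the setting of Theorem~\ref{thm:global_opt_sep_allocation}, with concave non-decreasing gains and a linear budget. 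Rather than redo that argument, I would invoke its proof on the $s$-variables. It gives a common value $\lambda$ of $\partial_+\psi_b(s_b^*)$ over interior blocks and lets us take the budget binding, so $\prod_b(1-\alpha_b^*)=1-\alpha$. By the chain rule for right derivatives, $\partial_+\psi_b(s)=e^{-s}\,\partial_+\pi_3^{(b)}(1-e^{-s})=(1-\alpha_b)\,\partial_+\pi_3^{(b)}(\alpha_b)$. Setting $\lambda=B\mu^*$, which absorbs the $1/B$ normalization, yields \eqref{eq:sidak_KKT}.

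\textbf{Homogeneous case.} When $\pi_3^{(b)}\equiv\pi_3$, all $\psi_b$ equal a single concave $\psi$. Jensen's inequality gives $\frac1B\sum_b\psi(s_b)\le\psi\bigl(\frac1B\sum_b s_b\bigr)\le\psi(s/B)$, where the second step uses that $\psi$ is non-decreasing. So the uniform split $s_b=s/B$, i.e.\ $\alpha_b=1-(1-\alpha)^{1/B}$, is optimal. The comparison with Bonferroni follows from Bernoulli's inequality, $(1-\alpha/B)^B>1-\alpha$ for $B>1$ and $\alpha\in(0,1)$, which gives $1-(1-\alpha)^{1/B}>\alpha/B$. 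Monotonicity of $\pi_3$ then gives $\pi_3\bigl(1-(1-\alpha)^{1/B}\bigr)\ge\pi_3(\alpha/B)$, and the inequality is strict if $\pi_3$ is strictly increasing on that interval. Finally, \eqref{eq:uniform_is_optimal} identifies $\pi_3(\alpha/B)$ with $M(\alpha;\pi_3,\ldots,\pi_3)$.
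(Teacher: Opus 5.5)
Your feasibility step is the paper's argument: independence turns the per-block false-rejection events into a product, and the \v{S}id\'ak budget finishes it. The optimization part follows a different route.

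The paper uses $\beta_b=-\log(1-\alpha_b)$ only to get existence. It then writes the Lagrangian directly in the $\alpha$-coordinates for the convex constraint $\sum_b-\log(1-\alpha_b)\le-\log(1-\alpha)$. It gets the homogeneous case from the KKT equation itself: $\alpha\mapsto(1-\alpha)\partial_+\pi_3(\alpha)$ is strictly decreasing where $\partial_+\pi_3>0$, so all active blocks share one value, and the binding constraint pins it at $1-(1-\alpha)^{1/B}$.

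You stay in the $s$-coordinates, where the problem is exactly of Bonferroni type (concave non-decreasing gains, linear budget). So \eqref{eq:sidak_KKT} is inherited from Theorem~\ref{thm:global_opt_sep_allocation} via the chain rule $\partial_+\psi_b(s)=e^{-s}\,\partial_+\pi_3^{(b)}(1-e^{-s})$, and the homogeneous value follows from Jensen on $\psi$.
\begin{itemize}
\item \emph{What your route buys.} It is more economical, and it makes \eqref{eq:sidak_uniform_is_optimal} independent of any first-order condition. That makes it insensitive to kinks of $\pi_3$, where the equality form of the KKT condition is delicate in both arguments: at a kink one only gets $(1-\alpha_b^*)\partial_+\pi_3^{(b)}(\alpha_b^*)\le B\mu^*\le(1-\alpha_b^*)\partial_-\pi_3^{(b)}(\alpha_b^*)$.
\item \emph{What the paper's route buys.} It additionally asserts that every maximizer is uniform on its active blocks. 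You can recover this by noting that $e^{-s}\partial_+\pi_3(1-e^{-s})$ is strictly decreasing wherever $\partial_+\pi_3>0$. So $\psi$ is strictly concave there and Jensen is strict.
\end{itemize}

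Two small repairs are needed.

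First, monotonicity plus concavity forces continuity at the right endpoint but not at $0$. A concave non-decreasing function can jump up at $0$ (value $0$ at $0$, value $1$ on $(0,1]$), which destroys upper semicontinuity there, and then the maximum can genuinely fail to be attained. Replacing $\pi_3^{(b)}(0)$ by its right limit preserves the supremum but does not give a maximizer of the original objective. Continuity at $0$ has to come from the model. For example, Lemma~\ref{lemma:optimization_problem_k_3} gives $\pi_3^{(b)}(\alpha)\le\|g\|_\infty^3\,\alpha$ once you bound $D_2,D_3\le D_1$ and $g\le\|g\|_\infty$. The paper simply asserts continuity here.

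Second, invoking Theorem~\ref{thm:global_opt_sep_allocation} only gives \emph{a} binding maximizer, whereas the statement says every maximizer binds. That stronger claim needs strict monotonicity, e.g.\ every $\pi_3^{(b)}$ strictly increasing, so that any slack can be spent for a strict gain. It is false if all $\pi_3^{(b)}$ are flat near the optimum. The paper's ``$\mu^*>0$'' justification does not settle this either, so your existential version is the defensible one.
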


The proof is provided in Appendix~\ref{app:proof_global_opt_sep_allocation_sidak}.

The equalized-marginal characterization reduces the $B$-dimensional constrained search to a univariate bisection on the shared Lagrange multiplier $\mu^{*}$, with right-inverses $g_b(\mu):=\sup\{\alpha'\in[0,\alpha]:\partial_+\pi_3^{(b)}(\alpha')\ge\mu\}$ giving the per-block allocations $\alpha_b^{*}=g_b(\mu^{*})$. Bisection on $\mu\in[0,\overline\mu]$ with $\overline\mu:=\max_b\partial_+\pi_3^{(b)}(0^{+})<\infty$ (finiteness automatic when $\|g\|_\infty<\infty$, cf.~Prop.~\ref{prop:canonical_model_verification}(iv)) converges to precision $\varepsilon$ in $\lceil\log_2(\overline\mu/\varepsilon)\rceil$ outer iterations with $O(B)$ per-iteration cost, total complexity $O(B\log(1/\varepsilon))$ (Proposition~\ref{prop:bisection_allocation} in Appendix~\ref{app:bisection_allocation_statement}).

Combining the within-block ceiling of Theorem~\ref{thm:global_opt_sep_within_block} with the equalized-marginal allocator of Theorem~\ref{thm:global_opt_sep_allocation} characterizes the global $\mathfrak D_{\mathrm{sep}}$-optimum and the procedure that achieves it.

\begin{corollary}[Global optimality within $\mathfrak D_{\mathrm{sep}}$]
\label{cor:blockwise_sep_optimality}
Under Assumption~\ref{assump:concavity},
\begin{equation}
\label{eq:hetero_sup_value}
\sup_{D^{\mathrm{sep}}\in\mathfrak D_{\mathrm{sep}}}\Pi_K\!\bigl(D^{\mathrm{sep}}\bigr)
\;=\;
\frac{1}{B}\sum_{b=1}^{B}\pi_3^{(b)}(\alpha_b^{*})
\;=:\;
\Pi_K^{*},
\end{equation}
where $\vec\alpha^{*}$ is the equalized-marginal allocation of Theorem~\ref{thm:global_opt_sep_allocation}, computable by the bisection of Proposition~\ref{prop:bisection_allocation}. Provided each per-block law $\mathbb Q^{(b)}_{\!\vec h_K}$ satisfies the dubey25 regularity conditions (Assumptions~\ref{as:assumption3}--\ref{as:assumption5}; automatic in the canonical regime~\eqref{eq:pvalue_model}, cf.~Prop.~\ref{prop:canonical_model_verification}), the supremum is attained by the procedure that uses allocation $\vec\alpha^{*}$ together with the $K=3$ strong-FWER optimizer of~\citet{dubey25} applied at level $\alpha_b^{*}$ to $\mathbb Q^{(b)}_{\!\vec h_K}$ in each block $b\in[B]$.

Under cross-block homogeneity ($\pi_3^{(b)}\equiv\pi_3$, formalized as Assumption~\ref{assump:exchangeable_blocks} in Appendix~\ref{app:deferred_main}), $\Pi_K^{*}=\pi_3(\alpha/B)$ and the optimal allocation reduces to the uniform split $\alpha_{\mathrm{blk}}^{(b)}=\alpha/B$, recovering Algorithm~\ref{alg:blockwise_k3} as currently stated. Feasibility for \eqref{eq:MHTK} follows in both cases from Theorem~\ref{thm:blockwise_strong_fwer_clean} via Remark~\ref{cor:k3_optimizer_instantiation}.
\end{corollary}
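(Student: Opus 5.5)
The corollary combines three earlier results: Theorem~\ref{thm:global_opt_sep_within_block} supplies an upper bound, Theorem~\ref{thm:global_opt_sep_allocation} optimizes that bound over allocations, and an attainment argument gives the matching lower bound.

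\emph{Upper bound.} Take any $D^{\mathrm{sep}}\in\mathfrak D_{\mathrm{sep}}$. By Definition~\ref{def:block_separable} it uses some allocation $(\alpha_1,\ldots,\alpha_B)$ with $\alpha_b\ge 0$ and $\sum_b\alpha_b\le\alpha$, so each $\alpha_b\in[0,\alpha]$. Theorem~\ref{thm:global_opt_sep_within_block} gives
\[
\Pi_K(D^{\mathrm{sep}})\le \frac1B\sum_b\pi_3^{(b)}(\alpha_b).
\]
The right-hand side is at most the maximum $M(\alpha;\pi_3^{(1)},\ldots,\pi_3^{(B)})$ in \eqref{eq:hetero_alloc_max}. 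Under Assumption~\ref{assump:concavity}, Theorem~\ref{thm:global_opt_sep_allocation} says this maximum is attained at an equalized-marginal allocation $\vec\alpha^{*}$, so it equals $\Pi_K^{*}$. Taking the supremum over $D^{\mathrm{sep}}$ gives $\sup\le\Pi_K^{*}$.

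\emph{Lower bound.} I will show that $\Pi_K^{*}$ is approached, and under the stated regularity attained, within $\mathfrak D_{\mathrm{sep}}$. In general, fix $\varepsilon>0$. By the definition \eqref{eq:pi3_value_def} of $\pi_3^{(b)}(\alpha_b^{*})$ as a supremum, each block $b$ has a measurable rule $\vec D^{(b)}$, feasible at level $\alpha_b^{*}$, with $\Pi_3\ge\pi_3^{(b)}(\alpha_b^{*})-\varepsilon$. Using allocation $\vec\alpha^{*}$ with these rules gives a member of $\mathfrak D_{\mathrm{sep}}$. By the equality part of \eqref{eq:thm_within_block_bound}, its power is at least $\Pi_K^{*}-\varepsilon$, which proves \eqref{eq:hetero_sup_value}.

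Attainment under Assumptions~\ref{as:assumption3}--\ref{as:assumption5} follows from the attainment clause of Theorem~\ref{thm:global_opt_sep_within_block}: the \citet{dubey25} optimizer at level $\alpha_b^{*}$ applied to $\mathbb Q^{(b)}_{\vec h_K}$ attains $\pi_3^{(b)}(\alpha_b^{*})$. Proposition~\ref{prop:canonical_model_verification} verifies these assumptions in regime \eqref{eq:pvalue_model}. Computability of $\vec\alpha^{*}$ is exactly Proposition~\ref{prop:bisection_allocation}.

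\emph{Homogeneous case.} If $\pi_3^{(b)}\equiv\pi_3$, \eqref{eq:uniform_is_optimal} gives $\vec\alpha^{*}=(\alpha/B,\ldots,\alpha/B)$. Hence $\Pi_K^{*}=\pi_3(\alpha/B)$, and the optimal procedure is Algorithm~\ref{alg:blockwise_k3} with its uniform split.

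\emph{Feasibility for \eqref{eq:MHTK}.} Since $\sum_b\alpha_b^{*}=\alpha$, Theorem~\ref{thm:blockwise_strong_fwer_clean} applies once Assumptions~\ref{as:block_measurability}--\ref{as:local_validity_marginal} are checked. Block-measurability holds by construction. Local validity against the block marginal is what Remark~\ref{cor:k3_optimizer_instantiation} supplies for the \citet{dubey25} optimizer.

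The main obstacle is the mismatch between two notions of within-block feasibility. Definition~\ref{def:block_separable} requires FWER control under the family $\{\mathbb Q^{(b)}_\eta\}$, whereas Assumption~\ref{as:local_validity_marginal} requires control under the true marginals $\mathbb Q^{(b)}_{h_\ell}$. I would argue that every $\eta^{(b)}(\ell)$ is one of the $\eta$ with $\sum_i\eta_i\le 2$, and that $\mathbb Q^{(b)}_{h_\ell}=\mathbb Q^{(b)}_{\eta^{(b)}(\ell)}$. Here I rely on the remark's scope, namely that the block marginal depends on $h_\ell$ only through the within-block configuration. If that identification fails outside the canonical model, I would restrict the feasibility claim to the regime covered by the remark.
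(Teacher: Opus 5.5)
Your proof is correct and follows the paper's route. You bound above via Theorem~\ref{thm:global_opt_sep_within_block} and maximize over allocations via Theorem~\ref{thm:global_opt_sep_allocation}, then obtain the matching lower bound, the homogeneous case, and feasibility from the cited results. The one difference is that you close the lower bound with $\varepsilon$-optimal within-block rules instead of attainment, which is slightly better: it proves \eqref{eq:hetero_sup_value} under Assumption~\ref{assump:concavity} alone, whereas the paper's ``the procedure described attains the bound'' tacitly relies on the \citet{dubey25} regularity conditions.
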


The proof is provided in Appendix~\ref{app:proof_blockwise_sep_optimality}.

\begin{remark}[Extensions and structural verification in Appendix~\ref{app:deferred_main}]
\label{rem:extensions_deferred}
Under cross-block independence, a \v{S}id\'ak tightening strictly improves Cor.~\ref{cor:blockwise_sep_optimality} (Appendix~\ref{app:deferred_main}, Cor.~\ref{cor:sidak_blockwise_optimality}); the attainable value is sandwiched between the global \v{S}id\'ak floor and the per-marginal Neyman--Pearson ceiling (Prop.~\ref{prop:containment}); and the canonical truncated-normal model discharges every structural hypothesis used above (Prop.~\ref{prop:canonical_model_verification}).
\end{remark}

\subsubsection*{Pipeline schematic}
\label{subsec:boost_illustration}

Figure~\ref{fig:boost_pipeline} summarizes the structure of \textsc{BOOST}: the $K=3$ atom appears as both value-function oracle and decision rule, with the outer KKT bisection on $\mu^{\ast}$ tying them together.

\begin{figure}[!htbp]
\centering
\resizebox{\linewidth}{!}{%
\begin{tikzpicture}[
  every node/.style={font=\small},
  box/.style={rectangle, rounded corners=3pt, draw, thick,
              minimum width=33mm, minimum height=24mm, align=center,
              inner sep=4pt},
  inbox/.style={box, fill=blue!8},
  atombox/.style={box, fill=orange!15},
  outbox/.style={box, fill=red!12},
  finalbox/.style={box, fill=green!12},
  arr/.style={-Latex, thick, shorten >=2pt, shorten <=2pt}
]
\node[inbox] (input) at (0,0)
  {\textbf{Input}\\[2pt] $p$-values $\{p_{b,i}\}$\\ level $\alpha$\\ blocks $b{=}1{:}B$};
\node[atombox, right=6mm of input] (curves)
  {\textbf{Per-block oracle}\\[2pt] $K{=}3$ atom queries\\ $\Rightarrow\,\pi_3^{(b)}(\cdot)$};
\node[outbox, right=6mm of curves] (kkt)
  {\textbf{Outer KKT}\\[2pt] bisect $\mu^\ast$ in\\ $(1{-}\alpha_b)\partial_+\pi_3^{(b)}{=}B\mu^\ast$\\ $\Rightarrow\,\alpha_b^\ast$};
\node[atombox, right=6mm of kkt] (apply)
  {\textbf{Per-block decide}\\[2pt] $K{=}3$ atom at $\alpha_b^\ast$\\ on data $p^{(b)}$};
\node[finalbox, right=6mm of apply] (out)
  {\textbf{Output}\\[2pt] rejections\\ $\mathcal R{=}\bigcup_b\mathcal R_b$};
\draw[arr] (input) -- (curves);
\draw[arr] (curves) -- (kkt);
\draw[arr] (kkt) -- (apply);
\draw[arr] (apply) -- (out);
\end{tikzpicture}%
}
\caption{\textsc{BOOST} pipeline. The $K=3$ optimizer of~\citet{dubey25} (orange boxes) appears twice: as a value-function oracle producing the concave curves $\pi_3^{(b)}(\cdot)$, and as a decision rule applied at the optimized allocation $\alpha_b^\ast$. The outer stage (red) solves the equalized-marginal KKT system of Theorem~\ref{thm:global_opt_sep_allocation} by bisection on the shared Lagrange multiplier $\mu^\ast$ (Proposition~\ref{prop:bisection_allocation}), reducing a $B$-dimensional constrained optimization to a single univariate search. The KKT box shows the \v{S}id\'ak form (Theorem~\ref{thm:global_opt_sep_allocation_sidak}); the Bonferroni form drops the $(1-\alpha_b)$ factor and uses $\sum_b\alpha_b\le\alpha$.}
\label{fig:boost_pipeline}
\end{figure}

\section{Simulations}
\label{sec:simulations}

We evaluate \textsc{BOOST} (Algorithm~\ref{alg:blockwise_k3}) at $\alpha=0.05$, $\alpha_{\mathrm{blk}}=\alpha/B$, against ten strong-FWER baselines spanning five categories (stepwise, dependence-aware stepwise, graphical-gatekeeping, combination closure, resampling); baselines and experimental details are in Appendix~\ref{app:sim_baselines}. Monte-Carlo estimates use $3\times 10^4$ replicates unless noted (MC-SE $\approx 0.002$ for power, $\approx 0.001$ for FWER).

\subsection{Comparison across $p$-value families}
\label{subsec:sim1_truncnorm}

At $K=30$, \textsc{BOOST} dominates every strong-FWER baseline on $\Pi_K$ across truncnorm, tdist, and sparse families at calibrated FWER (Table~\ref{tab:sim_master}, Fig.~\ref{fig:sim_families}), with $1.4$--$1.9\times$ the best stepwise competitor at moderate signals. Closed-Fisher ($\notin\mathfrak D_{\mathrm{sep}}$) leads only on in-class truncnorm at strong signals and collapses elsewhere; the ranking is preserved at $K\in\{6,15,60\}$ (Appendix~\ref{app:scaleK_full}). \emph{Finding}: \textsc{BOOST}'s $\Pi_K$-lead is robust across family, dependence, and block count.

\begin{figure}[htbp]
\centering

\includegraphics[width=\linewidth]{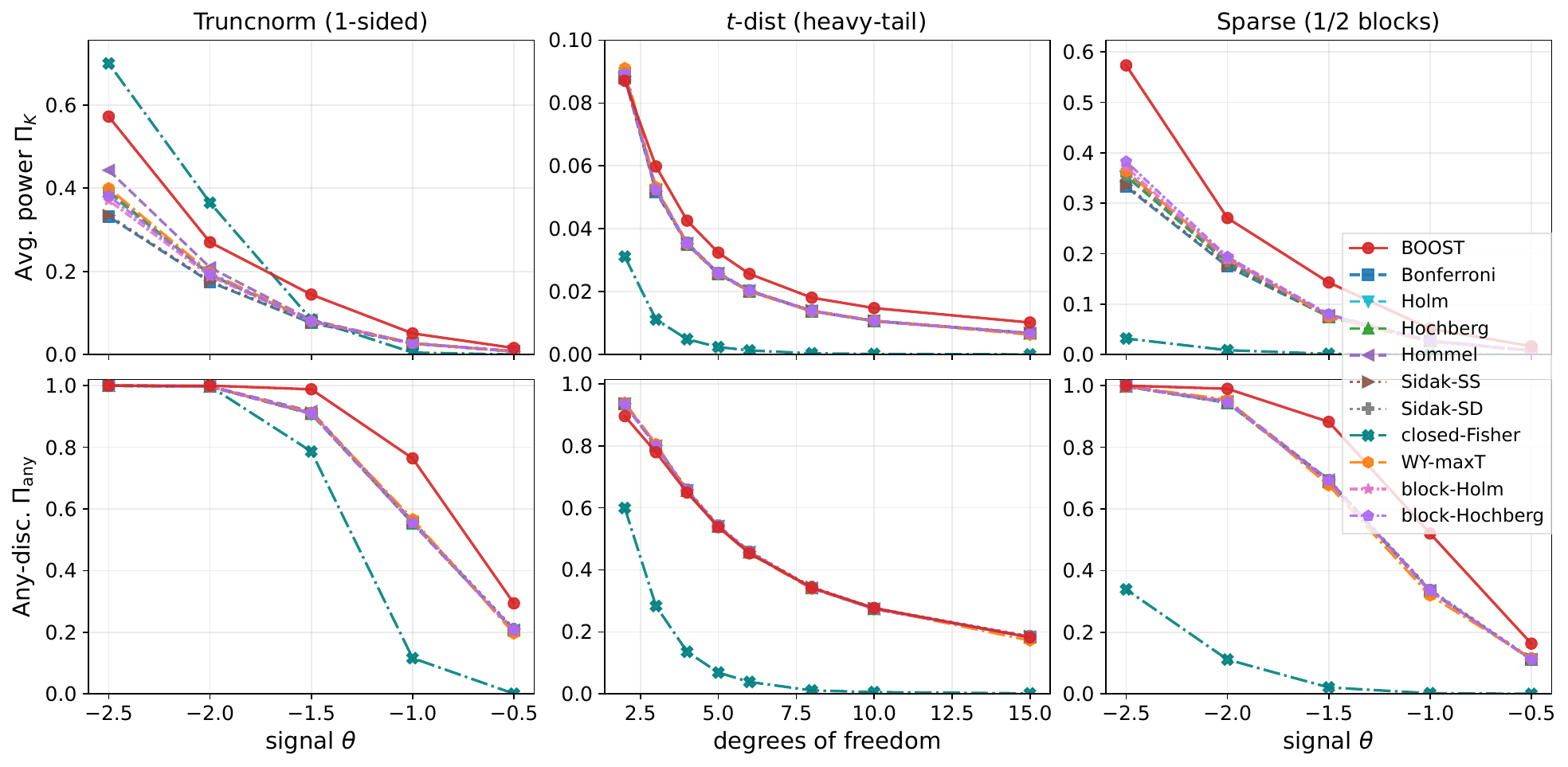}
\caption{\textsc{BOOST} (red) dominates every stepwise, graphical, and closed-testing baseline at moderate signals across three $p$-value families at $K=30$, complementing Table~\ref{tab:sim_master}; closed-Fisher ($\notin\mathfrak D_{\mathrm{sep}}$) leads only on in-class truncnorm at strong signals and collapses elsewhere. $B=10$, $\alpha=0.05$, $\alpha_{\mathrm{blk}}=\alpha/B$, $3\times 10^4$ replicates; top row $\Pi_K$, bottom row $\Pi_{\mathrm{any}}$; truncnorm and sparse cropped to $\theta\in[-2.5,-0.5]$. $K\in\{6,15,60\}$ panels in Appendix~\ref{app:scaleK_full}.}

\label{fig:sim_families}
\end{figure}

\begin{table}[htbp]
\centering
\small
\caption{Master comparison across families at representative signals. $B=10$, $K=30$, $\alpha=0.05$, $2\times 10^4$ replicates (MC-SE $\le 0.0035$). Bold is per-row max (ties share bold). Both $\Pi_K$ (top block) and $\Pi_{\mathrm{any}}$ (bottom block) reported; $K\in\{6,15,60\}$ sweeps in Appendix~\ref{app:scaleK_full}.}
\label{tab:sim_master}
\resizebox{\linewidth}{!}{%
\begin{tabular}{l*{11}{r}}
\toprule
& Bonf & \v{S}id-SS & Holm & Hoch & Hommel & \v{S}id-SD & WY-max$T$ & bk-Holm & bk-Hoch & cl-Fisher & \textbf{\textsc{BOOST}} \\
\midrule
\multicolumn{12}{l}{\emph{Average power $\Pi_K$}} \\
Truncnorm, $\theta=-1.5$ & 0.076 & 0.077 & 0.080 & 0.080 & 0.083 & 0.081 & 0.081 & 0.079 & 0.080 & 0.084 & \textbf{0.144} \\
tdist, $\mathrm{df}=10$  & 0.011 & 0.011 & 0.011 & 0.011 & 0.011 & 0.011 & 0.011 & 0.011 & 0.011 & 0.000 & \textbf{0.015} \\
Sparse, $\theta=-2$      & 0.175 & 0.177 & 0.182 & 0.182 & 0.187 & 0.185 & 0.192 & 0.189 & 0.194 & 0.009 & \textbf{0.271} \\
\midrule
\multicolumn{12}{l}{\emph{Any-discovery power $\Pi_{\mathrm{any}}$}} \\
Truncnorm, $\theta=-1.5$ & 0.909 & 0.911 & 0.909 & 0.909 & 0.917 & 0.911 & 0.912 & 0.909 & 0.910 & 0.786 & \textbf{0.988} \\
tdist, $\mathrm{df}=10$  & 0.274 & 0.276 & 0.274 & 0.274 & 0.275 & 0.276 & \textbf{0.277} & 0.274 & 0.274 & 0.006 & \textbf{0.277} \\
Sparse, $\theta=-2$      & 0.944 & 0.947 & 0.945 & 0.945 & 0.950 & 0.947 & 0.954 & 0.944 & 0.946 & 0.111 & \textbf{0.990} \\
\bottomrule
\end{tabular}%
}
\end{table}

\subsection{Comparison with the general-$K$ ESP solver}
\label{subsec:sim_esp_h2h}

At small $K\in\{9,12\}$ where the general-$K$ ESP solver of~\citet{dubey26esp} is computable on the truncnorm family, ESP attains $1.3$--$1.8\times$ \textsc{BOOST}'s $\Pi_K$ but under slack FWER and at $25$--$260\times$ the wall-clock; it is infeasible at the $K\ge 30$ reference (Table~\ref{tab:sim_esp_h2h}). \textsc{BOOST} recovers $55\%$--$79\%$ of ESP's $\Pi_K$ at FWER-tight calibration. Heterogeneous and misspecified regimes are in Appendix~\ref{app:esp_hetero_check}. \emph{Finding}: at the operating $K$ where ESP is infeasible, \textsc{BOOST} recovers most of the unrestricted ceiling at applications scale.

\begin{table}[htbp]
\centering
\small

\caption{Head-to-head with the general-$K$ ESP solver~\citep{dubey26esp} (truncnorm, $\alpha=0.05$, $\alpha_{\mathrm{blk}}=\alpha/B$, $2\times 10^4$ replicates, single core). ESP attains higher $\Pi_K$ but under slack FWER and at $25$--$260\times$ the wall-clock; infeasible for the $3\times 10^4$-replicate sweeps of Fig.~\ref{fig:sim_families}. \textsc{BOOST} wall-clock is the $K=3$ preprocessing cost (once per block-level $\alpha_{\mathrm{blk}}$; downstream per-block decision is $O(1)$); ESP wall-clock is the iterative primal--dual solve per replicate.}

\label{tab:sim_esp_h2h}
\begin{tabular}{rr rrr rr rr}
\toprule
 & & \multicolumn{3}{c}{$\Pi_K$} & \multicolumn{2}{c}{FWER} & \multicolumn{2}{c}{wall-clock (s)} \\
\cmidrule(lr){3-5} \cmidrule(lr){6-7} \cmidrule(lr){8-9}
$K$ & $\theta$ & Bonf & \textbf{\textsc{BOOST}} & ESP & \textbf{\textsc{BOOST}} & ESP & \textbf{\textsc{BOOST}} & ESP \\
\midrule
9  & $-1.5$ & 0.149 & \textbf{0.236} & 0.401 & 0.046 & 0.017 & \textbf{3.3} & 212 \\
9  & $-2.5$ & 0.485 & \textbf{0.631} & 0.799 & 0.047 & 0.015 & \textbf{0.9} & 37  \\
12 & $-1.5$ & 0.126 & \textbf{0.213} & 0.390 & 0.046 & 0.014 & \textbf{1.8} & 45  \\
12 & $-2.5$ & 0.445 & \textbf{0.606} & 0.801 & 0.049 & 0.009 & \textbf{2.6} & 675 \\
\bottomrule
\end{tabular}
\end{table}

\subsection{Heterogeneous block families}
\label{subsec:sim_hetero}

When alternative densities differ by block, per-block fitting converts heterogeneity into a power source. At $B=6$, \textsc{BOOST-adaptive} edges \textsc{BOOST} on a truncnorm$+$mixnorm mix (both $1.6$--$1.7\times$ stepwise baselines), and recovers $+39\%$ on the tdist half of a truncnorm$+$tdist$_{\mathrm{df}=4}$ mix (Fig.~\ref{fig:sim_hetero}). The lift compounds with mix diversity (three-family mix in Appendix~\ref{app:hetero_blocks}). \emph{Finding}: the more diverse the block mix, the larger the per-block-fit advantage.

\begin{figure}[htbp]
\centering
\includegraphics[width=\linewidth]{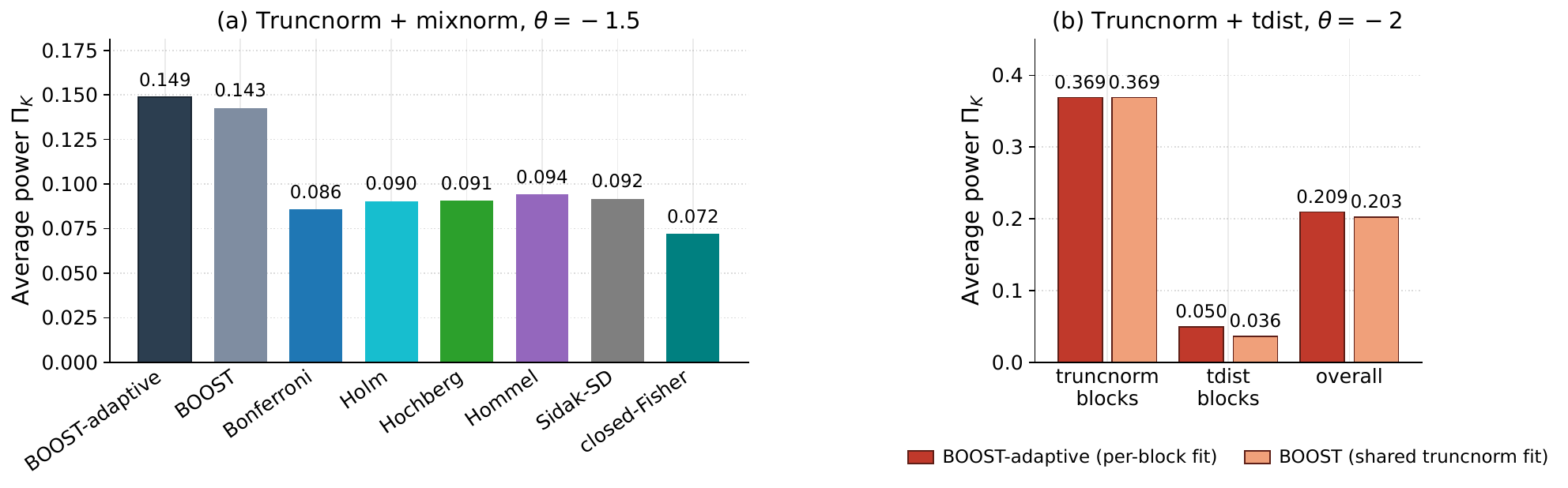}

\caption{Per-block fitting converts block heterogeneity into a power source ($B=6$, $K=18$, $\alpha=0.05$, $\alpha_{\mathrm{blk}}=\alpha/B$; $3\times 10^4$ replicates). (a)~Truncnorm$+$mixnorm at $\theta=-1.5$: \textsc{BOOST-adaptive} edges \textsc{BOOST}; both $1.6$--$1.7\times$ baselines. (b)~Truncnorm$+$tdist($\mathrm{df}=4$) at $\theta=-2$: per-block fitting recovers $+39\%$ on the tdist half; truncnorm-block power is identical by construction. First two groups in each panel are per-family $\Pi_K$ breakdowns; right is overall.}

\label{fig:sim_hetero}
\end{figure}

\subsection{Hierarchical Simes and e-value closed tests}
\label{subsec:sim_modern}

On the natural block-tree, hierarchical Simes~\citep{meinshausen2008hierarchical} coincides with Bonferroni, and the Vovk--Wang-calibrated e-closure of~\citet{hartog2025evalues} trails Bonferroni by up to an order of magnitude at moderate signals, a calibration artifact for adversarial composition rather than the strong-FWER block-fixed regime (Table~\ref{tab:sim_modern_main}, interpretation in Appendix~\ref{app:sim_modern_baselines}). \textsc{BOOST} leads both across the full $\theta$ grid. \emph{Finding}: \textsc{BOOST} sustains a meaningful $\Pi_K$ margin against both recent closed-testing benchmarks at every signal strength.

\begin{table}[htbp]
\centering
\caption{$\Pi_K$ and $\Pi_{\mathrm{any}}$ vs.\ hierarchical Simes~\citep{meinshausen2008hierarchical} and Vovk--Wang-calibrated e-closure~\citep{hartog2025evalues}. Truncnorm, $B=10$, $K=30$, $\alpha=0.05$, $2\times 10^4$ replicates. FWER$_0$ is the complete-null rejection rate; bold is per-column max. Meinshausen coincides with Bonferroni on the block-tree; the Vovk--Wang-calibrated Hartog closure is up to an order of magnitude weaker at moderate signals ($\theta\ge -2$) in this benchmark (interpretation in Appendix~\ref{app:sim_modern_baselines}).}
\label{tab:sim_modern_main}
\resizebox{\textwidth}{!}{%
\begin{tabular}{lc rr rr rr rr rr}
\toprule
 & & \multicolumn{2}{c}{$\theta=-1$} & \multicolumn{2}{c}{$\theta=-1.5$} & \multicolumn{2}{c}{$\theta=-2$} & \multicolumn{2}{c}{$\theta=-2.5$} & \multicolumn{2}{c}{$\theta=-3$} \\
\cmidrule(lr){3-4}\cmidrule(lr){5-6}\cmidrule(lr){7-8}\cmidrule(lr){9-10}\cmidrule(lr){11-12}
Method & FWER$_0$ & $\Pi_K$ & $\Pi_{\mathrm{any}}$ & $\Pi_K$ & $\Pi_{\mathrm{any}}$ & $\Pi_K$ & $\Pi_{\mathrm{any}}$ & $\Pi_K$ & $\Pi_{\mathrm{any}}$ & $\Pi_K$ & $\Pi_{\mathrm{any}}$ \\
\midrule
Bonferroni          & 0.049 & 0.026 & 0.554 & 0.076 & 0.906 & 0.174 & 0.997 & 0.332 & 1.000 & 0.525 & 1.000 \\
Holm                & 0.049 & 0.027 & 0.554 & 0.079 & 0.906 & 0.192 & 0.997 & 0.390 & 1.000 & 0.655 & 1.000 \\
Hommel              & 0.049 & 0.027 & 0.559 & 0.083 & 0.915 & 0.209 & 0.999 & 0.446 & 1.000 & 0.745 & 1.000 \\
Meinshausen (2008)  & 0.049 & 0.026 & 0.554 & 0.076 & 0.906 & 0.174 & 0.997 & 0.332 & 1.000 & 0.525 & 1.000 \\
Hartog e-val (2025) & 0.000 & 0.000 & 0.003 & 0.001 & 0.020 & 0.015 & 0.153 & 0.124 & 0.665 & 0.345 & 0.988 \\
\textbf{\textsc{BOOST}}      & 0.045 & \textbf{0.051} & \textbf{0.764} & \textbf{0.143} & \textbf{0.987} & \textbf{0.270} & \textbf{1.000} & \textbf{0.573} & \textbf{1.000} & \textbf{0.765} & \textbf{1.000} \\
\bottomrule
\end{tabular}}
\end{table}

\subsection{Robustness: stringent $\alpha$, dependence, plug-in}
\label{subsec:sim_robustness}
\label{subsec:sim_alpha}

Three robustness sweeps preserve the lead. Across $\alpha\in\{0.01,0.05,0.10\}$, the $\Pi_K$-gap over Hommel widens from $1.20\times$ at $\alpha=0.10$ to $1.99\times$ at $\alpha=0.01$ (Table~\ref{tab:sim_alpha_sensitivity}). Under cross-block equicorrelation and 1-factor dependence, global FWER stays within $0.02$ of nominal (Appendix~\ref{app:sim_dep_fwer}). For Grenander-estimated $\hat g$, FWER is controlled and the per-hypothesis oracle deficit does not grow with $B$ (Fig.~\ref{fig:sim_plugin_combined}), consistent with Theorem~\ref{thm:plugin_power}. \emph{Finding}: \textsc{BOOST}'s advantage is largest in the confirmatory-inference regime, where stringent $\alpha$, structured dependence, and finite-sample density estimation coincide.

\begin{table}[htbp]
\centering
\small
\caption{Sensitivity to $\alpha$ (truncnorm, $\theta=-2$, $K=30$, $B=10$; $3\times 10^4$ Monte-Carlo replicates, experiment E$\alpha$). \textsc{BOOST}'s $\Pi_K$-advantage over Hommel widens from $1.20\times$ at $\alpha=0.10$ to $1.99\times$ at $\alpha=0.01$; $\Pi_{\mathrm{any}}$ saturates near $1$ for every method, so the relevant differentiator is $\Pi_K$.}
\label{tab:sim_alpha_sensitivity}
\begin{tabular}{l cccc cccc}
\toprule
 & \multicolumn{4}{c}{Average power $\Pi_K$} & \multicolumn{4}{c}{Any-discovery power $\Pi_{\mathrm{any}}$} \\
\cmidrule(lr){2-5} \cmidrule(lr){6-9}
$\alpha$ & Bonf & Holm & Hommel & \textbf{\textsc{BOOST}} & Bonf & Holm & Hommel & \textbf{\textsc{BOOST}} \\
\midrule
0.01 & 0.080 & 0.084 & 0.086 & \textbf{0.171} & 0.919 & 0.919 & 0.925 & \textbf{0.995} \\
0.05 & 0.175 & 0.193 & 0.210 & \textbf{0.271} & 0.997 & 0.997 & 0.998 & \textbf{1.000} \\
0.10 & 0.238 & 0.272 & 0.315 & \textbf{0.379} & 1.000 & 1.000 & 1.000 & \textbf{1.000} \\
\bottomrule
\end{tabular}
\end{table}

\begin{figure}[htbp]
\centering
\includegraphics[width=\linewidth]{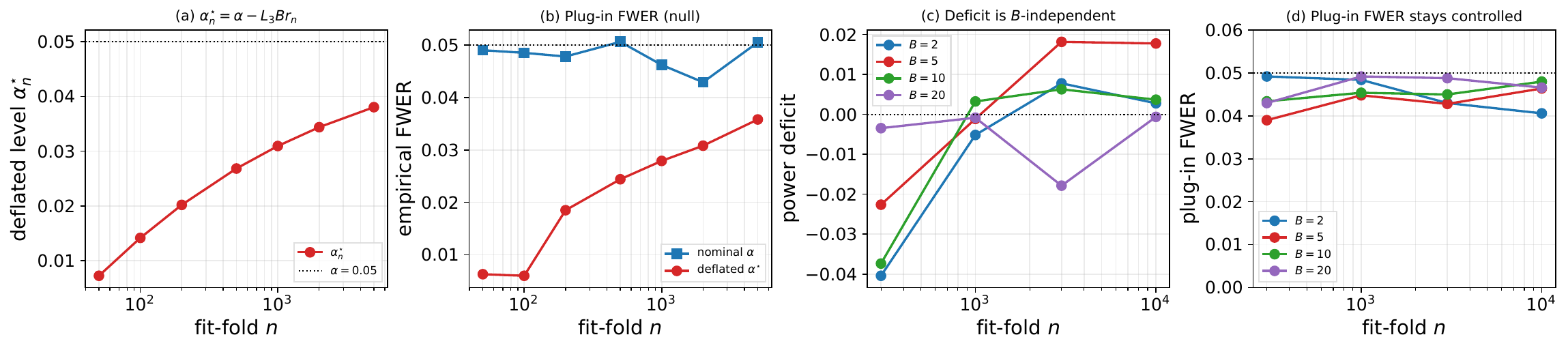}
\caption{Plug-in FWER is controlled and the per-hypothesis oracle deficit is $B$-independent for $n\ge 300$ (plug-in \textsc{BOOST}, truncnorm, $\theta_{\mathrm{fit}}=-2$, $L_3=10^{-2}$, $\alpha=0.05$). \emph{Top row, FWER calibration at $B=10$:} (a) deflation $\alpha^{\star}_n=\alpha-L_3 B r_n$ with $r_n=(n^{-1}\log n)^{1/3}$; (b) empirical complete-null FWER under nominal vs.\ deflated levels ($10^4$ replicates per $n$). \emph{Bottom row, plug-in--oracle gap across $B$:} (c) per-hypothesis oracle deficit vs.\ $n$, grouped by $B\in\{2,5,10,20\}$; (d) complete-null plug-in FWER, below nominal throughout. Panels (c)--(d) use $5\times 10^3$ replicates per cell.}
\label{fig:sim_plugin_combined}
\end{figure}

\section{Real-World Data Applications}
\label{sec:applications}

We validate \textsc{BOOST}, the $\Pi_K$-optimized (``all-$K$-reject'') policy of Theorem~\ref{thm:global_opt_sep_allocation}, on two published datasets with a domain-given $K=3$ block structure: (i)~multi-tissue cis-eQTL across the canonical immune-cell trio (BLUEPRINT, Section~\ref{subsec:blueprint_app}), and (ii)~bundled-challenger A/B tests from the Upworthy Research Archive (Section~\ref{subsec:upworthy_app}). In both applications \textsc{BOOST} delivers order-of-magnitude gains on $\Pi_K$ at controlled FWER, concentrating its rejection mass on full triples rather than scattered any-coordinate hits, the behavior its design is tuned for. For each application the results table juxtaposes rejection counts on the real test half at $\alpha=0.05$ with an oracle Monte-Carlo validation at $n_{\mathrm{rep}}=5000$ replicates drawn from the fitted Grenander density $\hat g^{\mathrm{real}}$ that \textsc{BOOST} uses internally (Monte-Carlo design and baseline scoping in Appendix~\ref{app:apps_setup}). Both applications run \textsc{BOOST} at uniform $\alpha/B$ allocation with a sample-split pooled $\hat g$ (Algorithm~\ref{alg:plugin_blockwise_k3}); the heterogeneous KKT allocators of Theorem~\ref{thm:global_opt_sep_allocation} (Bonferroni budget) and Theorem~\ref{thm:global_opt_sep_allocation_sidak} (\v{S}id\'ak budget, when cross-block independence is available) are exercised in the simulated studies of Section~\ref{sec:simulations}.

\begin{figure}[htbp]
\centering
\resizebox{\linewidth}{!}{%
\begin{tikzpicture}[
  every node/.style={font=\footnotesize},
  unit/.style={rectangle, rounded corners=3pt, draw, thick,
              align=center, inner sep=3pt,
              minimum width=28mm, minimum height=9mm, fill=blue!8},
  child/.style={rectangle, rounded corners=3pt, draw, thick,
               align=center, inner sep=3pt,
               minimum width=20mm, minimum height=11mm, fill=orange!12},
  block/.style={rectangle, rounded corners=3pt, draw, thick,
               align=center, inner sep=3pt,
               minimum width=60mm, minimum height=9mm, fill=green!10},
  arr/.style={-Latex, thick, shorten >=1.5pt, shorten <=1.5pt}
]
\node[font=\small\bfseries, align=center] at (-42mm, 22mm) {BLUEPRINT (cross-lineage cis-eQTL)\\[1pt]\normalfont\footnotesize $B_{\mathrm{full}}\!=\!11{,}260$ genes};
\node[unit] (geneA) at (-42mm, 11mm) {Gene $b$\\ (lead cis-SNP)};
\node[child] (cA1) at (-65mm, -6mm) {CD14$^{+}$ monocyte\\ $p_{b,1}$};
\node[child] (cA2) at (-42mm, -6mm) {Neutrophil\\ $p_{b,2}$};
\node[child] (cA3) at (-19mm, -6mm) {CD4$^{+}$ T cell\\ $p_{b,3}$};
\draw[arr] (geneA) -- (cA1.north);
\draw[arr] (geneA) -- (cA2);
\draw[arr] (geneA) -- (cA3.north);
\node[block] (blockA) at (-42mm, -22mm) {Block $b=(p_{b,1},p_{b,2},p_{b,3})$};
\draw[arr] (cA1.south) -- (blockA.north west);
\draw[arr] (cA2.south) -- (blockA);
\draw[arr] (cA3.south) -- (blockA.north east);
\node[font=\scriptsize, align=center] at (-42mm, -31mm) {$H_{0,j}$: $\beta^{(j)}_{\mathrm{SNP}}\!=\!0$ \,(no cis-association in lineage $j$)};
\node[font=\small\bfseries, align=center] at (42mm, 22mm) {Upworthy (bundled-challenger A/B)\\[1pt]\normalfont\footnotesize $B_{\mathrm{full}}\!=\!4564$ experiments};
\node[unit] (expB) at (42mm, 11mm) {Experiment $b$\\ (baseline arm)};
\node[child] (cB1) at (19mm, -6mm) {Challenger $1$\\ $p_{b,1}$};
\node[child] (cB2) at (42mm, -6mm) {Challenger $2$\\ $p_{b,2}$};
\node[child] (cB3) at (65mm, -6mm) {Challenger $3$\\ $p_{b,3}$};
\draw[arr] (expB) -- (cB1.north);
\draw[arr] (expB) -- (cB2);
\draw[arr] (expB) -- (cB3.north);
\node[block] (blockB) at (42mm, -22mm) {Block $b=(p_{b,1},p_{b,2},p_{b,3})$};
\draw[arr] (cB1.south) -- (blockB.north west);
\draw[arr] (cB2.south) -- (blockB);
\draw[arr] (cB3.south) -- (blockB.north east);
\node[font=\scriptsize, align=center] at (42mm, -31mm) {$H_{0,j}$: $p_{\mathrm{ch},j}\!\le\!p_{\mathrm{base}}$ \,(challenger $j$ does not beat baseline)};
\draw[dashed, gray!55] (0, 30mm) -- (0, -34mm);
\end{tikzpicture}%
}
\caption{Domain-dictated $K=3$ block structure for the two applications. Each domain unit (a gene in BLUEPRINT, an experiment in Upworthy) generates three coordinate tests against a common internal reference (three immune lineages sharing the same cis-SNP; three challengers against a shared baseline), whose $p$-values form one block. \textsc{BOOST} certifies the block-level event $\bigcap_{j=1}^{3}\{p_{b,j}\text{ rejected}\}$ (``all three reject''), the inferential target in both applications.}
\label{fig:app_schematic}
\end{figure}

\subsection{Multi-tissue cis-eQTL across the BLUEPRINT immune trio}
\label{subsec:blueprint_app}

The \textsc{BLUEPRINT} consortium~\citep{chen2016blueprint} mapped cis-eQTLs in three primary hematopoietic cell types: CD14$^{+}$ monocytes, neutrophils, and CD4$^{+}$ T cells. Block~$b$ is one gene, and the $K=3$ hypotheses test whether its lead-SNP cis-association is non-null in each lineage (Fig.~\ref{fig:app_schematic}, left). Biological symmetry of the triad (innate myeloid, innate granulocyte, adaptive lymphoid) gives within-block $h$-exchangeability. Genes with cis-eQTL signal in all three lineages are candidates for cell-type-shared regulatory variants (prioritized in functional follow-up over single-lineage hits), so the full-triple rejection $\Pi_K$ is the inferential target. Intersecting genes across all three cell types gives $B_{\mathrm{full}}=11{,}260$. We sample $B=2000$ and split fit/test evenly, giving $B_{\mathrm{test}}=1000$. Clustered gene neighborhoods leave residual cross-block dependence, so we forgo the \v{S}id\'ak tightening and retain the uniform $\alpha/B$ allocation. Data source in Appendix~\ref{app:apps_setup}.

Table~\ref{tab:app_blueprint}: \textsc{BOOST} certifies $\mathbf{419}$ genes with simultaneous cis-eQTL signal across the three immune lineages, beating Hartog e-value ($123$) by $3.4\times$, block-Hochberg ($100$) by $4.2\times$, and the single-step baselines ($89$) by $4.7\times$; the oracle Monte-Carlo replicates the same ordering: $\Pi_K=0.244$ for \textsc{BOOST} vs $0.031$ Hartog ($7.9\times$) and $0.019$ block-Hochberg ($13\times$), with strong FWER at $0.026$, well below the guarantee $\le\alpha$ from Theorem~\ref{thm:blockwise_strong_fwer_clean}. \textsc{BOOST} also exceeds the BH-FDR row (a strictly looser error criterion) by $2.4\times$ on the full-triple target: BH controls each column's discovery rate independently, whereas \textsc{BOOST} targets the joint event directly.

\begin{table}[htbp]
\centering
\small
\setlength{\tabcolsep}{3pt}
\begin{tabular}{lrrrrr}
\toprule
       & \multicolumn{2}{c}{Test half (counts)} & \multicolumn{3}{c}{Oracle Monte-Carlo}\\
\cmidrule(lr){2-3}\cmidrule(lr){4-6}
Method & Total & All $3$ & $\mathrm{FWER}_0$ & $\Pi_{\mathrm{any}}$ & $\Pi_K$ \\
\midrule
\textsc{BOOST}                   & 1292 & \textbf{419} & 0.0264 & 0.5836 & \textbf{0.2443}\\
Bonferroni                       & 709  & 89           & 0.0544 & 0.5764 & 0.0155\\
\v{S}id\'ak SS                   & 710  & 89           & 0.0548 & 0.5770 & 0.0155\\
Holm                             & 722  & 92           & 0.0544 & 0.5843 & 0.0164\\
Hochberg                         & 722  & 92           & 0.0544 & 0.5843 & 0.0164\\
Hommel                           & 726  & 94           & 0.0544 & 0.5861 & 0.0166\\
\v{S}id\'ak SD                   & 725  & 94           & 0.0548 & 0.5849 & 0.0165\\
Block-Holm                       & 726  & 100          & 0.0544 & 0.5764 & 0.0186\\
Block-Hochberg                   & 726  & 100          & 0.0544 & 0.5765 & 0.0187\\
Meinshausen hier.                & 709  & 89           & 0.0544 & 0.5764 & 0.0155\\
Hartog e-value                   & 862  & 123          & 0.0000 & 0.6171 & 0.0308\\
\midrule
BH per cell type (FDR)           & 1209 & 172          & n/a    & n/a    & n/a   \\
\bottomrule
\end{tabular}
\caption{BLUEPRINT cis-eQTL results at $\alpha=0.05$, $B_{\mathrm{test}}=1000$, $K=3$. \emph{All-3 rej.}: full-triple count on the test half. \emph{Oracle MC}: $n_{\mathrm{rep}}=5000$, null iid $\mathrm{Unif}[0,1]$ ($\mathrm{FWER}_0$), alt iid $\hat g^{\mathrm{real}}$ ($\Pi_{\mathrm{any}}=\Pr(\ge\!1\text{ rej})$, $\Pi_K=\Pr(\text{all }3\text{ rej})$). \textsc{BOOST} certifies $\mathbf{419}$ simultaneous-signal genes, $4.7\times$ the Bonferroni-family baselines ($89$) and $3.4\times$ Hartog ($123$); oracle $\Pi_K=0.244$ at $\mathrm{FWER}_0=0.026<\alpha$. BH-FDR is a looser-criterion ceiling; MC omitted.}
\label{tab:app_blueprint}
\end{table}

\subsection{Online experimentation: the Upworthy headline archive}
\label{subsec:upworthy_app}

The Upworthy Research Archive~\citep{matias2021upworthy} releases a major U.S.\ publisher's headline-and-image randomized-experiment ledger ($32{,}487$ experiments, ${\sim}10^{8}$ impressions). Restricting to at least four arms yields $B=4564$ blocks; within each experiment, the highest-impressions arm is the baseline and the next three are challengers, so block~$b$ is the experiment and the $K=3$ hypotheses are the one-sided two-proportion $z$-tests $H_{0,j}\!:\!p_{\mathrm{ch},j}\!\le\!p_{\mathrm{base}}$, $j\!=\!1,2,3$ (Fig.~\ref{fig:app_schematic}, right). Certifying that all three challengers beat the baseline is editorial evidence of a systematic content gap (not merely a single-arm win), so $\Pi_K$ is the business-relevant object. Challengers are randomized within an experiment ($h$-exchangeability) and experiments ran on disjoint impression windows (cross-block independence); $B_{\mathrm{test}}=2282$.

Table~\ref{tab:app_upworthy}: \textsc{BOOST} certifies $\mathbf{99}$ experiments where all three challengers beat the baseline, $14\times$ Hartog's $7$ and $33\times$ every non-Hartog closed-testing competitor (tied at $3$); the oracle Monte-Carlo reinforces the scale: $\Pi_K=0.0095$ for \textsc{BOOST} vs $\le 10^{-4}$ for every strong-FWER competitor (two-to-three orders of magnitude: Hartog at $10^{-4}$; Bonferroni family effectively zero), with FWER controlled at $0.033<\alpha$. The Upworthy signal is sparse (pooled effect-size estimate $\hat\theta_{\mathrm{pool}}\!\approx\!-0.05$): the ${\approx}4\%$ of experiments carrying a genuine cross-challenger effect sit in a thin near-zero tail of $\hat g$, which a level-set boundary accesses but stepwise uniform slopes do not.

\begin{table}[htbp]
\centering
\small
\setlength{\tabcolsep}{3pt}
\begin{tabular}{lrrrrr}
\toprule
       & \multicolumn{2}{c}{Test half (counts)} & \multicolumn{3}{c}{Oracle Monte-Carlo}\\
\cmidrule(lr){2-3}\cmidrule(lr){4-6}
Method & Total & All $3$ & $\mathrm{FWER}_0$ & $\Pi_{\mathrm{any}}$ & $\Pi_K$ \\
\midrule
\textsc{BOOST}                   & 297 & \textbf{99} & 0.0328 & 0.0446 & \textbf{0.0095}\\
Bonferroni                       & 99  & 3           & 0.0504 & 0.0448 & 0.0000\\
\v{S}id\'ak SS                   & 100 & 3           & 0.0512 & 0.0450 & 0.0000\\
Holm                             & 100 & 3           & 0.0504 & 0.0449 & 0.0000\\
Hochberg                         & 100 & 3           & 0.0504 & 0.0449 & 0.0000\\
Hommel                           & 100 & 3           & 0.0504 & 0.0451 & 0.0000\\
\v{S}id\'ak SD                   & 100 & 3           & 0.0512 & 0.0452 & 0.0000\\
Block-Holm                       & 100 & 3           & 0.0504 & 0.0448 & 0.0000\\
Block-Hochberg                   & 100 & 3           & 0.0504 & 0.0448 & 0.0000\\
Meinshausen hier.                & 99  & 3           & 0.0504 & 0.0448 & 0.0000\\
Hartog e-value                   & 189 & 7           & 0.0000 & 0.0757 & 0.0001\\
\midrule
BH per challenger (FDR)          & 382 & 15          & n/a    & n/a    & n/a   \\
\bottomrule
\end{tabular}
\caption{Upworthy bundled-challenger results at $\alpha=0.05$, $B_{\mathrm{test}}=2282$, $K=3$. Column conventions as in Table~\ref{tab:app_blueprint}. \textsc{BOOST} certifies $\mathbf{99}$ experiments where all three challengers beat the baseline: $33\times$ the closed-testing competitors (tied at $3$) and $14\times$ Hartog ($7$); oracle $\Pi_K=0.0095$ is two-to-three orders of magnitude above every strong-FWER competitor (Hartog $10^{-4}$; Bonferroni family effectively $0$) at $\mathrm{FWER}_0=0.033<\alpha$.}
\label{tab:app_upworthy}
\end{table}

\paragraph{Takeaway.}
Across both applications \textsc{BOOST} dominates the full-triple target by $3$--$33\times$ on the test half and by an order of magnitude or more on oracle $\Pi_K$ at $\mathrm{FWER}_0<\alpha$. The lift comes from shaping the within-block boundary to the level sets of $\widehat g$ where stepwise rules apply a uniform slope; Hartog's e-closure is the only strong-FWER baseline that materially approaches \textsc{BOOST} on $\Pi_K$, but its leaf cutoff $p\le(\alpha/2)^2$ is tighter than $\alpha/K$ and trails by $8$--$100\times$. Operationally, \textsc{BOOST} can be inserted into any closed-testing or graphical pipeline already in use, with a sample-split $\widehat g$ as the only added input; Theorem~\ref{thm:plugin_power} certifies that the per-hypothesis power deficit is $B_{\mathcal T}$-free (further expansions in Appendix~\ref{app:sim_modern_baselines}).

\section{Discussion and Outlook}
\label{sec:conclusion}

We introduced \textsc{BOOST} (Algorithm~\ref{alg:blockwise_k3}), a power-optimal procedure within the block-separable class $\mathfrak D_{\mathrm{sep}}$. Theorem~\ref{thm:blockwise_strong_fwer_clean} establishes strong FWER control at level $\alpha$ under block-measurability and local validity alone, with no independence assumption and $O(K)$ evaluation. Theorems~\ref{thm:global_opt_sep_within_block}--\ref{thm:global_opt_sep_allocation} reduce the global optimum to per-block value functions $\pi_3^{(b)}$ with an equalized-marginal KKT allocator computable in $O(B\log(1/\varepsilon))$. Under cross-block independence the \v{S}id\'ak tightening (Theorem~\ref{thm:global_opt_sep_allocation_sidak}) strictly improves the optimum, which Proposition~\ref{prop:containment} pins between the global \v{S}id\'ak floor and the per-marginal Neyman--Pearson ceiling. A sample-split plug-in variant (Algorithm~\ref{alg:plugin_blockwise_k3}) handles unknown $g$ with FWER inflation $O(B_{\mathcal T}\,\mathbb E\|g-\widehat g\|_\infty)$ (Theorem~\ref{thm:plugin_fwer}) and per-hypothesis power deficit $B_{\mathcal T}$-free (Theorem~\ref{thm:plugin_power}). Proofs, extension statements, AI-tool usage, and acknowledgements are in Appendices~\ref{app:k3_full}--\ref{app:meta}.

\paragraph{Limitations.}
Optimality is \emph{within} $\mathfrak D_{\mathrm{sep}}$, not against all monotone strong-FWER rules: the unrestricted solver of~\citet{dubey26esp} attains $1.3$--$1.8\times$ higher $\Pi_K$ at small $K$, but at slack FWER, at $25$--$260\times$ the wall-clock, and infeasibly at the $K\ge 30$ scales we target (Section~\ref{subsec:sim_esp_h2h}). The per-block atom is the $K{=}3$ solver of~\citet{dubey25}; the framework is modular in the atom (Section~\ref{subsec:blockwise_method}), so any $K\ge 4$ atom is a drop-in replacement. The plug-in FWER bound grows linearly in $B_{\mathcal T}$ while per-hypothesis power deficit remains $B_{\mathcal T}$-free (Theorem~\ref{thm:plugin_power}). The partition is taken as design-dictated; principled partition selection is left open.

\paragraph{Extensions.}
Four directions stand out. \emph{Block-formation optimality}: joint optimization over (partition, allocation, atom), with the concavity of each $\pi_3^{(b)}$ and the allocator of Theorem~\ref{thm:global_opt_sep_allocation} coupling block-size trade-offs to the allocation penalty. \emph{Higher block sizes}: $K\ge 4$ atoms from~\citet{dubey26esp} can be substituted where dimensionality permits; certified-suboptimal approximations preserve the $O(K)$ outer-loop scaling at larger $K$. \emph{FDR analog}: blockwise allocation lifts to BH-type step-up outer rules paired with per-block expected-discovery atoms. \emph{Dependence-adaptive allocation}: resampling-calibrated allocation in the spirit of~\citet{WestfallYoung1993} applied at block level should extract sharper power under structured cross-block dependence.

\newpage
\bibliographystyle{plainnat}
\bibliography{cite}

\newpage
\appendix
\renewcommand{\thesection}{S\arabic{section}}
\renewcommand{\thesubsection}{S\arabic{section}.\arabic{subsection}}
\setcounter{section}{0}

\begin{center}
{\Large\bfseries Supplementary Material}
\end{center}

\medskip
This supplement contains proofs of all theoretical results stated in the main paper, full $p$-value formulation and $K=3$ atom internals, deferred remarks and corollaries, plug-in perturbation lemmas, the full simulation grid (scaling, convergence, timing, dependence stress, plug-in rates, ESP and modern closed-testing baselines), real-data application setup, reproducibility notes, the AI-tool-usage declaration, and acknowledgements. We use the notation of the main paper throughout.

\input{appendix_arxiv}

\end{document}

%% file: appendix_arxiv.tex
\section{Detailed $p$-value formulation and $K=3$ building block}
\label{app:k3_full}

This appendix reproduces the $p$-value formulation, linear representation, dual program, and coordinate-update optimizer for $K=3$ from~\citet{dubey25}, following their notation, so that the present paper is self-contained.
The main-body construction of Section~\ref{sec:main} invokes the $K=3$ optimizer as the per-block atom without unpacking it; this appendix is for readers who want to follow the block solver's internals without opening a separate manuscript.

\subsection{$p$-value formulation and linear representation}
\label{sec:pvalue_formulation}

We now move to a $p$-value representation, which yields a tractable ordered-domain formulation.
For each test $k$, define the (one-sided) $p$-value induced by the likelihood ratio threshold:
\begin{equation}
\label{eq:pvalue_def}
u_k
=\int \mathbf 1\!\left\{ \frac{f_{k,1}(x)}{f_{k,0}(x)} \ge \Lambda_k(X_k)\right\} f_{k,0}(x)\,dx,
\end{equation}
so that $u_k\sim \mathrm{Unif}(0,1)$ under $H_{0k}$.
Under the alternative, assume the $p$-value admits a density $g$ on $[0,1]$ (common across $k$ under symmetry):
\begin{equation}
\label{eq:pvalue_model}
H_{0k}:u_k\sim \mathrm{Unif}(0,1)
\qquad\text{vs.}\qquad
H_{Ak}:u_k\sim g,
\qquad k=1,\dots,K.
\end{equation}
Write $u_{(1)}\le \cdots\le u_{(K)}$ for the order statistics and define the ordered simplex
$
Q=\Bigl\{u\in[0,1]^K:0\le u_1\le \cdots\le u_K\le 1\Bigr\}.
$
Any symmetric LR-ordered decision rule can be represented as a measurable map $D:Q\to\{0,1\}^K$ applied to $(u_{(1)},\dots,u_{(K)})$.
In the LR-based construction above, the alternative density $g$ is non-increasing on $[0,1]$ (see~\citet{RHPA22,dubey25}).

\paragraph{Linear representation of power and strong-FWER constraints.}
Both the objective and each strong-FWER constraint of \eqref{eq:general_objective_driven} admit a \emph{linear} representation over $Q$~\citep{RHPA22}.
For each $\ell\in\{0,\dots,K-1\}$, there exist non-negative coefficient functions $a_i,b_{\ell,i}:Q\to\mathbb R_+$ ($i\in[K]$) such that
\begin{equation}
\label{eq:linear_power_general}
\Pi(\vec{D})=\int_{Q}\sum_{i=1}^K a_i(u)\,D_i(u)\,du,
\qquad
\mathrm{FWER}_\ell(\vec{D})=\int_{Q}\sum_{i=1}^K b_{\ell,i}(u)\,D_i(u)\,du.
\end{equation}
The \textit{objective-driven} strong-FWER problem therefore takes the canonical form
\begin{equation}
\label{eq:canonical_primal}
\begin{aligned}
\max_{\vec{D}\in\mathcal D_K}\quad &
\int_{Q}\sum_{i=1}^K a_i(u)\,D_i(u)\,du\\
\text{s.t.}\quad &
\int_{Q}\sum_{i=1}^K b_{\ell,i}(u)\,D_i(u)\,du \le \alpha,
\qquad \ell=0,1,\dots,K-1.
\end{aligned}
\end{equation}

\subsection{Computation of the optimal $K=3$ multipliers}
\label{sec:dual_general}

We state the strong-duality result that justifies solving the primal via the dual.
\begin{lemma}[Dual problem, \protect\cite{RHPA22}]
\label{lem:dual_problem}
Let the Lagrangian associated with \eqref{eq:canonical_primal} be
\begin{equation}\label{eq:lagrangian1}
  L(\vec D,\mu)=
    \sum_{\ell=0}^{K-1}\mu_{\ell}\alpha
    +\int_{Q}\sum_{i=1}^{K}D_i(\vec u)R_i(\mu,\vec u)d\vec u,
\end{equation}
where the dual vector $\mu=(\mu_{0},\dots,\mu_{K-1})^{\top}\ge 0$ and
\begin{equation}\label{eq:rimu1}
  R_{i}(\mu,\vec u) = a_i(\vec u)-\sum_{\ell=0}^{K-1}\mu_\ell b_{\ell,i}(\vec u),\qquad
  1\le i\le K.
\end{equation}
For every fixed $\mu\ge 0$, define
\begin{equation}\label{eq:l*}
  \ell^{\ast}(\mu,\vec u) = \argmax_{1\le \ell\le K} \Bigl\{0,\sum_{i=1}^{\ell}R_i(\mu,\vec u)\Bigr\},
\end{equation}
and the induced LR-ordered policy
\begin{equation}\label{eq:optdecision1}
  D^{\mu}_{i}(\vec u) = \mathbf 1\!\bigl\{1\le i\le \ell^{\ast}(\mu,\vec u)\bigr\},\qquad 1\le i\le K.
\end{equation}
Then $\max_{\vec D\in\mathcal D} L(\vec D,\mu) = L(\vec D^{\mu},\mu)$, and the dual program reduces to
\begin{equation}\label{eq:ld_mu}
\min_{\mu\geq 0} \max_{\vec{D} \in \mathcal{D}} L(\vec{D}, \mu)
\;=\; \min_{\mu\geq 0}  L(\vec{D}^{\mu}, \mu).
\end{equation}
\end{lemma}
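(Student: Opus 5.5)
Fix $\mu\ge 0$. The plan is to exploit the fact that, for fixed multipliers, the Lagrangian \eqref{eq:lagrangian1} is linear in $\vec D$, so it can be maximized \emph{pointwise} on $Q$. First I would substitute the linear representations \eqref{eq:linear_power_general} of $\Pi$ and $\mathrm{FWER}_\ell$ into $\Pi(\vec D)-\sum_\ell\mu_\ell(\mathrm{FWER}_\ell(\vec D)-\alpha)$. Exchanging the finite sum over $\ell$ with the integral over $Q$, which is legitimate because $a_i,b_{\ell,i}\ge 0$ are integrable, gives exactly $\sum_\ell\mu_\ell\alpha+\int_Q\sum_i D_i(\vec u)R_i(\mu,\vec u)\,d\vec u$. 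The first term does not depend on $\vec D$, so maximizing $L(\cdot,\mu)$ over $\mathcal D$ reduces to maximizing the integral term.

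Next I would use the structure of the class $\mathcal D=\mathcal D_K$ of symmetric LR-ordered rules. On the ordered simplex $Q$, such a rule rejects the hypotheses carrying the smallest $p$-values. Hence for each $\vec u\in Q$ the decision vector $\vec D(\vec u)$ must be a prefix vector $(1,\dots,1,0,\dots,0)$ with some number $\ell(\vec u)\in\{0,\dots,K\}$ of ones. For such a vector the integrand equals $\sum_{i=1}^{\ell(\vec u)}R_i(\mu,\vec u)$, with the empty sum equal to $0$. Its pointwise maximum over $\ell\in\{0,\dots,K\}$ is attained at $\ell^\ast(\mu,\vec u)$ of \eqref{eq:l*}, where the $0$ inside the argmax accounts for $\ell=0$. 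Therefore, for every $\vec D\in\mathcal D$ and every $\vec u\in Q$,
\[
\sum_i D_i(\vec u)R_i(\mu,\vec u)\le \sum_i D^\mu_i(\vec u)R_i(\mu,\vec u).
\]
Integrating this inequality over $Q$ yields $L(\vec D,\mu)\le L(\vec D^\mu,\mu)$. Since $\vec D^\mu$ is itself a prefix rule, it lies in $\mathcal D$, and so $\max_{\vec D\in\mathcal D}L(\vec D,\mu)=L(\vec D^\mu,\mu)$.

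Finally, taking $\min_{\mu\ge0}$ on both sides of this identity gives \eqref{eq:ld_mu} immediately. No strong-duality claim is needed for this reduction: it only asserts that the inner maximization of the dual function is solved in closed form by $\vec D^\mu$.

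The step I expect to need care is showing that $\vec D^\mu$ is a legitimate, measurable element of $\mathcal D$. The functions $\vec u\mapsto\sum_{i\le\ell}R_i(\mu,\vec u)$ are measurable because $a_i,b_{\ell,i}$ are. I would fix a deterministic tie-breaking rule, namely the smallest maximizing $\ell$. Then $\{\ell^\ast=m\}$ is a finite Boolean combination of the measurable sets $\{\sum_{i\le m}R_i\ge\sum_{i\le j}R_i\}$ and $\{\sum_{i\le m}R_i>\sum_{i\le j}R_i\}$ over $j$, which makes $\ell^\ast$, and hence $\vec D^\mu$, measurable. Ties occur only on sets where several choices give the same integrand value, so the choice of tie-breaking does not affect $L(\vec D^\mu,\mu)$.
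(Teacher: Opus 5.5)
Your proof is correct. The paper gives no proof of its own and simply cites \citet{RHPA22}. Your pointwise maximization of the linear integrand $\sum_i D_i(\vec u)R_i(\mu,\vec u)$ over LR-ordered decision vectors is the standard argument behind that result. You are also right that \eqref{eq:ld_mu} then follows by taking $\min_{\mu\ge 0}$ of an identity in $\mu$, with no strong-duality input.

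One addition is worth making. The paper also states the $K=3$ problem over the relaxed class $D:Q\to[0,1]^K$ with $1\ge D_1\ge\cdots\ge D_K\ge 0$ (cf.\ \eqref{eq:objconst}). Your prefix-vector step covers that class as well. Set $d_0:=1$, $d_{K+1}:=0$ and $S_\ell:=\sum_{i\le \ell}R_i$ (with $S_0=0$). Then every such $d$ satisfies $\sum_i d_iR_i=\sum_{\ell=0}^{K}(d_\ell-d_{\ell+1})S_\ell\le\max_{0\le\ell\le K}S_\ell$, because the weights $d_\ell-d_{\ell+1}$ are non-negative and sum to one. So the pointwise bound, and hence the lemma, holds verbatim for randomized monotone rules.
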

For $K=3$,~\citet[Lemma~3]{dubey25} derive the closed-form integrals below, and $\vec D^\mu$ has the closed form:
\begin{lemma}[\protect\cite{dubey25}, Lemma 3]
\label{lemma:optimization_problem_k_3}
With $\vec{D} = (D_1, D_2, D_3): Q \to [0,1]^3$ on $Q = \{\vec{u} \in [0,1]^3 : u_1 \le u_2 \le u_3\}$, average power and strong FWER admit the closed-form integral representations
\begin{equation}
\label{eq:objconst}
\begin{aligned}
\max _{\vec{D}: Q \rightarrow[0,1]^{3}} & 2 \int_{Q}\left(D_{1}+D_{2}+D_{3}\right) g(u_1) g(u_2) g(u_3) d \vec{u} \\
\text { s.t. } & \mathrm{FWER}_{0}=6 \int_{Q} D_{1}\,d \vec{u} \le \alpha, \\
& \mathrm{FWER}_{1}=2 \int_{Q}\!\left[D_{1}(g(u_2)+g(u_3))+D_{2} g(u_1) \right] d \vec{u} \le \alpha, \\
& \mathrm{FWER}_{2}=2 \int_{Q}\!\left[D_{1} g(u_2) g(u_3)+ D_{2} g(u_1) g(u_3) + D_{3} g(u_1) g(u_2)\right] d \vec{u} \le \alpha, \\
& 0 \le D_{3} \le D_{2} \le D_{1} \le 1.
\end{aligned}
\end{equation}
\end{lemma}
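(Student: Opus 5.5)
The plan is a direct change of variables to the ordered simplex $Q$, carried out separately for each configuration $\vec h_\ell$, $\ell\in\{0,1,2,3\}$. Throughout I will use the model \eqref{eq:pvalue_model} with within-block independence \eqref{eq:independence}, and the fact that a symmetric LR-ordered rule is a measurable map $\vec D:Q\to\{0,1\}^3$ applied to the order statistics. Monotonicity $D_3\le D_2\le D_1$ holds because the policy \eqref{eq:optdecision1} rejects the $\ell^*$ smallest $p$-values. Randomized rules are handled by relaxing $\{0,1\}$ to $[0,1]$. Every quantity involved is linear in $\vec D$, so this relaxation does not change the formulas.

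The first step is the density of the order statistics $(u_{(1)},u_{(2)},u_{(3)})$ on $Q$ under each configuration. Ties have Lebesgue measure zero, so a point $\vec u\in Q$ arises from $3!$ labelled assignments. Summing the product densities over these assignments gives:
\begin{itemize}
\item under $\vec h_3$ (all alternatives), density $6\,g(u_1)g(u_2)g(u_3)$;
\item under $\vec h_0$ (all nulls), density $6$;
\item under $\vec h_1$ (one alternative), density $2\sum_{j}g(u_j)$, where $j$ is the ordered position of the alternative and the factor $2$ counts the two null labellings;
\item under $\vec h_2$ (two alternatives), density $2\sum_j\prod_{i\neq j}g(u_i)$, where $j$ is the position of the single null and the factor $2$ counts the two alternative labellings.
\end{itemize}
By $\vec h$-exchangeability, any configuration with $\ell$ alternatives has the same order-statistic law, so the canonical $\vec h_\ell$ entails no loss of generality.

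The objective follows at once. By definition $\Pi_3=\tfrac13\mathbb E_{\vec h_3}[D_1+D_2+D_3]$, which equals $\tfrac13\cdot 6\int_Q(D_1+D_2+D_3)\,ggg$. This is the stated $2\int_Q(\cdots)$.

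For each FWER constraint I will split the integral by the ordered position of the null or alternative coordinates. Within each piece I then identify the event $\{V>0\}$ with a single indicator $D_j$, using monotonicity.
\begin{itemize}
\item Under $\vec h_0$: $V>0$ iff at least one rejection occurs, i.e.\ iff $D_1=1$. This gives $6\int_Q D_1$.
\item Under $\vec h_1$, alternative in position $1$: the smallest null sits in position $2$, so $V>0$ iff $D_2=1$, weighted by $2g(u_1)$.
\item Under $\vec h_1$, alternative in position $2$ or $3$: position $1$ is null, so $V>0$ iff $D_1=1$, weighted by $2(g(u_2)+g(u_3))$.
\item Under $\vec h_2$, null in position $j$: $V>0$ iff $D_j=1$, weighted by $2\prod_{i\ne j}g(u_i)$. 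This yields the three-term expression for $\mathrm{FWER}_2$.
\end{itemize}

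The main obstacle is justifying the reduction of $\{V>0\}$ to the single event $\{D_j=1\}$, with $j$ the smallest null position. This needs the rule to reject an initial segment of the ordered $p$-values, which is exactly $D_3\le D_2\le D_1$. I will verify that constraint first, from \eqref{eq:optdecision1}. For randomized $\vec D$, I will instead define the FWER via the nested threshold representation, which keeps the expressions linear. The remaining care is the measure-zero tie set and the bookkeeping of the combinatorial factors $3!$ versus $2!$; both are routine.
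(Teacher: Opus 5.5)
Your proposal is correct. The paper does not prove this lemma itself; it imports \eqref{eq:objconst} from \citet{dubey25}. Your argument is the standard direct derivation: symmetrize the labelled product density over the $3!$ assignments to get the order-statistic density on $Q$ under each $\vec h_\ell$, split by the ordered position of the alternative (for $\ell=1$) or of the null (for $\ell=2$), and collapse $\{V>0\}$ to $\{D_j=1\}$ with $j$ the smallest null position. All four densities and every coefficient check out.

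Two points in the write-up need tightening.

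\emph{Source of monotonicity.} Do not try to derive $D_3\le D_2\le D_1$ from \eqref{eq:optdecision1}. That display describes only the Lagrangian maximizer $\vec D^{\mu}$, whereas the representation is claimed, and later used (e.g.\ for near-optimal and \v{S}id\'ak rules), for every feasible $\vec D$. The ordering is part of the constraint set in the statement: it defines the LR-ordered class of rules that reject an initial segment of the sorted $p$-values. So take it as given. It is genuinely needed, since without it $\mathrm{FWER}_0$ would be $6\int_Q \max_i D_i$ rather than $6\int_Q D_1$.

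\emph{Linearity for randomized rules.} Your opening claim that every quantity is linear in $\vec D$ holds only once the randomization is fixed to be nested: $D_i(\vec u)=\Pr(L\ge i\mid \vec u)$, with the $L$ smallest $p$-values rejected, so that $\Pr(V>0\mid\vec u)=D_j(\vec u)$. Under, say, independent coordinatewise randomization, $\Pr(V>0\mid\vec u)=1-\prod_{i\text{ null}}(1-D_i(\vec u))$, which is not linear. You do state the nested interpretation at the end, and that is what should carry the argument.

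As a side remark, your symmetrized densities $2\sum_j g(u_j)$ and $2\sum_j\prod_{i\neq j}g(u_i)$ are the right ones. The shorthand $3!\prod_{i:\eta_i=1}g(u_i)\mathbf 1_Q(u)$ used later in Proposition~\ref{prop:canonical_model_verification}(ii) is valid only for the all-null and all-alternative configurations. It does not reproduce the $\mathrm{FWER}_1$ and $\mathrm{FWER}_2$ coefficients in \eqref{eq:objconst}.
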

\begin{lemma}[\protect\cite{dubey25}, Lemma 4]
\label{lem:opt_decision_mu}
For fixed $\mu=(\mu_0,\mu_1,\mu_2)$, the maximizer $\vec{D}^\mu=(D_1^\mu,D_2^\mu,D_3^\mu)$ of \eqref{eq:ld_mu} is
\begin{equation}\label{eq:dimus2}
D_{1}^{\mu}(\vec{u})=\alpha_1^{\mu}(\vec{u}),\quad
D_{2}^{\mu}(\vec{u})=\alpha_{1}^{\mu}(\vec{u}) \alpha_2^{\mu}(\vec{u}),\quad
D_{3}^{\mu}(\vec{u})=\alpha_{1}^{\mu}(\vec{u}) \alpha_{2}^{\mu}(\vec{u}) \alpha_3^{\mu}(\vec{u}),
\end{equation}
with
\begin{equation}\label{eq:alpha_mu_u}
\begin{aligned}
\alpha_1^{\mu}(\vec{u}) &= \mathds{1}\{R_{1}>0 \cup R_{1}+R_{2}>0 \cup R_{1}+R_{2}+R_{3}>0\}, \\
\alpha_2^{\mu}(\vec{u}) &= \mathds{1}\{R_{2}>0 \cup R_{2}+R_{3}>0\}, \\
\alpha_3^{\mu}(\vec{u}) &= \mathds{1}\{R_{3}>0\},
\end{aligned}
\end{equation}
where $R_{i}(\mu,\vec{u})=a_i(\vec u)-\sum_{\ell=0}^{2}\mu_\ell b_{\ell,i}(\vec u)$ for $i=1,2,3$.
\end{lemma}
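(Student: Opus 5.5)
The plan is to exploit the fact that, for fixed $\mu$, the Lagrangian \eqref{eq:lagrangian1} is linear in $\vec D$ and separates over the points of $Q$. The maximization over $\vec D$ can therefore be done pointwise in $\vec u$.

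\textbf{Step 1 (pointwise reduction).} Drop the constant term $\alpha\sum_\ell\mu_\ell$. It then suffices to maximize, for a.e.\ $\vec u\in Q$,
\[
\Phi(\vec D)=R_1 D_1+R_2 D_2+R_3 D_3
\]
over the feasible set $\{0\le D_3\le D_2\le D_1\le 1\}$ from Lemma~\ref{lemma:optimization_problem_k_3}. This set is a polytope whose extreme points are the chain
\[
(0,0,0),\quad (1,0,0),\quad (1,1,0),\quad (1,1,1).
\]
A linear function attains its maximum at one of these points, so relaxing to $[0,1]$-valued rules loses nothing. The four values are $0$, $S_1:=R_1$, $S_2:=R_1+R_2$ and $S_3:=R_1+R_2+R_3$. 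Hence the pointwise maximizer is $D_i=\mathbf 1\{i\le \ell^*\}$ with $\ell^*=\argmax\{0,S_1,S_2,S_3\}$, which recovers \eqref{eq:l*}--\eqref{eq:optdecision1} of Lemma~\ref{lem:dual_problem}.

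Measurability of the selected rule follows because each $R_i$ is measurable. Ties among $0,S_1,S_2,S_3$ occur on sets where finitely many functions of $\vec u$ built from $g$ coincide. I will assume, or argue from absolute continuity, that these sets are Lebesgue-null, so any tie-breaking rule gives the same Lagrangian value. For definiteness, break ties toward the smaller index.

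\textbf{Step 2 (matching the nested-indicator form).} It remains to show that $\mathbf 1\{\ell^*\ge j\}$ equals $\alpha_1^\mu$, $\alpha_1^\mu\alpha_2^\mu$ and $\alpha_1^\mu\alpha_2^\mu\alpha_3^\mu$ for $j=1,2,3$ respectively.
\begin{itemize}
\item \emph{Case $j=1$.} We have $\ell^*\ge1$ exactly when $\max(S_1,S_2,S_3)>0$, which is $\alpha_1^\mu$.
\item \emph{Case $j=2$.} We have $\ell^*\ge2$ exactly when $\max(S_2,S_3)>\max(0,S_1)$. Since $\max(S_2,S_3)>S_1$ is equivalent to $R_2>0$ or $R_2+R_3>0$, necessity of $\alpha_1^\mu\alpha_2^\mu$ is immediate. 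Conversely, suppose $\alpha_2^\mu=1$, so $\max(S_2,S_3)>S_1$. Then $\max(S_2,S_3)$ is the overall maximum of $S_1,S_2,S_3$. Together with $\alpha_1^\mu=1$ this makes it positive.
\item \emph{Case $j=3$.} We have $\ell^*=3$ exactly when $S_3>\max(0,S_1,S_2)$. The condition $\alpha_3^\mu=1$ gives $S_3>S_2$. Combined with $\alpha_2^\mu=1$ (so $\max(S_2,S_3)>S_1$), this yields $S_3>S_1$. Then $\alpha_1^\mu=1$ gives $S_3>0$. The converse direction is direct.
\end{itemize}

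\textbf{Step 3.} Substitute the pointwise maximizer back into the Lagrangian to conclude $\max_{\vec D}L(\vec D,\mu)=L(\vec D^\mu,\mu)$.

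The only delicate point is the tie and boundary handling in Step 1: ensuring that null-set ambiguities do not affect the Lagrangian value or the measurability of $\vec D^\mu$. The case analysis in Step 2 is elementary.
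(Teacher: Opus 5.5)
Your proof is correct and follows the paper's route: the paper takes the pointwise-argmax policy $D^\mu_i=\mathbf 1\{i\le\ell^*\}$ over the chain vertices from Lemma~\ref{lem:dual_problem} and rewrites it in nested-indicator form, as your Steps 1--2 do. One simplification: drop the null-tie hedge, because tied vertices give equal pointwise values of $\Phi$, so the Lagrangian value does not depend on the tie-break; moreover, your smaller-index tie-break makes the identity with \eqref{eq:alpha_mu_u} hold exactly everywhere, not just almost everywhere.
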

We define $\beta_i^{\mu}(\vec{u}) := 1-\alpha_i^{\mu}(\vec{u})$ for use below.

\subsection{The $K=3$ coordinate-update algorithm}
\label{sec:k3_building_block}

The optimal $\vec\mu^\ast$ is characterized coordinate-wise.
\begin{theorem}[\protect{\cite[Theorem~1]{dubey25}}]
\label{thm:mu_optimality_combined}
Under Assumptions~\ref{as:assumption3}, \ref{as:assumption4}, \ref{as:assumption5}, for fixed values of the other coordinates, a local minimizer $\mu^\ast=(\mu_0^\ast,\mu_1^\ast,\mu_2^\ast)$ of \eqref{eq:ld_mu} satisfies
\begin{align}
\label{eq:mu_0_optimality_condition}
\alpha &= 6 \int_Q \alpha_1^{(\mu_0^\ast,\mu_1,\mu_2)}(\vec{u}) d\vec{u},\\
\label{eq:mu_1_optimality_condition}
\alpha &= 1 - 2 \int_Q \beta_2^{(\mu_0,\mu_1^\ast,\mu_2)}(\vec{u})g(u_1) d\vec{u},\\
\label{eq:mu_2_optimality_condition}
\alpha &= 2 \int_Q \alpha_3^{(\mu_0,\mu_1,\mu_2^\ast)}(\vec{u}) g(u_1)g(u_2) d\vec{u}.
\end{align}
\end{theorem}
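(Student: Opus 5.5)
The plan is a Lagrangian-duality argument carried out one coordinate at a time. Fix two of the three multipliers, say $(\mu_1,\mu_2)$, and regard the dual function $\psi(\mu_0):=L(\vec D^{(\mu_0,\mu_1,\mu_2)},(\mu_0,\mu_1,\mu_2))$ of \eqref{eq:ld_mu} as a function of the free coordinate. By Lemma~\ref{lem:dual_problem}, $\psi$ is the pointwise maximum over $\vec D$ of functions that are affine in $\mu_0$, so $\psi$ is convex.

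\textbf{Step 1 (derivative of the dual).} By a Danskin/envelope argument its derivative is
\[
\partial_{\mu_0}\psi=\alpha-\mathrm{FWER}_0(\vec D^{\mu}).
\]
The only thing to check is that the set where the maximizer $\vec D^\mu$ of Lemma~\ref{lem:opt_decision_mu} is non-unique, namely $\{R_1=0\}$, $\{R_1+R_2=0\}$, and so on, has Lebesgue measure zero. Since $a_i,b_{\ell,i}$ are polynomials in $g(u_j)$, this follows from the regularity in Assumptions~\ref{as:assumption3}--\ref{as:assumption5}, for instance $g$ strictly monotone with no flat pieces on the relevant level sets. Then $\psi$ is differentiable, and at a local (hence global, by convexity) interior minimizer the first-order condition reads $\mathrm{FWER}_\ell(\vec D^{\mu^\ast})=\alpha$. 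The same argument applies verbatim to $\mu_1$ and to $\mu_2$. I would also note that the boundary case $\mu_\ell^\ast=0$ is excluded because the corresponding constraint is active under the assumptions.

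\textbf{Step 2 (rewriting each equality).} Substitute \eqref{eq:dimus2} into the closed forms of Lemma~\ref{lemma:optimization_problem_k_3}.

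For $\ell=0$ this is immediate: $\mathrm{FWER}_0=6\int_Q D_1=6\int_Q\alpha_1^\mu$, which gives \eqref{eq:mu_0_optimality_condition}.

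For $\ell=2$, we have $D_3=\alpha_1\alpha_2\alpha_3$, and the claim is that $\mathrm{FWER}_2(\vec D^{\mu})$ reduces to $2\int_Q\alpha_3^\mu g(u_1)g(u_2)$. I would get this by showing, under the ordering assumptions, that $\{R_3>0\}\subseteq\{\alpha_1=\alpha_2=1\}$. The idea is that where the third hypothesis is rejected, the first two are rejected as well, because the $R_i$ are ordered in $u$ through monotone $g$. The $D_1$ and $D_2$ terms are then accounted for through the structure the assumptions impose, leaving exactly the displayed expression.

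For $\ell=1$, the target is $\alpha=1-2\int_Q\beta_2^\mu g(u_1)$. Here I would show that $\{\alpha_1=0\}\subseteq\{\alpha_2=0\}$, so that $D_2=\alpha_2$ almost everywhere. The $D_1$ terms $2\int D_1(g(u_2)+g(u_3))$ should then combine with $2\int\alpha_2 g(u_1)$, using the normalization identities $\int_0^1 g=1$ and the symmetrization $2\int_Q(\cdot)=\frac13\int_{[0,1]^3}$, into the complement form $1-2\int\beta_2 g(u_1)$.

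\textbf{Main obstacle.} I expect the $\mu_1$ reduction in Step 2 to be the hardest part. Its correctness depends entirely on how Assumptions~\ref{as:assumption3}--\ref{as:assumption5} constrain the nesting of the regions $\{\alpha_i^\mu=1\}$. I would first try to establish the nesting $\{\alpha_3=1\}\subseteq\{\alpha_2=1\}\subseteq\{\alpha_1=1\}$ directly from the sign structure of $R_i$ in \eqref{eq:rimu1}. Failing that, I would take the nesting as the content of those assumptions and verify the integral identity by direct computation on the ordered simplex.
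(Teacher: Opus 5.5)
Your Step~1 is essentially right and is the natural computation. The paper gives no proof of its own; it imports the statement from \citet{dubey25}. Write $\psi(\mu)=\alpha\sum_\ell\mu_\ell+\int_Q\max\{0,R_1,R_1+R_2,R_1+R_2+R_3\}\,d\vec u$. This is convex, ties are Lebesgue-null because $g$ is injective and bounded below, and Danskin gives $\partial_{\mu_\ell}\psi(\mu)=\alpha-\mathrm{FWER}_\ell(\vec D^{\mu})$. For $\ell=0$ this is exactly \eqref{eq:mu_0_optimality_condition}.

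The gap is Step~2. For $\ell=1,2$ the displayed equations are \emph{not} rewritings of $\mathrm{FWER}_\ell(\vec D^{\mu})=\alpha$, and no nesting lemma can make them so.

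For $\ell=1$, use $2\int_Q\bigl(g(u_1)+g(u_2)+g(u_3)\bigr)\,d\vec u=1$ (symmetrization). This gives the exact identity
\[
\Bigl(1-2\int_Q\beta_2^{\mu}g(u_1)\,d\vec u\Bigr)-\mathrm{FWER}_1(\vec D^{\mu})
=2\int_Q\beta_1^{\mu}\bigl(g(u_2)+g(u_3)+\alpha_2^{\mu}g(u_1)\bigr)\,d\vec u
\;\ge\;\frac{2c_4}{3}\bigl(1-\mathrm{FWER}_0(\vec D^{\mu})\bigr).
\]
By Assumption~\ref{as:assumption4}, the two equations can hold simultaneously only if $\alpha_1^{\mu}=1$ a.e.\ (i.e.\ $\mathrm{FWER}_0=1$). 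Your planned merger of the $D_1$ terms into the complement form amounts to replacing $D_1$ by $1$.

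For $\ell=2$, even granting your nesting $\alpha_3^{\mu}\le\alpha_1^{\mu}\alpha_2^{\mu}$,
\[
\mathrm{FWER}_2(\vec D^{\mu})-2\int_Q\alpha_3^{\mu}g(u_1)g(u_2)\,d\vec u
=2\int_Q\alpha_1^{\mu}\bigl(g(u_2)g(u_3)+\alpha_2^{\mu}g(u_1)g(u_3)\bigr)\,d\vec u
\;\ge\;\frac{c_4^{2}}{3}\,\mathrm{FWER}_0(\vec D^{\mu}).
\]
This is positive at any point where $\mathrm{FWER}_2(\vec D^{\mu})=\alpha>0$, so the $D_1$ and $D_2$ terms cannot be ``accounted for''.

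There is also a structural reason both reductions fail. Since $R_3=2g(u_1)g(u_2)\bigl(g(u_3)-\mu_2\bigr)$, we have $\alpha_3^{\mu}=\mathbf 1\{g(u_3)>\mu_2\}$. Likewise $\beta_2^{\mu}$ involves only $(\mu_1,\mu_2)$. So the displayed $\mu_1$- and $\mu_2$-equations do not depend on $\mu_0$. Yet for $\mu_0>c_5^3$ all the partial sums $R_1$, $R_1+R_2$, $R_1+R_2+R_3$ are negative. Then $\vec D^{\mu}\equiv 0$ and $\mathrm{FWER}_1=\mathrm{FWER}_2=0\ne\alpha$ for every $(\mu_1,\mu_2)$.

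Two auxiliary claims also fail.
\begin{itemize}
\item The nestings you hope to read off from the signs of the $R_i$ are false for the raw indicators. A large $\mu_0$ forces $\alpha_1^{\mu}\equiv 0$ without changing $\alpha_2^{\mu}$. Taking $\mu_1\ge 2c_5^2$ forces $\alpha_2^{\mu}\equiv 0$ while $\alpha_3^{\mu}$ can still equal $1$. The ordering $D_3\le D_2\le D_1$ comes only from the products in \eqref{eq:dimus2}.
\item The boundary case is not excluded. On $\{\mu_0>c_5^3\}$ one has $\psi(\mu)=\alpha(\mu_0+\mu_1+\mu_2)$, so the coordinate minimizers in $\mu_1$ and $\mu_2$ sit at $0$.
\end{itemize}

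What your argument genuinely establishes is the standard condition $\mathrm{FWER}_\ell(\vec D^{\mu^\ast})=\alpha$ at interior coordinate minimizers. This coincides with the statement for $\mu_0$ only. The identities above show it is incompatible with the displayed $\mu_1$- and $\mu_2$-equations outside degenerate cases. So Step~2 cannot be repaired by a sharper nesting lemma. Those two conditions would need a justification other than stationarity of \eqref{eq:ld_mu}, and neither your plan nor the paper supplies one.
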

These conditions drive a coordinate-update algorithm~\citep[Algorithm~1]{dubey25}:
\begin{algorithm}[htbp]
\caption{\texttt{ComputeOptimalMu} (\citealp{dubey25}): coordinate update for optimal $\vec\mu$}
\label{alg:compute_optimal_mu_K3_main}
\begin{algorithmic}[1]
\Statex \textbf{Input:} level $\alpha$, tolerances $\delta,\varepsilon$, iteration cap $T_{\max}$, search ceiling $U_{\max}$.
\Statex \textbf{Init:} $\vec{\mu}^{(0)}\gets(0,0,0)$; $t\gets 0$.
\While{True}
\State $t\gets t+1$.
\State $\mu_0^{(t)}\gets \texttt{ComputeCoordinateMu}(F_0(\cdot;\mu_1^{(t-1)},\mu_2^{(t-1)}),\alpha,\delta,U_{\max})$.
\State $\mu_1^{(t)}\gets \texttt{ComputeCoordinateMu}(F_1(\cdot;\mu_0^{(t)},\mu_2^{(t-1)}),\alpha,\delta,U_{\max})$.
\State $\mu_2^{(t)}\gets \texttt{ComputeCoordinateMu}(F_2(\cdot;\mu_0^{(t)},\mu_1^{(t)}),\alpha,\delta,U_{\max})$.
\State \textbf{break} if $\|\vec\mu^{(t)}-\vec\mu^{(t-1)}\|_2\le\varepsilon$ or $t\ge T_{\max}$.
\EndWhile
\Statex \textbf{Output:} $\hat{\vec\mu}\gets\vec{\mu}^{(t)}$.
\end{algorithmic}
\end{algorithm}
with
\begin{equation}
\label{eq:f_gamma}
\begin{aligned}
F_0(\mu_0;\mu_1,\mu_2) &= 6\!\int_Q \alpha_1^{(\mu_0,\mu_1,\mu_2)}(\vec{u})d\vec{u}, \\
F_1(\mu_1;\mu_0,\mu_2) &= 1-2\!\int_Q \beta_2^{(\mu_0,\mu_1,\mu_2)}(\vec{u})g(u_1)d\vec{u}, \\
F_2(\mu_2;\mu_0,\mu_1) &= 2\!\int_Q \alpha_3^{(\mu_0,\mu_1,\mu_2)}(\vec{u})g(u_1)g(u_2)d\vec{u}.
\end{aligned}
\end{equation}
The subroutine \texttt{ComputeCoordinateMu} is given in Appendix~\ref{app:algos}.

\subsection{Regularity assumptions for Algorithm~\ref{alg:compute_optimal_mu_K3_main}}
\label{app:k3_notation}
\paragraph{Assumptions.}
To derive the optimality conditions for minimizing $L(\vec{D}^\mu, \mu)$ with respect to $\mu$, we first impose several regularity conditions on the alternative density function $g(\cdot)$.
The first assumption, a lower Lipschitz bound, ensures that the density $g(\cdot)$ is strictly monotonic and does not flatten out over any interval, which guarantees that the density responds at a controlled rate to changes in the $p$-value.
\begin{assumption}[Lower Lipschitz Bounds]
\label{as:assumption3}
There exists a constant $c_3 > 0$ such that for all $u, u' \in [0,1]$,
\begin{equation}
|g(u) - g(u')| \ge c_3 |u - u'|.
\end{equation}
\end{assumption}

The second assumption, strict positivity, requires that the alternative density be bounded away from zero. 
This is necessary to ensure that terms involving $g(u)$ in the Lagrangian do not vanish.
\begin{assumption}[Strict Positivity]
\label{as:assumption4}
There exists a constant $c_4 > 0$ such that for all $u \in [0,1]$,
\begin{equation}
g(u) \ge c_4.
\end{equation}
\end{assumption}

The third assumption provides a simple upper bound on the density function, which ensures all integrals are well-defined.
\begin{assumption}[Upper Bound]
\label{as:assumption5}
There exists a constant $c_5 > 0$ such that for all $u \in [0,1]$,
\begin{equation}\label{eq:assumption5}
g(u) \le c_5.
\end{equation}
\end{assumption}

Under these assumptions, we can characterize the minimizer of the Lagrangian $L(\vec{D}^\mu, \mu)$ by analyzing it in a coordinate-wise approach.

\subsection{Root-finding subroutine for Algorithm~\ref{alg:compute_optimal_mu_K3_main}}
\label{app:algos}
We detail the one-dimensional root-finding subroutine underpinning the coordinate-descent scheme of Algorithm~\ref{alg:compute_optimal_mu_K3_main} from~\citep{dubey25}.
Algorithm~\ref{alg:ComputeCoordinateMu} takes as input one constraint map $F_{\gamma}(\cdot;\mu_A,\mu_B)$, the other two multipliers, and a target level $\alpha$, and returns an approximate root of $F_{\gamma}(x;\mu_A,\mu_B)=\alpha$.
The routine combines expanding-interval bracketing with bisection, with explicit failure flags for the pathological cases $F_{\gamma}(0)<\alpha$ or no bracket found below $U_{\max}$.

\begin{algorithm}[htbp]
\caption{\texttt{ComputeCoordinateMu} (Subroutine for Algorithm~\ref{alg:compute_optimal_mu_K3_main})}
\label{alg:ComputeCoordinateMu}
\begin{algorithmic}[1]
    \Statex \textbf{Input:}
    \Statex \hspace{0.5em} 1: Functions $F_{\gamma}(x;\mu_{A},\mu_{B})$ ($\gamma \in \{0,1,2\}$, $x$ is the coordinate being solved for, and $\mu_{A},\mu_{B}$ are its other parameters).
    \Statex \hspace{0.5em} 2: Fixed value for parameter $\mu_{A}$: $\mu_{A}'$
    \Statex \hspace{0.5em} 3: Fixed value for parameter $\mu_{B}$: $\mu_{B}'$
    \Statex \hspace{0.5em} 4: Target FWER level $\alpha$.
    \Statex \hspace{0.5em} 5: Bisection parameters: tolerance $\delta$, max iterations $MaxIter_b$, initial step $U_s$, expansion factor $U_f$, max coordinate value $U_{max}$.

    \State $L \gets 0$.
    \State $\mu_{coord} \gets 0$. \Comment{Value for the current coordinate being computed}
    \State $flag \gets 0$. \Comment{0 for success, 1 for termination}
    \State $msg \gets \text{`'}$.
    \If{$F_{\gamma}(L; \mu_{A}', \mu_{B}') = \alpha$}
        \State $\mu_{coord} \gets L$. \Comment{$H(0)=\alpha$ met, optimal value is $0$.}
    \ElsIf{$F_{\gamma}(L; \mu_{A}', \mu_{B}') < \alpha$}
        \State $flag \gets 1$. \Comment{No optimal solution $H(0) < \alpha$ for current coordinate}
        \State $msg \gets \text{`Consider decreasing FWER level } \alpha\text{.'}$
        \State $\mu_{coord} \gets L$. \Comment{Value assigned, but main algorithm will terminate based on flag.}
    \Else \Comment{$F_{\gamma}(L; \mu_{A}', \mu_{B}') > \alpha$}
        \State $U \gets L + U_s$. 
        \While{$F_{\gamma}(U; \mu_{A}', \mu_{B}') > \alpha$ \textbf{and} $U < U_{max}$}
            \State $U \gets U \times U_f$. 
        \EndWhile
        \If{$F_{\gamma}(U; \mu_{A}', \mu_{B}') > \alpha$} 
            \State $flag \gets 1$. \Comment{No optimal solution: Failed to bracket positive root for current coordinate.}
            \State $msg \gets \text{`Consider increasing } U_{max} \text{ or decreasing FWER level $\alpha$} \text{.'}$
            \State $\mu_{coord} \gets L$. \Comment{Value assigned, but main algorithm will terminate based on flag.}
        \Else
            \For{$j \gets 1 \text{ to } MaxIter_b$} 
                \State $mid \gets L + (U-L)/2$.
                \If{$(U-L)/2 < \delta$} \textbf{break}; \EndIf 
                \If{$F_{\gamma}(mid; \mu_{A}', \mu_{B}') > \alpha$}
                    \State $L \gets mid$.
                \Else
                    \State $U \gets mid$.
                \EndIf
            \EndFor
            \State $\mu_{coord} \gets L + (U-L)/2$.
        \EndIf
    \EndIf
    \Statex \textbf{Output:}
    \Statex \hspace{0.5em} 1: Computed coordinate value $\mu_{coord}$.
    \Statex \hspace{0.5em} 2: Termination flag $flag$.
    \Statex \hspace{0.5em} 3: Termination message $msg$.
\end{algorithmic}
\end{algorithm}

\section{Extensions, instantiations, and the assumption hierarchy}
\label{app:deferred_main}

This appendix collects (i) scope and instantiation remarks for Theorem~\ref{thm:blockwise_strong_fwer_clean} under independence; (ii) the plug-in corollaries for sample splitting, the true-$g$ oracle reformulation, and explicit polynomial-in-$n$ rates; (iii) the bisection algorithm for the equalized-marginal allocation and its \v{S}id\'ak-constrained variant; (iv) the global \v{S}id\'ak-tightening corollary; (v) the containment sandwich on the unrestricted strong-FWER optimum; (vi) the assumption-hierarchy audit; and (vii) the verification of structural hypotheses for the canonical truncated-normal model. Each is referenced but not reproduced in the main body.

\subsection{Cross-block homogeneity assumption}
\label{app:exchangeable_blocks_assumption}

\begin{assumption}[Cross-block homogeneity under the global alternative]
\label{assump:exchangeable_blocks}
The per-block laws $\mathbb Q^{(b)}_{\!\vec h_K}$ in \eqref{eq:per_block_alt_law} all coincide with the canonical $K=3$ generative law $\mathbb P_{h_3}$ of~\citet{dubey25}; equivalently, $\pi_3^{(b)}\equiv\pi_3$ for all $b\in[B]$. (Under the global independence model \eqref{eq:independence} with a common within-block alternative density, this condition holds automatically.) The body theorems are stated for heterogeneous blocks; this assumption is invoked only to specialize their conclusions to the uniform split $\alpha/B$ and the closed-form value $\pi_3(\alpha/B)$.
\end{assumption}

\subsection{Scope of Theorem~\ref{thm:blockwise_strong_fwer_clean} and instantiations under independence}
\label{app:thm1_scope_instantiations}

\begin{remark}[Scope of Theorem~\ref{thm:blockwise_strong_fwer_clean}]
\label{rem:thm1_scope}
The theorem contains no independence assumption. Independence enters only to \emph{verify} Assumption~\ref{as:local_validity_marginal} in concrete models, via the two reductions below.
\end{remark}

\begin{remark}[Instantiation under independence, and weakenings]
\label{cor:indep_to_productlaw}
\label{cor:k3_optimizer_instantiation}
\label{rem:crossblock_only}
\label{rem:block_reduction_optional}
Under \eqref{eq:independence}, each block marginal $\mathbb Q^{(b)}_{h_\ell}$ equals the within-block product law, so the $K=3$ strong-FWER optimizer of~\citet{dubey25} (Algorithm~\ref{alg:compute_optimal_mu_K3_main}) discharges Assumption~\ref{as:local_validity_marginal} and Algorithm~\ref{alg:blockwise_k3} controls strong FWER at level $\alpha$, provided the within-block alternative density $g$ satisfies the dubey25 regularity conditions (Assumptions~\ref{as:assumption3}--\ref{as:assumption5}: lower-Lipschitz, strict positivity, upper bound), verified for the canonical truncated-normal design in Proposition~\ref{prop:canonical_model_verification}(iii)--(iv). Two weakenings: (i) cross-block independence alone preserves Theorem~\ref{thm:blockwise_strong_fwer_clean} provided each $\vec D^{(b)}$ is valid under the actual within-block joint marginal; (ii) under $\vec h$-exchangeability with a symmetric local rule, verification of~\eqref{eq:block_strong_fwer_assump_clean} reduces to the three structured configurations $\eta=h_j$, $j=0,1,2$.
\end{remark}

\subsection{Plug-in corollaries: sample splitting, oracle reformulation, and explicit rates}
\label{app:plugin_corollaries}

\begin{corollary}[Cross-block sample splitting under independence]
\label{cor:plugin_sample_splitting}
Assume Assumption~\ref{as:block_measurability}, the cross-block independence assumption A3 (equivalently, the cross-block half of \eqref{eq:independence}), and that the estimation-fold blocks $\{X^{(b)}\}_{b\in\mathcal E}$ are independent of the testing-fold blocks $\{X^{(b)}\}_{b\in\mathcal T}$. Let $\widehat g=\widehat g(\{X^{(b)}\}_{b\in\mathcal E})$ be any measurable estimator computed solely from the estimation fold. Then $\widehat g$ satisfies condition~(i) of Theorem~\ref{thm:plugin_fwer}, and the conclusion~\eqref{eq:plugin_fwer_bound} holds with the $\sigma$-algebra $\mathcal G:=\sigma(\{X^{(b)}\}_{b\in\mathcal E})$. The \v{S}id\'ak-budget variant $\prod_{b\in\mathcal T}(1-\alpha_{\mathrm{blk}}^{(b)})\ge 1-\alpha$ replaces the Bonferroni budget in (ii) without changing the bound, since the $B_{\mathcal T}$ factor is driven by the Boole/union step over $\mathcal T$ and applies equally to both budgets.
\end{corollary}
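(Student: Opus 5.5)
The plan is to check condition~(i) of Theorem~\ref{thm:plugin_fwer} directly, and then apply that theorem as stated. Put $\mathcal G:=\sigma(\{X^{(b)}\}_{b\in\mathcal E})$. Because $\widehat g=\widehat g(\{X^{(b)}\}_{b\in\mathcal E})$ is a measurable function of the estimation fold, $\widehat g$ is $\mathcal G$-measurable. Independence of the two folds means $\sigma(\{X^{(b)}\}_{b\in\mathcal E})$ and $\sigma(\{X^{(b)}\}_{b\in\mathcal T})$ are independent. This must hold under every configuration law $\mathbb P_{h_\ell}$, $\ell=0,\dots,K-1$, since the FWER bound is taken configuration by configuration. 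I obtain it from A3: mutual independence of the block vectors under each $\mathcal L_h$ gives independence of any two disjoint subcollections, by grouping the blocks into two $\sigma$-algebras (a standard $\pi$--$\lambda$ argument). The corollary's extra hypothesis covers exactly this. Hence (i) holds for each $\ell$. Conditions (ii)--(iv) are either assumed or unchanged by the splitting. So \eqref{eq:plugin_fwer_bound} follows verbatim, with $\mathcal G$ the estimation-fold $\sigma$-algebra.

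One point needs care: $\widehat g$ is random, so the bound should be read in the conditional form. I would spell this out briefly. Conditionally on $\mathcal G$, independence lets us treat $\widehat g$ as a fixed density. The testing-fold blocks then keep their unconditional law $\mathbb Q^{(b)}_{h_\ell,g}$, by the freezing/substitution lemma for independent $\sigma$-algebras. The per-block perturbation bound of Lemma~\ref{lem:fwer_perturbation} gives conditional block FWER at most $\alpha_{\mathrm{blk}}^{(b)}+L_3\|g-\widehat g\|_\infty$. A union bound over $b\in\mathcal T$ and taking expectations then gives the stated inequality. This is the same route as the proof of Theorem~\ref{thm:plugin_fwer}, so the argument adds nothing new; it only confirms that the fold split supplies the independence that proof uses.

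For the \v{S}id\'ak variant, I would reopen the final aggregation step, which is the main point needing care. Under the \v{S}id\'ak budget, $\sum_{b\in\mathcal T}\alpha_{\mathrm{blk}}^{(b)}$ can exceed $\alpha$, so the Bonferroni sum cannot be used as is. Instead, conditionally on $\mathcal G$, the testing blocks are mutually independent by A3, and the block rules are $\sigma(X^{(b)})$-measurable given $\widehat g$. So the event of no false rejection factorizes:
\[
\mathbb P(V=0\mid\mathcal G)=\prod_{b\in\mathcal T}\bigl(1-p_b\bigr),\qquad p_b\le\alpha_{\mathrm{blk}}^{(b)}+\epsilon_b,
\]
where $p_b$ is the conditional false-rejection probability of block $b$ and $\epsilon_b:=L_3\|g-\widehat g\|_\infty$. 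I then use $\prod_b(1-a_b-\epsilon_b)\ge\prod_b(1-a_b)-\sum_b\epsilon_b$, which follows by telescoping since every factor lies in $[0,1]$. This yields conditional FWER at most $1-(1-\alpha)+B_{\mathcal T}L_3\|g-\widehat g\|_\infty$. Taking expectations recovers \eqref{eq:plugin_fwer_bound}, so the inflation term again comes only from summing over $\mathcal T$, as the corollary claims. The telescoping inequality is elementary; the one real requirement is clamping, i.e.\ replacing $\alpha_{\mathrm{blk}}^{(b)}+\epsilon_b$ by its minimum with $1$, so that every factor stays nonnegative.
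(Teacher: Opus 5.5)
Your verification of condition~(i) matches the paper's argument: $\widehat g$ is $\mathcal G$-measurable, and $\mathcal G=\sigma(\{X^{(b)}\}_{b\in\mathcal E})$ is independent of the testing fold, so Theorem~\ref{thm:plugin_fwer} applies verbatim.

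Your route differs genuinely on the \v{S}id\'ak variant. The paper handles it in one line. It claims the \v{S}id\'ak budget implies the Bonferroni sum $\sum_{b\in\mathcal T}\alpha_{\mathrm{blk}}^{(b)}\le\alpha$ ``via Bernoulli's inequality'', then reuses the Boole step. That implication runs the wrong way. With $a_b:=\alpha_{\mathrm{blk}}^{(b)}$, Bernoulli gives $\prod_b(1-a_b)\ge 1-\sum_b a_b$, so Bonferroni implies \v{S}id\'ak, not conversely. For example, the uniform \v{S}id\'ak split has $\sum_b a_b=B_{\mathcal T}(1-(1-\alpha)^{1/B_{\mathcal T}})>\alpha$ for $B_{\mathcal T}>1$. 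The Boole step alone then yields only $\sum_b a_b+L_3B_{\mathcal T}\|g-\widehat g\|_\infty$, which can exceed the stated bound.

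So your remark that the Bonferroni sum ``cannot be used as is'' is correct, and your argument is what actually proves the \v{S}id\'ak claim. It has two steps:
\begin{enumerate}
\item Factorize conditionally on $\mathcal G$: $\mathbb P(V=0\mid\mathcal G)=\prod_{b\in\mathcal T}(1-p_b)$, using A3 together with fold independence.
\item Apply the telescoping estimate $\prod_b y_b\ge\prod_b x_b-\sum_b|x_b-y_b|$ for factors in $[0,1]$, with $x_b:=1-a_b$ and the clamp $y_b:=\max(0,1-a_b-\epsilon_b)$.
\end{enumerate}
This gives the same $L_3B_{\mathcal T}$ inflation. It uses cross-block independence essentially, which the corollary does assume. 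It also shows the $B_{\mathcal T}$ factor comes from the telescoping sum $\sum_b\epsilon_b$, not from a union bound. The paper's version is shorter, but as written it only covers allocations that also satisfy the Bonferroni budget.
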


The proof is provided in Appendix~\ref{app:proof_plugin_sample_splitting}.

\begin{corollary}[Realized power versus the true-$g$ oracle]
\label{cor:plugin_power_oracle}
Under the setup of Theorem~\ref{thm:plugin_power}, suppose the per-block levels obey $\alpha_{\mathrm{blk}}^{(b)}\ge L_3\,\|g-\widehat g\|_{\infty}$ almost surely for every $b\in\mathcal T$. Then
\begin{equation}
\label{eq:plugin_power_vs_oracle}
\frac{1}{B_{\mathcal T}}\sum_{b\in\mathcal T}\mathbb E\!\left[\mathrm{Power}_{g}^{(b)}\!\bigl(\widehat{\vec D}^{(b)}\bigr)\right]
\;\ge\;\frac{1}{B_{\mathcal T}}\sum_{b\in\mathcal T}\mathbb E\!\left[\pi_{3}^{g}\!\bigl(\alpha_{\mathrm{blk}}^{(b)}-L_{3}\,\|g-\widehat g\|_{\infty}\bigr)\right]
\;-\;2L_{3}^{\mathrm{pow}}\,\mathbb E\!\left[\|g-\widehat g\|_{\infty}\right].
\end{equation}
That is, on average per hypothesis the plug-in procedure attains the \emph{true-$g$ oracle's} value at a deflated level $\alpha_{\mathrm{blk}}^{(b)}-L_{3}\,\|g-\widehat g\|_{\infty}$, with an additional additive penalty linear in $\mathbb E\|g-\widehat g\|_{\infty}$ and free of any $B_{\mathcal T}$ factor.
\end{corollary}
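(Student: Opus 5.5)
The plan is to start from the per-block inequality that Theorem~\ref{thm:plugin_power} already gives before averaging, and then to replace the plug-in value $\pi_3^{\widehat g}(\alpha_{\mathrm{blk}}^{(b)})$ by the true-$g$ value at a deflated level. The whole argument is carried out conditionally on $\mathcal G$. Because $\widehat g$ is $\mathcal G$-measurable and independent of the testing fold, conditioning on $\mathcal G$ lets us treat $\widehat g$ as deterministic. Write $\epsilon:=\|g-\widehat g\|_\infty$ and $\alpha_b':=\alpha_{\mathrm{blk}}^{(b)}-L_3\epsilon\ge 0$; the latter holds by the hypothesis on the levels. Fix $b\in\mathcal T$ and take $\vec D^{\star}$ to be (near-)optimal for $\pi_3^{g}(\alpha_b')$. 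This exists by the supremum definition \eqref{eq:pi3_value_def}; we may take an $\eta$-optimizer and let $\eta\downarrow 0$ at the end.

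The first step transfers feasibility from $g$ to $\widehat g$. By the FWER perturbation lemma (Lemma~\ref{lem:fwer_perturbation}, constant $L_3=2M$), each configuration FWER of a fixed $K=3$ rule changes by at most $L_3\epsilon$ when the alternative density is switched from $g$ to $\widehat g$. Hence $\vec D^\star$, which is feasible at level $\alpha_b'$ under $g$, is feasible at level $\alpha_b'+L_3\epsilon=\alpha_{\mathrm{blk}}^{(b)}$ under $\widehat g$, and so
\[
\pi_3^{\widehat g}\bigl(\alpha_{\mathrm{blk}}^{(b)}\bigr)\;\ge\;\Pi_3^{\widehat g}(\vec D^\star).
\]
The second step transfers power back. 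By the power perturbation lemma (Lemma~\ref{lem:power_perturbation}, constant $L_3^{\mathrm{pow}}=3M^2$), $\Pi_3^{\widehat g}(\vec D^\star)\ge \Pi_3^{g}(\vec D^\star)-L_3^{\mathrm{pow}}\epsilon\ge \pi_3^g(\alpha_b')-\eta-L_3^{\mathrm{pow}}\epsilon$. This step is where the rule $\vec D^\star$ is held fixed while only the integrating density changes, which is exactly the regime in which those lemmas apply.

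The third step chains these bounds into the conditional form of Theorem~\ref{thm:plugin_power}, namely $\mathbb E[\mathrm{Power}_g^{(b)}\mid\mathcal G]\ge \pi_3^{\widehat g}(\alpha_{\mathrm{blk}}^{(b)})-L_3^{\mathrm{pow}}\epsilon$. Doing so yields
\[
\mathbb E[\mathrm{Power}_g^{(b)}\mid\mathcal G]\;\ge\;\pi_3^g(\alpha_b')-2L_3^{\mathrm{pow}}\epsilon .
\]
Taking expectations, averaging over $b\in\mathcal T$ with weight $1/B_{\mathcal T}$, and letting $\eta\to0$ then gives \eqref{eq:plugin_power_vs_oracle}. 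The penalty carries no $B_{\mathcal T}$ factor because the two perturbations are applied block by block and are never summed across blocks.

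The main obstacle is conditioning. Theorem~\ref{thm:plugin_power} is stated after averaging over $\widehat g$, but I need its pre-expectation, $\mathcal G$-conditional form. I would rederive it from its proof by applying Lemma~\ref{lem:power_perturbation} to the fixed rule $\vec D^{\hat{\vec\mu}^{(b)}}$ given $\mathcal G$, using that the solver attains $\pi_3^{\widehat g}$ under $\widehat g$. The other issue is that $M$ is random; as in Theorem~\ref{thm:plugin_fwer}, I would read $L_3$ and $L_3^{\mathrm{pow}}$ via a $\mathcal G$-measurable bound $\overline M$, so that the constants can be pulled out of the expectations.
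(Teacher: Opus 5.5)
Your proposal is correct and follows essentially the same route as the paper. Conditional on $\mathcal G$, the paper invokes Lemma~\ref{lem:value_perturbation}, whose proof is exactly your feasibility transfer via Lemma~\ref{lem:fwer_perturbation} followed by your power transfer via Lemma~\ref{lem:power_perturbation}; it then chains this with the conditional bound from the proof of Theorem~\ref{thm:plugin_power}, picking up the second $L_3^{\mathrm{pow}}$ penalty just as you do. Your explicit $\eta$-optimizer and your handling of the random $M$ through a $\mathcal G$-measurable bound are slightly more careful than the paper's write-up, but the argument is the same.
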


The proof is provided in Appendix~\ref{app:proof_plugin_power_oracle}.

\begin{corollary}[Plug-in rates under classical sup-norm bounds on $\widehat g$]
\label{cor:plugin_rates}
Assume the setup of Theorem~\ref{thm:plugin_fwer} and Theorem~\ref{thm:plugin_power}, and let $n_{\mathcal E}$ denote the total number of estimation-fold $p$-values available to fit $\widehat g$. Suppose the density estimator satisfies the expected sup-norm rate
\begin{equation}
\label{eq:plugin_rate_hyp}
\mathbb E\!\left[\|g-\widehat g\|_\infty\right]\;=\;O(r_{n_{\mathcal E}})\qquad\text{as}\qquad n_{\mathcal E}\to\infty,
\end{equation}
for some deterministic sequence $r_{n}\downarrow 0$. Then the plug-in rule $\vec D^{\mathrm{blk},\mathcal T}$ of Algorithm~\ref{alg:plugin_blockwise_k3} obeys, uniformly in $\ell\in\{0,\ldots,K-1\}$,
\begin{align}
\label{eq:plugin_rates_fwer}
\mathrm{FWER}_\ell\!\bigl(\vec D^{\mathrm{blk},\mathcal T}\bigr) &\;\le\;\alpha\;+\;O\!\bigl(B_{\mathcal T}\,r_{n_{\mathcal E}}\bigr),\\
\label{eq:plugin_rates_pow}
\frac{1}{3B_{\mathcal T}}\sum_{b\in\mathcal T}\mathbb E\!\left[\mathrm{Power}_{g}^{(b)}\!\bigl(\widehat{\vec D}^{(b)}\bigr)\right] &\;\ge\;\frac{1}{3B_{\mathcal T}}\sum_{b\in\mathcal T}\mathbb E\!\left[\pi_3^{\widehat g}\!\bigl(\alpha_{\mathrm{blk}}^{(b)}\bigr)\right]\;-\;O\!\bigl(r_{n_{\mathcal E}}\bigr),
\end{align}
with the constants in $O(\cdot)$ depending only on $M:=\max(\|g\|_\infty,\|\widehat g\|_\infty)$. Two canonical instantiations:
\begin{enumerate}
\item[(i)] \emph{Grenander / monotone density estimator.} If $g$ is non-increasing on $[0,1]$ (automatic under Proposition~\ref{prop:canonical_model_verification}(iii)) and $\widehat g$ is the sample-splitting Grenander estimator, then $r_{n_{\mathcal E}}=\bigl(n_{\mathcal E}^{-1}\log n_{\mathcal E}\bigr)^{1/3}$~\citep{groeneboom2014nonparametric}; the FWER excess is $O\!\bigl(B_{\mathcal T}(n_{\mathcal E}^{-1}\log n_{\mathcal E})^{1/3}\bigr)$ and the average-power deficit is $O\!\bigl((n_{\mathcal E}^{-1}\log n_{\mathcal E})^{1/3}\bigr)$.
\item[(ii)] \emph{H\"older-smooth alternatives via kernel density estimation.} If $g$ belongs to a H\"older class of smoothness $s\ge 1$ on $[0,1]$ and $\widehat g$ is a bandwidth-optimal kernel density estimator fit on the estimation fold, then $r_{n_{\mathcal E}}=\bigl(n_{\mathcal E}^{-1}\log n_{\mathcal E}\bigr)^{s/(2s+1)}$~\citep{tsybakov2009book}; the FWER excess is $O\!\bigl(B_{\mathcal T}(n_{\mathcal E}^{-1}\log n_{\mathcal E})^{s/(2s+1)}\bigr)$ and the average-power deficit is $O\!\bigl((n_{\mathcal E}^{-1}\log n_{\mathcal E})^{s/(2s+1)}\bigr)$.
\end{enumerate}
In either regime, the plug-in procedure is asymptotically $\alpha$-level exact and asymptotically matches the $\widehat g$-oracle value $\pi_3^{\widehat g}(\alpha_{\mathrm{blk}}^{(b)})$ in per-hypothesis power whenever $B_{\mathcal T}\,r_{n_{\mathcal E}}\to 0$.
\end{corollary}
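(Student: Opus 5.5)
The statement is Corollary~\ref{cor:plugin_rates}. The plan is to get it as a direct consequence of Theorems~\ref{thm:plugin_fwer} and~\ref{thm:plugin_power}. The only new ingredients are the assumed rate \eqref{eq:plugin_rate_hyp} and the cited sup-norm rates for the two estimators.

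\emph{FWER part.} Start from \eqref{eq:plugin_fwer_bound}, which holds for every $\ell\in\{0,\ldots,K-1\}$ with the same right-hand side, so the bound is automatically uniform in $\ell$. The constant $L_3=2M$ depends only on $M$. If $\widehat g$ is random, I will read $L_3\,\mathbb E\|g-\widehat g\|_\infty$ as $2\,\mathbb E[M\|g-\widehat g\|_\infty]$, as in Theorem~\ref{thm:plugin_fwer}. I will then invoke the $\mathcal G$-measurable uniform bound $\|\widehat g\|_\infty\le\overline M$ (automatic for the Grenander estimator of bounded $g$, and assumed for the KDE) to replace $M$ by the deterministic constant $\max(\|g\|_\infty,\overline M)$. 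Substituting \eqref{eq:plugin_rate_hyp} then gives \eqref{eq:plugin_rates_fwer}.

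\emph{Power part.} Dividing \eqref{eq:plugin_power_bound} by $3$ (equivalently, dividing \eqref{eq:plugin_power_bound_sum} by $3B_{\mathcal T}$) gives the normalization in \eqref{eq:plugin_rates_pow}. The deficit is $\tfrac13 L_3^{\mathrm{pow}}\,\mathbb E\|g-\widehat g\|_\infty$, with $L_3^{\mathrm{pow}}=3M^2$. Bounding $M$ as above and substituting $O(r_{n_{\mathcal E}})$ gives the deficit. It has no $B_{\mathcal T}$ factor because Theorem~\ref{thm:plugin_power} already averages over blocks.

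\emph{Instantiations.}
\begin{enumerate}
\item[(i)] Proposition~\ref{prop:canonical_model_verification}(iii) gives that $g$ is non-increasing. I will cite the sup-norm rate $(n^{-1}\log n)^{1/3}$ of~\citet{groeneboom2014nonparametric} for the Grenander estimator, restricted to an interior subinterval or under the regularity of Assumptions~\ref{as:assumption3}--\ref{as:assumption5}.
\item[(ii)] I will cite the standard $(n^{-1}\log n)^{s/(2s+1)}$ sup-norm rate for a bandwidth-optimal kernel density estimator on a H\"older-$s$ class~\citep{tsybakov2009book}.
\end{enumerate}
In both cases the estimator is fit on the estimation fold only. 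Corollary~\ref{cor:plugin_sample_splitting} then supplies condition~(i) of Theorem~\ref{thm:plugin_fwer}, so both theorems apply. The asymptotic exactness claim follows by letting $B_{\mathcal T}r_{n_{\mathcal E}}\to0$, which forces both excess terms to $0$ (the power term trivially, since $r_n\to0$).

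\emph{Main obstacle.} The delicate step is converting the expected sup-norm error into a bound with constants depending only on $M$ when $M$ is random. The concern is that $\mathbb E[M\|g-\widehat g\|_\infty]$ and especially $\mathbb E[M^2\|g-\widehat g\|_\infty]$ are not controlled by $\mathbb E\|g-\widehat g\|_\infty$ alone.

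I will first try the almost-sure bound $\|\widehat g\|_\infty\le\overline M$. For Grenander, $\widehat g(0^+)$ can blow up, so I will use a truncated Grenander estimator, capping $\widehat g$ at a level $c_5'$ above $c_5$. This does not worsen the sup-norm error by more than a constant, because $g\le c_5$ by Assumption~\ref{as:assumption5}. If that fails, the fallback is Cauchy--Schwarz, $\mathbb E[M^2\|g-\widehat g\|_\infty]\le(\mathbb E M^4)^{1/2}(\mathbb E\|g-\widehat g\|_\infty^2)^{1/2}$, which requires a second-moment version of \eqref{eq:plugin_rate_hyp}. For the cited estimators that version holds at the same rate.
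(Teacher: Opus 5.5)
Your derivation of \eqref{eq:plugin_rates_fwer}--\eqref{eq:plugin_rates_pow} from \eqref{eq:plugin_fwer_bound} and \eqref{eq:plugin_power_bound} is the same one-line substitution the paper uses. It is fine for any estimator that satisfies \eqref{eq:plugin_rate_hyp} in the sup norm over all of $[0,1]$ and obeys an almost-sure bound $\|\widehat g\|_\infty\le\overline M$. One minor slip: dividing \eqref{eq:plugin_power_bound_sum} by $3B_{\mathcal T}$ gives back \eqref{eq:plugin_power_bound}, not its one-third rescaling, though the rate is unaffected.

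The genuine gap is the step you flag as the main obstacle. None of your remedies works for instantiation (i), and \eqref{eq:plugin_rate_hyp} is in fact false for the Grenander estimator in the global sup norm. The paper's proof only cites the rate and absorbs the random $M$ into $O(\cdot)$, so it does not settle this either. Write $\delta:=\|g-\widehat g\|_\infty$. Three things go wrong.

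\emph{(a) The bound $\overline M$ in your first paragraph is not automatic for Grenander.} We have $\widehat g(0+)=\max_k (k/n)/X_{(k)}\ge 1/(nX_{(1)})$, and $X_{(1)}$ has a density bounded away from zero near $0$. Hence $\mathbb E\,\widehat g(0+)=\infty$ for every $n$, so $\mathbb E\,\delta=\infty$, and your Cauchy--Schwarz fallback faces $\mathbb E M^4=\infty$.

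\emph{(b) Capping at $c_5'$ bounds $M$ and never increases $|g-\widehat g|$ pointwise, but it yields no rate.} The least concave majorant of $F_n$ is flat on $[X_{(n)},1]$, so $\widehat g=0$ on $(X_{(n)},1]$. Assumption~\ref{as:assumption4}, which you invoke, gives $g\ge c_4>0$ there, so $\delta\ge c_4$ almost surely for every $n$, capped or not. At the left end the estimator is inconsistent: $\widehat g(0+)/g(0)$ converges in law to $\sup_{t>0}N(t)/t$ for a unit Poisson process $N$, a variable with unbounded support (Woodroofe and Sun, 1993). So any cap strictly above $g(0)$ leaves an $O(1)$ error at $0$ with non-vanishing probability.

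\emph{(c) The $(n^{-1}\log n)^{1/3}$ rate is an interior result.} It controls $\sup_{[\epsilon,1-\epsilon]}$ (Durot, Kulikov and Lopuh\"aa, 2012) and cannot be fed into Lemmas~\ref{lem:fwer_perturbation}--\ref{lem:power_perturbation}. Their telescoping step \eqref{eq:fwer_pert_step2} needs $|g(u_j)-\widehat g(u_j)|\le\delta$ at every $u\in Q$. The strip $\{u_1\le\epsilon\}$ contains the small-$p$-value rejection regions that drive both FWER and power. It cannot be discarded without an $L^1$-type estimate, which the sup-norm lemmas do not provide.

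The kernel case (ii) has the analogous problem. With $g(0),g(1)\ge c_4>0$, an uncorrected kernel estimator has $O(1)$ bias at the endpoints, so a global sup-norm rate requires boundary-corrected kernels.

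The cleanest repair is to abandon the sup norm. Fix any event $A$ and any within-block configuration with $l$ alternative coordinates. Subadditivity of total variation over product measures gives
\[
\bigl|\mathbb Q^{(b)}_{h_\ell,g}(A)-\mathbb Q^{(b)}_{h_\ell,\widehat g}(A)\bigr|\le l\,\mathrm{TV}(g,\widehat g)=\tfrac{l}{2}\,\|g-\widehat g\|_{L^1},
\qquad
\bigl|\Pi_3^{g}(\vec D)-\Pi_3^{\widehat g}(\vec D)\bigr|\le\tfrac{3}{2}\,\|g-\widehat g\|_{L^1},
\]
the latter because $\tfrac13\sum_i D_i\in[0,1]$. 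No $M$ appears, so the random-$M$ issue vanishes. The Grenander estimator is a bona fide density and satisfies $\mathbb E\|g-\widehat g\|_{L^1}=O(n_{\mathcal E}^{-1/3})$, boundaries included, with constants depending only on $\|g\|_\infty$ (Groeneboom, 1985; Birg\'e, 1989). Running the Boole and averaging steps of Theorems~\ref{thm:plugin_fwer}--\ref{thm:plugin_power} with these bounds gives (i) with $r_n=n^{-1/3}$.

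If you keep the sup norm for the kernel case, use the deterministic bound $M\le\|g\|_\infty+\delta$. It gives $\mathbb E[M\delta]\le\|g\|_\infty\mathbb E\delta+\mathbb E\delta^2$ and $\mathbb E[M^2\delta]\le 2\|g\|_\infty^2\mathbb E\delta+2\mathbb E\delta^3$. This needs only higher moments of $\delta$ at the same rate, rather than $\mathbb E M^4$.
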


The proof is provided in Appendix~\ref{app:proof_plugin_rates}.

\subsection{Bisection algorithm for the equalized-marginal allocation}
\label{app:bisection_allocation_statement}

\begin{prop}[Bisection algorithm for the equalized-marginal allocation]
\label{prop:bisection_allocation}
Under Assumption~\ref{assump:concavity}, define the right-inverse of $\partial_+\pi_3^{(b)}$ on $[0,\alpha]$ by
\begin{equation}
\label{eq:right_inverse_def}
g_b(\mu)\;:=\;\sup\{\alpha'\in[0,\alpha]:\partial_+\pi_3^{(b)}(\alpha')\ge\mu\},\qquad \mu\ge 0,\quad b\in[B],
\end{equation}
with the convention $\sup\emptyset=0$. Each $g_b$ is non-increasing and right-continuous on $[0,\infty)$, with $g_b(0)=\alpha$ and $g_b(\mu)\to 0$ as $\mu\to\infty$. The function $f(\mu):=\sum_{b=1}^{B}g_b(\mu)$ is non-increasing, $f(0)=B\alpha\ge\alpha$, and $f(\mu)\to 0$, so the equation $f(\mu^{*})=\alpha$ has a (possibly non-unique) solution $\mu^{*}\in[0,\overline\mu]$ with $\overline\mu:=\max_{b}\partial_+\pi_3^{(b)}(0^{+})<\infty$ (finiteness of $\overline\mu$ is automatic under A6: bounded $\|g\|_\infty\le M<\infty$ in Proposition~\ref{prop:canonical_model_verification}(iv) makes each $\pi_3^{(b)}$ Lipschitz on $[0,\alpha]$). Setting $\alpha_b^{*}:=g_b(\mu^{*})$ (with any deterministic tiebreak when $g_b$ is multivalued at $\mu^{*}$, scaled to satisfy $\sum_b\alpha_b^{*}=\alpha$) yields a maximizer of \eqref{eq:hetero_alloc_max}.

A bisection on $\mu\in[0,\overline\mu]$ converges to $\mu^{*}$ to absolute precision $\varepsilon>0$ in $\lceil\log_2(\overline\mu/\varepsilon)\rceil$ iterations, each of which requires $O(B)$ derivative-inversion queries $g_b(\cdot)$. The total complexity is therefore $O(B\log(1/\varepsilon))$, plus the (block-independent) cost of each $g_b$ query, which is itself $O(\log(1/\varepsilon))$ when $\partial_+\pi_3^{(b)}$ is accessed by bisection through evaluations of $\pi_3^{(b)}$ via the $K=3$ optimizer of~\citet{dubey25}.
\end{prop}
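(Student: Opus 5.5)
The plan is to reduce the claim to standard one-dimensional monotone-inverse facts and then to the KKT characterization already given in Theorem~\ref{thm:global_opt_sep_allocation}.

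\textbf{Step 1: properties of $g_b$.} By Assumption~\ref{assump:concavity} each $\pi_3^{(b)}$ is concave and non-decreasing. Hence $\partial_+\pi_3^{(b)}$ exists, is non-negative and non-increasing on $[0,\alpha)$, and is right-continuous.
\begin{itemize}
\item Because $\partial_+\pi_3^{(b)}$ is non-increasing, the set $\{\alpha':\partial_+\pi_3^{(b)}(\alpha')\ge\mu\}$ is an initial segment of $[0,\alpha]$ that shrinks as $\mu$ grows. So $g_b$ is non-increasing.
\item $g_b(0)=\alpha$, since every right derivative is $\ge0$.
\item $g_b(\mu)=0$ once $\mu>\partial_+\pi_3^{(b)}(0^+)$. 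This gives $g_b(\mu)\to0$, provided $\overline\mu<\infty$.
\item Right-continuity of $g_b$ follows from the right-continuity of the right derivative of a concave function. I will verify this directly from the sup definition: if $\mu_n\downarrow\mu$, the segments increase to the segment for $\mu$ up to an endpoint.
\end{itemize}
For finiteness of $\overline\mu$, I will use the bound $\|g\|_\infty\le M$ of Proposition~\ref{prop:canonical_model_verification}(iv). I expect it to give Lipschitz continuity of $\pi_3^{(b)}$ by a perturbation argument: scaling a feasible rule at level $\alpha'$ gives power change at most a constant times the level change. So I will take that Lipschitz constant as given and bound $\partial_+$ by it.

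\textbf{Step 2: existence of $\mu^*$.} The function $f=\sum_b g_b$ is non-increasing and right-continuous, with $f(0)=B\alpha\ge\alpha$ and $f(\mu)=0$ for $\mu>\overline\mu$. Take $\mu^*:=\sup\{\mu:f(\mu)\ge\alpha\}\in[0,\overline\mu]$. Then:
\begin{itemize}
\item By right-continuity, $f(\mu^*)\le\alpha$.
\item By monotonicity, the left limit satisfies $f(\mu^*-)\ge\alpha$.
\end{itemize}
Equality may fail at a jump. This is exactly the tie case, where some $\partial_+\pi_3^{(b)}$ is flat at level $\mu^*$ on an interval. There I choose, block by block, values $\alpha_b^*\in[g_b(\mu^*),g_b(\mu^*-)]$ summing to $\alpha$; the intermediate value property makes this possible, and it is the "deterministic tiebreak". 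On each such interval $\pi_3^{(b)}$ is affine with slope $\mu^*$ (possibly up to the endpoint), so the KKT condition still holds.

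\textbf{Step 3: optimality.} With $\alpha_b^*$ chosen as in Step 2, the KKT condition \eqref{eq:KKT_condition} holds:
\begin{itemize}
\item Interior blocks satisfy $\partial_+\pi_3^{(b)}(\alpha_b^*)\le\mu^*\le\partial_-\pi_3^{(b)}(\alpha_b^*)$.
\item Zero blocks satisfy $\partial_+\pi_3^{(b)}(0^+)\le\mu^*$.
\end{itemize}
Sufficiency then follows by concavity, which is more robust than citing the necessity direction of Theorem~\ref{thm:global_opt_sep_allocation}. For any feasible allocation $\vec\alpha$, supergradients give
\[
\pi_3^{(b)}(\alpha_b)\le\pi_3^{(b)}(\alpha_b^*)+\mu^*(\alpha_b-\alpha_b^*).
\]
Summing over $b$ gives $\sum_b\pi_3^{(b)}(\alpha_b)\le\sum_b\pi_3^{(b)}(\alpha_b^*)+\mu^*\bigl(\sum_b\alpha_b-\alpha\bigr)\le\sum_b\pi_3^{(b)}(\alpha_b^*)$.

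\textbf{Step 4: complexity.} Bisection maintains a bracket $[L,U]$ with $f(L)\ge\alpha$ and $f(U)\le\alpha$, starting from $[0,\overline\mu]$. Halving it reaches width $\varepsilon$ after $\lceil\log_2(\overline\mu/\varepsilon)\rceil$ steps. Each step evaluates $f$, which costs $B$ calls to the $g_b$. Each $g_b(\mu)$ can itself be computed by an inner bisection on $\alpha'\in[0,\alpha]$, exploiting the monotonicity of $\partial_+\pi_3^{(b)}$, with $O(\log(1/\varepsilon))$ evaluations. Those derivative evaluations are finite differences of $\pi_3^{(b)}$ obtained from the \citet{dubey25} solver.

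\textbf{Main obstacle.} I expect the delicate step to be the jump or tie case in Step 2, together with justifying $\overline\mu<\infty$. For the tie case, the supergradient form of Step 3 handles optimality cleanly once the splitting values are chosen within $[g_b(\mu^*),g_b(\mu^*-)]$. I will also note that "scaled to satisfy $\sum\alpha_b^*=\alpha$" should be read as this interpolation rather than a multiplicative rescaling.
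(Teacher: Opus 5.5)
Your Step~2 fails in exactly the tie case you single out, because the right-continuity of $g_b$ claimed in Step~1 (and in the statement itself) is false. Since $\partial_+\pi_3^{(b)}$ is non-increasing and right-continuous, its generalized inverse $g_b(\mu)=\sup\{\alpha':\partial_+\pi_3^{(b)}(\alpha')\ge\mu\}$ is \emph{left}-continuous, just as a quantile function is. For $\mu_n\downarrow\mu$ the segments $\{\partial_+\pi_3^{(b)}\ge\mu_n\}$ increase only to $\{\partial_+\pi_3^{(b)}>\mu\}$. That set misses the whole flat piece $\{\partial_+\pi_3^{(b)}=\mu\}$, not merely an endpoint.

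Concretely, take $B=2$ identical blocks with $\pi_3^{(b)}(x)=\min\{x,(x+0.1)/2\}$ and $\alpha=0.5$. Then $g_b=0.5$ on $[0,\tfrac12]$ and $g_b=0.1$ on $(\tfrac12,1]$. So $\mu^*=\sup\{\mu:f(\mu)\ge\alpha\}=\tfrac12$ has $f(\mu^*)=1>\alpha$, contradicting your bullet $f(\mu^*)\le\alpha$. Your tiebreak interval $[g_b(\mu^*),g_b(\mu^*-)]=\{0.5\}$ is degenerate, and the resulting allocation $(0.5,0.5)$ violates the budget.

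The repair is short. Left-continuity gives $f(\mu^*)\ge\alpha\ge f(\mu^*+)$, so choose $\alpha_b^*\in[g_b(\mu^*+),g_b(\mu^*)]$ with $\sum_b\alpha_b^*=\alpha$ (here $0.25$ each). On that interval, the sup definition and right-continuity of $\partial_+\pi_3^{(b)}$ give $\partial_+\pi_3^{(b)}(\alpha_b^*)\le\mu^*\le\partial_-\pi_3^{(b)}(\alpha_b^*)$, and $\partial_+\pi_3^{(b)}(0)\le\mu^*$ when $\alpha_b^*=0$. Your Step~3 then goes through verbatim.

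With that correction, your argument is sounder than the paper's three-line proof. The paper invokes Theorem~\ref{thm:global_opt_sep_allocation}, which is a necessary condition satisfied by maximizers, as if it were sufficient. It also asserts $\partial_+\pi_3^{(b)}(\alpha_b^*)=\mu^*$ on active blocks, which fails at kinks. Your supergradient inequality with the two-sided condition is the right sufficiency certificate. Your reading of ``scaled'' as interpolation along flat pieces is also correct; multiplicative rescaling can break the equalized-marginal condition when only some blocks sit on a flat piece.

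One smaller point: your ``scaling'' route to $\overline\mu<\infty$ is circular. Mixing a level-$\alpha$ rule with the reject-nothing rule bounds increments only by $\pi_3^{(b)}(\alpha)/\alpha$, and the supremum of that ratio is $\partial_+\pi_3^{(b)}(0)$ itself. Use the likelihood ratio instead. The complete-alternative block density $6g(u_1)g(u_2)g(u_3)$ on $Q$ is at most $M^3$ times the complete-null density $6$. So every rule satisfies $\Pi_3\le\mathbb P_{\mathrm{alt}}(\text{some rejection})\le M^3\,\mathrm{FWER}_0\le M^3\alpha$, whence $\partial_+\pi_3^{(b)}(0)\le M^3$ and $\overline\mu\le M^3$.
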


The proof is provided in Appendix~\ref{app:proof_bisection_allocation}.

\subsection{\v{S}id\'ak refinements and the containment sandwich}
\label{app:sidak_and_containment}

\begin{remark}[Bisection under the \v{S}id\'ak constraint]
\label{rem:sidak_bisection}
The bisection of Proposition~\ref{prop:bisection_allocation} extends to the \v{S}id\'ak-constrained problem \eqref{eq:sidak_alloc_max} of Theorem~\ref{thm:global_opt_sep_allocation_sidak} without additional cost. Replacing the Bonferroni right-inverse \eqref{eq:right_inverse_def} by
\begin{equation}
\label{eq:right_inverse_sidak}
g_b^{\mathrm{ind}}(\mu)\;:=\;\sup\bigl\{\alpha'\in[0,1]:(1-\alpha')\,\partial_+\pi_3^{(b)}(\alpha')\ge\mu\bigr\},\qquad \mu\ge 0,\;b\in[B],
\end{equation}
and replacing the sum $f(\mu)=\sum_b g_b(\mu)$ by $f^{\mathrm{ind}}(\mu):=\sum_b-\log(1-g_b^{\mathrm{ind}}(\mu))$, the \v{S}id\'ak shadow price $\mu^{*}$ is the unique solution to $f^{\mathrm{ind}}(\mu^{*})=-\log(1-\alpha)$; each $g_b^{\mathrm{ind}}$ is non-increasing and right-continuous on $[0,\infty)$ (product of the strictly decreasing $(1-\alpha)$ and the non-increasing $\partial_+\pi_3^{(b)}$ yields a non-increasing integrand), and bisection on $\mu$ converges in $\lceil\log_2(\overline\mu^{\mathrm{ind}}/\varepsilon)\rceil$ iterations with $\overline\mu^{\mathrm{ind}}:=\max_b\partial_+\pi_3^{(b)}(0^{+})$. The total complexity is $O(B\log(1/\varepsilon))$, identical to the Bonferroni case.
\end{remark}

\begin{corollary}[Global optimality within $\mathfrak D_{\mathrm{sep}}^{\mathrm{ind}}$ under cross-block independence]
\label{cor:sidak_blockwise_optimality}
Suppose cross-block independence holds as in Theorem~\ref{thm:global_opt_sep_allocation_sidak} and Assumption~\ref{assump:concavity} is in force. Then
\begin{equation}
\label{eq:sidak_sup_value}
\sup_{D^{\mathrm{sep}}\in\mathfrak D_{\mathrm{sep}}^{\mathrm{ind}}}\Pi_K\!\bigl(D^{\mathrm{sep}}\bigr)
\;=\;\frac{1}{B}\sum_{b=1}^{B}\pi_3^{(b)}(\alpha_b^{*,\mathrm{ind}})
\;=:\;\Pi_K^{*,\mathrm{ind}}
\;\ge\;\Pi_K^{*},
\end{equation}
where $\vec\alpha^{*,\mathrm{ind}}$ is the weighted equalized-marginal allocation of Theorem~\ref{thm:global_opt_sep_allocation_sidak}, computable via Remark~\ref{rem:sidak_bisection} in $O(B\log(1/\varepsilon))$ operations; the supremum is attained by the procedure that combines $\vec\alpha^{*,\mathrm{ind}}$ with the per-block $K=3$ strong-FWER optimizer of~\citet{dubey25} at level $\alpha_b^{*,\mathrm{ind}}$ applied to $\mathbb Q^{(b)}_{\!\vec h_K}$. In the homogeneous special case (Assumption~\ref{assump:exchangeable_blocks}), $\Pi_K^{*,\mathrm{ind}}=\pi_3(1-(1-\alpha)^{1/B})$ and the optimal allocation is the uniform \v{S}id\'ak split $\alpha_{\mathrm{blk}}^{(b)}=1-(1-\alpha)^{1/B}$, strictly dominating the Bonferroni split $\alpha/B$ for $B>1$ and any $\alpha\in(0,1)$ on which $\pi_3$ is strictly increasing.
\end{corollary}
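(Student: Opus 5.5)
The corollary combines three earlier results. Feasibility under cross-block independence comes from Theorem~\ref{thm:global_opt_sep_allocation_sidak}. The per-block ceiling comes from Theorem~\ref{thm:global_opt_sep_within_block}. The allocation characterization is the weighted KKT condition \eqref{eq:sidak_KKT}. I would establish the equality in \eqref{eq:sidak_sup_value} by showing two inequalities, and then read off $\Pi_K^{*,\mathrm{ind}}\ge\Pi_K^{*}$ from class containment.

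\emph{Upper bound.} Fix any $D^{\mathrm{sep}}\in\mathfrak D_{\mathrm{sep}}^{\mathrm{ind}}$ with allocation $\vec\alpha$ satisfying $\prod_b(1-\alpha_b)\ge 1-\alpha$. The within-block argument of Theorem~\ref{thm:global_opt_sep_within_block} never uses the budget constraint. It uses only linearity of $\Pi_K$ over blocks, namely that under $\vec h_K$ the global average power is $\frac1B\sum_b\Pi_3^{\mathbb Q^{(b)}_{\vec h_K}}(\vec D^{(b)})$, together with the feasibility of each $\vec D^{(b)}$ at level $\alpha_b$. So the same decomposition gives $\Pi_K(D^{\mathrm{sep}})\le\frac1B\sum_b\pi_3^{(b)}(\alpha_b)$. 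Taking the supremum over \v{S}id\'ak-feasible $\vec\alpha$ and invoking attainment in Theorem~\ref{thm:global_opt_sep_allocation_sidak} bounds the supremum by $M^{\mathrm{ind}}=\frac1B\sum_b\pi_3^{(b)}(\alpha_b^{*,\mathrm{ind}})$.

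\emph{Attainment.} Take $\vec\alpha^{*,\mathrm{ind}}$ from Theorem~\ref{thm:global_opt_sep_allocation_sidak}, and in each block use the $K=3$ optimizer of \citet{dubey25} at level $\alpha_b^{*,\mathrm{ind}}$.
\begin{itemize}
\item By the attainment remark after \eqref{eq:pi3_value_def}, valid under the dubey25 regularity conditions (Remark~\ref{cor:k3_optimizer_instantiation}), each block attains $\pi_3^{(b)}(\alpha_b^{*,\mathrm{ind}})$.
\item Membership in $\mathfrak D_{\mathrm{sep}}^{\mathrm{ind}}$ holds by construction.
\item Global feasibility for \eqref{eq:MHTK} is the feasibility clause of Theorem~\ref{thm:global_opt_sep_allocation_sidak}. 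Its underlying step is $\mathbb P(V>0)=1-\prod_b(1-\mathbb P(V^{(b)}>0))\le 1-\prod_b(1-\alpha_b)\le\alpha$, using cross-block independence.
\end{itemize}
The computability claim is Remark~\ref{rem:sidak_bisection}.

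\emph{Comparison with $\Pi_K^*$ and the homogeneous case.} Every Bonferroni allocation is \v{S}id\'ak-feasible, since $\prod_b(1-\alpha_b)\ge 1-\sum_b\alpha_b$. Hence $\mathfrak D_{\mathrm{sep}}\subseteq\mathfrak D_{\mathrm{sep}}^{\mathrm{ind}}$, and combined with Corollary~\ref{cor:blockwise_sep_optimality} this gives $\Pi_K^{*,\mathrm{ind}}\ge\Pi_K^*$. Under Assumption~\ref{assump:exchangeable_blocks}, \eqref{eq:sidak_uniform_is_optimal} yields the uniform split $1-(1-\alpha)^{1/B}$ and the value $\pi_3(1-(1-\alpha)^{1/B})$. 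Strict dominance follows from $1-(1-\alpha)^{1/B}>\alpha/B$ for $B>1$ and $\alpha\in(0,1)$: this holds because $x\mapsto(1-x)^{1/B}$ is strictly concave, so $(1-\alpha)^{1/B}<1-\alpha/B$. Strict monotonicity of $\pi_3$ then converts the strict level gap into a strict power gap.

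The main obstacle is the attainment step. Achieving the supremum needs the dubey25 solver to actually achieve $\pi_3^{(b)}$ at possibly non-uniform levels $\alpha_b^{*,\mathrm{ind}}$, including boundary cases where some $\alpha_b^{*,\mathrm{ind}}=0$. At such a boundary the trivial rule $D\equiv 0$ attains $\pi_3^{(b)}(0)$, so I would handle those blocks separately.
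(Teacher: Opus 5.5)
Your proposal is correct and follows essentially the same route as the paper's proof. You apply the within-block ceiling of Theorem~\ref{thm:global_opt_sep_within_block}, which never uses the shape of the budget constraint, then maximize under the \v{S}id\'ak constraint via Theorem~\ref{thm:global_opt_sep_allocation_sidak}, attain the bound with the per-block optimizer, and obtain $\Pi_K^{*,\mathrm{ind}}\ge\Pi_K^{*}$ from $\mathfrak D_{\mathrm{sep}}\subseteq\mathfrak D_{\mathrm{sep}}^{\mathrm{ind}}$. You also fill in details the paper leaves implicit: the strict inequality $1-(1-\alpha)^{1/B}>\alpha/B$, blocks with $\alpha_b^{*,\mathrm{ind}}=0$, and the observation that attainment tacitly requires the dubey25 regularity conditions.
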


\begin{proof}
By Theorem~\ref{thm:global_opt_sep_within_block}, every $D^{\mathrm{sep}}\in\mathfrak D_{\mathrm{sep}}^{\mathrm{ind}}$ with allocation $\vec\alpha$ satisfies $\Pi_K(D^{\mathrm{sep}})\le(1/B)\sum_b\pi_3^{(b)}(\alpha_b)$ (the within-block bound uses only per-block strong-FWER feasibility, which is part of the definition of $\mathfrak D_{\mathrm{sep}}^{\mathrm{ind}}$; no use is made of the shape of the allocation constraint). Maximizing the right side under $\prod_b(1-\alpha_b)\ge 1-\alpha$ gives $M^{\mathrm{ind}}$ by Theorem~\ref{thm:global_opt_sep_allocation_sidak}; the procedure described attains this bound, yielding equality. The inequality $\Pi_K^{*,\mathrm{ind}}\ge\Pi_K^{*}$ follows because $\mathfrak D_{\mathrm{sep}}\subseteq\mathfrak D_{\mathrm{sep}}^{\mathrm{ind}}$. The homogeneous reduction is the corresponding case of Theorem~\ref{thm:global_opt_sep_allocation_sidak}.
\end{proof}

\begin{prop}[Containment sandwich on the unrestricted strong-FWER optimum]
\label{prop:containment}
Suppose the global independence model \eqref{eq:independence} holds with within-block iid alternative density $g$ satisfying the model class \eqref{eq:pvalue_model} (in particular $g$ is non-increasing on $[0,1]$, so $G(\alpha):=\int_{0}^{\alpha}g(u)\,du$ is concave on $[0,1]$). Define the unrestricted strong-FWER optimum
\begin{equation}
\label{eq:Pi_unr_def}
\begin{aligned}
\Pi_K^{\mathrm{unr}}(\alpha)
\;:=\;\sup\bigl\{\Pi_K(\vec D)\,:\,&\vec D:\mathcal X\to\{0,1\}^K\text{ measurable},\\
&\mathrm{FWER}_\ell(\vec D)\le\alpha\;\,\forall\,\vec h_\ell\text{ with }|\vec h_\ell|=\ell,\ \ell=0,\ldots,K-1\bigr\},
\end{aligned}
\end{equation}
where the strong-FWER constraint is imposed under \emph{every} configuration $\vec h_\ell\in\{0,1\}^K$ with exactly $\ell$ alternatives (worst-case over the assignment of nulls to coordinates). Then, under Assumption~\ref{assump:exchangeable_blocks} (cross-block homogeneity, automatic under \eqref{eq:independence} with a common within-block alternative density),
\begin{equation}
\label{eq:sandwich}
G\!\bigl(1-(1-\alpha)^{1/K}\bigr)
\;\le\;\Pi_K^{*,\mathrm{ind}}(\alpha)
\;=\;\pi_3\!\bigl(1-(1-\alpha)^{1/B}\bigr)
\;\le\;\Pi_K^{\mathrm{unr}}(\alpha)
\;\le\;G(\alpha),
\end{equation}
with $\Pi_K^{*,\mathrm{ind}}(\alpha)$ as in Corollary~\ref{cor:sidak_blockwise_optimality}. The lower bound is the per-hypothesis power of global \v{S}id\'ak thresholding at level $\alpha$; the upper bound is the Neyman--Pearson per-marginal ceiling at level $\alpha$. The middle equality is Corollary~\ref{cor:sidak_blockwise_optimality} in the homogeneous case.
\end{prop}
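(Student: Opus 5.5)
I would prove the four-term chain \eqref{eq:sandwich} one link at a time, from right to left.

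\emph{Upper bound $\Pi_K^{\mathrm{unr}}(\alpha)\le G(\alpha)$.} Fix a feasible $\vec D$ and a coordinate $k$. Take the configuration $\vec h$ with $h_k=0$ and all other coordinates alternatives; it has $\ell=K-1$ alternatives and is therefore among the constrained configurations in \eqref{eq:Pi_unr_def}. Then $\mathbb P_{\vec h}(D_k=1)\le\mathrm{FWER}_{K-1}\le\alpha$. Under \eqref{eq:independence}, conditional on $X_{-k}$ the map $x_k\mapsto D_k$ is a level-$\alpha_k(X_{-k})$ test of $\mathrm{Unif}$ versus $g$, and $\mathbb E[\alpha_k(X_{-k})]\le\alpha$. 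Here $X_{-k}$ has the same law under $\vec h$ and under $\vec h_K$, because it consists of alternatives in both. By Neyman--Pearson with monotone $g$, the conditional power is at most $G(\alpha_k(X_{-k}))$. Concavity of $G$ and Jensen then give $\mathbb P_{\vec h_K}(D_k=1)\le G(\alpha)$. Averaging over $k$ yields the bound.

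\emph{Middle inequality $\pi_3(1-(1-\alpha)^{1/B})\le\Pi_K^{\mathrm{unr}}(\alpha)$.} The uniform \v{S}id\'ak blockwise procedure of Corollary~\ref{cor:sidak_blockwise_optimality} is feasible for \eqref{eq:MHTK} by Theorem~\ref{thm:global_opt_sep_allocation_sidak}. The one subtlety is that \eqref{eq:Pi_unr_def} requires control under every configuration, not only the canonical $\vec h_\ell$. I would handle this as follows. For an arbitrary $\vec h$, the event $\{V>0\}$ is the union over blocks of the within-block false-rejection events. Cross-block independence gives $\mathbb P(V=0)=\prod_b(1-p_b)$. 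Each $p_b$ is the within-block FWER under some configuration $\eta\in\{0,1\}^3$ with at least one null, and by Remark~\ref{cor:indep_to_productlaw}(ii) (exchangeability plus a symmetric rule) this is at most $\alpha_b$. Hence $\mathbb P(V>0)\le 1-\prod_b(1-\alpha_b)=\alpha$. The equality $\Pi_K^{*,\mathrm{ind}}(\alpha)=\pi_3(1-(1-\alpha)^{1/B})$ is the homogeneous case of Corollary~\ref{cor:sidak_blockwise_optimality}.

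\emph{Lower bound $G(1-(1-\alpha)^{1/K})\le\pi_3(1-(1-\alpha)^{1/B})$.} Let $a=1-(1-\alpha)^{1/B}$ and $c=1-(1-\alpha)^{1/K}$, so that $(1-c)^3=1-a$. Consider the within-block rule that rejects each $u_i\le c$ individually. Under any within-block configuration with $m\ge1$ nulls, its FWER is $1-(1-c)^m\le 1-(1-c)^3=a$, so it is feasible at level $a$ for \eqref{eq:pi3_value_def}. Its average power is $G(c)$, since alternatives have density $g$. Therefore $\pi_3(a)\ge G(c)$. This same rule, applied across all blocks, is the global \v{S}id\'ak rule, which justifies the stated interpretation of the lower bound.

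I expect the main obstacle to be the upper bound's conditioning step. It needs the law of $X_{-k}$ to be identical under $\vec h$ and $\vec h_K$, which uses full independence, and it needs Jensen over a random level; non-randomized measurability of $D_k$ is not an issue. A secondary point is checking that the middle step truly covers non-canonical configurations; the exchangeability reduction handles this.
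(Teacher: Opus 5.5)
Your proposal is correct and follows the paper's proof essentially step for step: the within-block \v{S}id\'ak rule with $c=1-(1-\alpha)^{1/K}$ gives the lower bound, feasibility of the \v{S}id\'ak block-separable class gives the middle link, and the one-null-at-$k$ configuration with a conditional Neyman--Pearson bound plus Jensen on the concave $G$ gives the upper bound. Your explicit check that the middle link holds under every configuration, not just the canonical ones, is a small piece of extra care; the paper gets it implicitly, because its \v{S}id\'ak feasibility argument already quantifies over all $h\in\{0,1\}^K$ and $\mathfrak D_{\mathrm{sep}}$ requires control under every within-block $\eta$.
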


\begin{proof}
\emph{Lower bound.} The within-block \v{S}id\'ak rule $\vec D^{\mathrm{Sid}}$ that rejects $u_k\le c$ with $c=1-(1-\alpha_b)^{1/3}$ at block level $\alpha_b=1-(1-\alpha)^{1/B}$ controls strong FWER at level $\alpha_b$ within the block under within-block independence: for every within-block configuration $\eta\in\{0,1\}^3$ with at least one null, $\Pr(\bigcup_{i:\eta_i=0}\{u_i\le c\})\le 1-(1-c)^3=\alpha_b$. It is therefore feasible for the $K=3$ value problem at level $\alpha_b$, so by definition of $\pi_3$, $\pi_3(\alpha_b)\ge\Pi_3^{\mathbb P_{h_3}}(\vec D^{\mathrm{Sid}})=G(c)$. Substituting $c=1-(1-\alpha)^{1/(3B)}=1-(1-\alpha)^{1/K}$ and using Corollary~\ref{cor:sidak_blockwise_optimality} yields $\Pi_K^{*,\mathrm{ind}}(\alpha)=\pi_3(\alpha_b)\ge G(1-(1-\alpha)^{1/K})$. The middle inequality $\Pi_K^{*,\mathrm{ind}}(\alpha)\le\Pi_K^{\mathrm{unr}}(\alpha)$ holds because every $D^{\mathrm{sep}}\in\mathfrak D_{\mathrm{sep}}^{\mathrm{ind}}$ is feasible for \eqref{eq:MHTK} (Theorem~\ref{thm:global_opt_sep_allocation_sidak}, feasibility part).

\emph{Upper bound.} Fix any $\vec D$ feasible for the worst-case version of \eqref{eq:MHTK} (in the sense made explicit in \eqref{eq:Pi_unr_def}) at level $\alpha$ and any $k\in[K]$. Let $\vec h_{(k)}\in\{0,1\}^K$ be the configuration with $(h_{(k)})_k=0$ and $(h_{(k)})_j=1$ for $j\ne k$ (one true null at coordinate $k$, $K-1$ alternatives elsewhere); this configuration has $\ell=K-1$ alternatives, so the strong-FWER constraint at level $\alpha$ binds under it and gives
\[
\Pr_{\vec h_{(k)}}\!\bigl(D_k(\vec X)=1\bigr)
\;\le\;\Pr_{\vec h_{(k)}}\!\Bigl(\bigcup_{j:(h_{(k)})_j=0}\{D_j(\vec X)=1\}\Bigr)
\;=\;\mathrm{FWER}_{K-1}(\vec D;\vec h_{(k)})\;\le\;\alpha.
\]
Under the global independence model \eqref{eq:independence}, the conditional law of $u_k$ given $\vec u_{-k}$ is uniform on $[0,1]$ under $\vec h_{(k)}$ and has density $g$ under $\vec h_K$, while $\vec u_{-k}$ has the same marginal law $g^{\otimes(K-1)}$ under both. Define the conditional Type-I-error function $\beta_k(\vec u_{-k}):=\Pr(D_k(\vec X)=1\mid \vec u_{-k},\,u_k\sim\mathrm{Unif}[0,1])$. Then
\[
\mathbb E_{\vec u_{-k}\sim g^{\otimes(K-1)}}\!\bigl[\beta_k(\vec u_{-k})\bigr]
\;=\;\Pr_{\vec h_{(k)}}\!\bigl(D_k(\vec X)=1\bigr)\;\le\;\alpha.
\]
Conditional on $\vec u_{-k}$, $D_k(\vec X)$ is a (possibly randomized) test of the simple hypothesis $u_k\sim\mathrm{Unif}[0,1]$ versus $u_k\sim g$ at conditional Type I error $\beta_k(\vec u_{-k})$. Since $g$ is non-increasing on $[0,1]$, the likelihood ratio $g(u)/1$ is non-increasing in $u$, so the Neyman--Pearson most powerful test rejects $\{u_k\le \beta_k(\vec u_{-k})\}$ with conditional power $G(\beta_k(\vec u_{-k}))$. Hence
\[
\Pr_{\vec h_K}\!\bigl(D_k(\vec X)=1\mid \vec u_{-k}\bigr)\;\le\;G(\beta_k(\vec u_{-k})).
\]
Marginalizing,
\[
\Pr_{\vec h_K}\!\bigl(D_k(\vec X)=1\bigr)
\;\le\;\mathbb E[G(\beta_k(\vec u_{-k}))]
\;\le\;G\!\bigl(\mathbb E[\beta_k(\vec u_{-k})]\bigr)
\;\le\;G(\alpha),
\]
the second inequality from Jensen applied to the concave $G$, the third from monotonicity of $G$. Averaging over $k\in[K]$ gives $\Pi_K(\vec D)\le G(\alpha)$, and taking the supremum yields $\Pi_K^{\mathrm{unr}}(\alpha)\le G(\alpha)$.
\end{proof}

\begin{remark}[Where Algorithm~\ref{alg:blockwise_k3} sits inside the sandwich]
\label{rem:empirical_containment}
The bound \eqref{eq:sandwich} is informative at both ends: $G(1-(1-\alpha)^{1/K})$ is the per-hypothesis power of global \v{S}id\'ak thresholding (which Algorithm~\ref{alg:blockwise_k3} strictly dominates whenever $\pi_3(\alpha_b)>G(1-(1-\alpha)^{1/K})$, i.e., whenever the $K=3$ strong-FWER optimizer beats \v{S}id\'ak on a triple), while $G(\alpha)$ is the per-marginal Neyman--Pearson ceiling that no strong-FWER procedure can cross. The simulations of Section~\ref{sec:simulations} place Algorithm~\ref{alg:blockwise_k3} clearly above the standard practical strong-FWER baselines (Bonferroni, Holm, Hochberg, Hommel, \v{S}id\'ak step-down) under the truncated-normal model, with the largest separation at moderate signal strengths; the corresponding numerical comparison to the closed-form sandwich bounds is reported alongside each figure. Whether the gap to $\Pi_K^{\mathrm{unr}}(\alpha)$ closes exactly under the model class \eqref{eq:pvalue_model} is left open.
\end{remark}

\begin{remark}[Scope and limits of the $\mathfrak D_{\mathrm{sep}}$-optimality claim]
\label{rem:sep_scope}
Corollary~\ref{cor:blockwise_sep_optimality} establishes optimality \emph{within} $\mathfrak D_{\mathrm{sep}}$, not among all feasible procedures for \eqref{eq:MHTK}: a non-block-separable procedure (e.g., one that pools evidence across blocks) might in principle outperform $\Pi_K^{*}$. Within $\mathfrak D_{\mathrm{sep}}$ itself, Theorem~\ref{thm:global_opt_sep_within_block} is the load-bearing step: for any allocation, the within-block subproblem reduces exactly to the $K=3$ problem against the per-block law $\mathbb Q^{(b)}_{\!\vec h_K}$, with no concavity or homogeneity assumption required at all. Theorem~\ref{thm:global_opt_sep_allocation} then adds the allocation step under concavity (which holds automatically under \eqref{eq:pvalue_model} by Lemma~\ref{lem:pi3_concavity}) and yields an equalized-marginal characterization that handles heterogeneous blocks via a $O(B\log(1/\varepsilon))$ bisection (Proposition~\ref{prop:bisection_allocation}). The homogeneous case (cross-block-identical $\pi_3^{(b)}$) collapses to the uniform split $\alpha/B$ as a special case. When cross-block independence is available, Theorem~\ref{thm:global_opt_sep_allocation_sidak} and Corollary~\ref{cor:sidak_blockwise_optimality} further enlarge the admissible class from $\mathfrak D_{\mathrm{sep}}$ to $\mathfrak D_{\mathrm{sep}}^{\mathrm{ind}}$ via the \v{S}id\'ak budget, strictly improving the attainable power at the same $O(B\log(1/\varepsilon))$ cost (Remark~\ref{rem:sidak_bisection}). The remaining containment question, whether the unrestricted FWER-optimum already lies in $\mathfrak D_{\mathrm{sep}}$ (or $\mathfrak D_{\mathrm{sep}}^{\mathrm{ind}}$), is partially addressed by Proposition~\ref{prop:containment}, which sandwiches $\Pi_K^{\mathrm{unr}}(\alpha)$ between the global \v{S}id\'ak floor $G(1-(1-\alpha)^{1/K})$ and the per-marginal Neyman--Pearson ceiling $G(\alpha)$, with $\Pi_K^{*,\mathrm{ind}}(\alpha)$ pinned in between; closing the gap to a strict equality remains open (Remark~\ref{rem:empirical_containment}).
\end{remark}

\begin{remark}[Assumption hierarchy and where each result uses it]
\label{rem:assumption_hierarchy}
The development above rests on a graded set of conditions. Recording which layer each statement actually requires makes the model-dependent scope of every claim transparent.
\begin{enumerate}
\item[(A1)] \emph{Block-measurability of the decision rule} (Assumption~\ref{as:block_measurability}). A structural property of the blockwise construction, automatic for any procedure that forms its block decisions from block data alone. Used in Theorem~\ref{thm:blockwise_strong_fwer_clean}.
\item[(A2)] \emph{Local strong-FWER validity under the within-block marginal} (Assumption~\ref{as:local_validity_marginal}, cf.\ \eqref{eq:block_marginal_law}--\eqref{eq:block_strong_fwer_assump_clean}). The sole nontrivial hypothesis of Theorem~\ref{thm:blockwise_strong_fwer_clean}; it does not by itself invoke any form of independence.
\item[(A3)] \emph{Cross-block independence} (the cross-block half of \eqref{eq:independence}). It is the sole extra hypothesis needed to tighten the Bonferroni allocation budget $\sum_b\alpha_b\le\alpha$ to the \v{S}id\'ak budget $\prod_b(1-\alpha_b)\ge 1-\alpha$ (Theorem~\ref{thm:global_opt_sep_allocation_sidak}, Corollary~\ref{cor:sidak_blockwise_optimality}), strictly enlarging the feasible class and improving attainable power. It is also one of the two ingredients of the full independence model needed by Remark~\ref{cor:indep_to_productlaw}; the other is A4.
\item[(A4)] \emph{Within-block independence} (the within-block half of \eqref{eq:independence}). Combined with A3, it gives the full independence assumption~\eqref{eq:independence}, under which Remark~\ref{cor:indep_to_productlaw} re-expresses Assumption~\ref{as:local_validity_marginal} against the product law $\mathbb{P}_{b,\eta^{(b)}(\ell)}$, and Remark~\ref{cor:k3_optimizer_instantiation} instantiates A2 via the $K=3$ optimizer of~\citet{dubey25}. (Cross-block independence alone is insufficient: marginalizing the global product law over coordinates outside $\mathcal B_b$ yields a within-block product only when the joint factorizes within blocks too. See Remark~\ref{rem:crossblock_only} for what cross-block independence by itself buys via Theorem~\ref{thm:blockwise_strong_fwer_clean} working directly against the within-block joint marginal.)
\item[(A5)] \emph{Within-block $\vec h$-exchangeability together with a symmetric local rule.} An optional computational simplification used only in Remark~\ref{rem:block_reduction_optional}: it reduces verification of~\eqref{eq:block_strong_fwer_assump_clean} to three structured configurations. Not required elsewhere.
\item[(A6)] \emph{Absolute continuity of each per-block law $\mathbb Q^{(b)}_{\!\vec h_K}$} (automatic under \eqref{eq:pvalue_model}). This is the hypothesis of Lemma~\ref{lem:pi3_concavity}, which proves concavity (and monotonicity) of each per-block value function $\pi_3^{(b)}$ from first principles via randomization closure plus the Lyapunov--Halmos convexity theorem.
\item[(A7)] \emph{Concavity of each $\pi_3^{(b)}$ on $[0,\alpha]$} (Assumption~\ref{assump:concavity}). Discharged by A6 in our standard regularity class, and required for the equalized-marginal/KKT statement of Theorem~\ref{thm:global_opt_sep_allocation} and the bisection algorithm of Proposition~\ref{prop:bisection_allocation}.
\item[(A8)] \emph{Cross-block homogeneity matching the $\pi_3$ reference law} (Assumption~\ref{assump:exchangeable_blocks}). \emph{No longer load-bearing} for either Theorem~\ref{thm:global_opt_sep_within_block} or Theorem~\ref{thm:global_opt_sep_allocation}; both hold for heterogeneous blocks. A8 is invoked only to specialize the equalized-marginal optimum to the uniform split $\alpha/B$ and the closed-form value $\pi_3(\alpha/B)$.
\end{enumerate}
In summary: Theorem~\ref{thm:blockwise_strong_fwer_clean} needs only A1--A2; Remark~\ref{cor:indep_to_productlaw} adds A3 \emph{and} A4 (the full independence model); Remark~\ref{cor:k3_optimizer_instantiation} uses A1--A2 together with \eqref{eq:independence} (i.e., A3 and A4); Theorem~\ref{thm:global_opt_sep_within_block} needs no concavity, homogeneity, or independence beyond what is implicit in $\mathfrak D_{\mathrm{sep}}$ feasibility (A2); Theorem~\ref{thm:global_opt_sep_allocation} and Proposition~\ref{prop:bisection_allocation} need only A7 (discharged by A6); Corollary~\ref{cor:blockwise_sep_optimality} combines A2 and A7 for the heterogeneous optimum, with A8 yielding the homogeneous uniform-split specialization. Theorem~\ref{thm:global_opt_sep_allocation_sidak} and Corollary~\ref{cor:sidak_blockwise_optimality} additionally require A3 alone (the \v{S}id\'ak tightening uses only cross-block independence; within-block dependence is allowed).
\end{remark}

\begin{prop}[The canonical truncated-normal model discharges all structural hypotheses]
\label{prop:canonical_model_verification}
Suppose the global independence model \eqref{eq:independence} holds with null and alternative $p$-value laws drawn from \eqref{eq:pvalue_model}, where the alternative density $g$ is induced by a one-sided truncated-normal design on $[-M^\star,M^\star]$ with location $\theta<0$ and unit variance (as in Section~\ref{subsec:sim1_truncnorm}). Then:
\begin{enumerate}
\item[(i)] \emph{A1 (block-measurability).} Any blockwise rule $\vec D^{\mathrm{blk}}$ built from per-block decisions $\vec D^{(b)}(X^{(b)})$ satisfies Assumption~\ref{as:block_measurability} by construction.
\item[(ii)] \emph{A6 (absolute continuity of per-block laws).} For every $b\in[B]$ and every $\ell\in\{0,\ldots,K-1\}$ the per-block law $\mathbb Q^{(b)}_{h_\ell}$ admits a Lebesgue density on the ordered simplex $Q$ given by $\rho^{(b)}_{\eta^{(b)}(\ell)}(u)=3!\prod_{i:\,\eta^{(b)}_i(\ell)=1}g(u_i)\cdot\mathbf 1_Q(u)$, hence is absolutely continuous. In particular, Assumption~\ref{assump:concavity} is discharged via Lemma~\ref{lem:pi3_concavity}.
\item[(iii)] \emph{$g$ non-increasing.} On $[0,1]$, $g(u)=(Z_0/Z_1)\,\exp\!\bigl(\theta\,x(u)-\tfrac12\theta^2\bigr)$ with $x(u):=(F_0^T)^{-1}(u)$ strictly increasing; since $\theta<0$, $g$ is strictly decreasing on $[0,1]$, so $G(\alpha)=\int_0^\alpha g$ is strictly concave. This is the hypothesis invoked in Proposition~\ref{prop:containment}.
\item[(iv)] \emph{Bounded density.} $M:=\|g\|_\infty=g(0)=(Z_0/Z_1)\exp(-\theta M^\star-\tfrac12\theta^2)<\infty$, so the Lipschitz constants $L_3=2M$ and $L_3^{\mathrm{pow}}=3M^2$ of Lemmas~\ref{lem:fwer_perturbation}--\ref{lem:power_perturbation} are finite.
\end{enumerate}
Consequently, all hypotheses appearing in the chain Theorem~\ref{thm:blockwise_strong_fwer_clean}--Corollary~\ref{cor:sidak_blockwise_optimality} and Proposition~\ref{prop:containment} are automatically in force for the canonical truncated-normal design; the only load-bearing local hypothesis not implied by \eqref{eq:pvalue_model} is A2 (Assumption~\ref{as:local_validity_marginal}), which is discharged in turn by instantiating each $\vec D^{(b)}$ as the $K=3$ optimizer of~\citet{dubey25} (Remark~\ref{cor:k3_optimizer_instantiation}).
\end{prop}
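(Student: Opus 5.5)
The plan is to verify the four items one at a time, in order. Item (iii) is the substantive one; the others are either structural or follow from it.

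Item (i) is immediate. A blockwise rule defines $\vec D^{(b)}$ as a measurable function of $X^{(b)}$ alone: sorting the three block $p$-values and applying the indicator rule \eqref{eq:dimus2} are both measurable operations. So $\phi_b$ in \eqref{eq:block_measurability} is the composition of the $p$-value map \eqref{eq:pvalue_def}, the sort map onto $Q$, and $\vec D^{\hat{\vec\mu}^{(b)}}$.

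For (ii), the first step uses the global independence model \eqref{eq:independence}. Marginalizing onto block $b$ gives a product law in which null coordinates are $\mathrm{Unif}(0,1)$ and alternative coordinates have density $g$. I will then push this law forward through the order-statistic map. This map is $3!$-to-one off the null set of ties, and the density summed over permutations is $3!$ times a symmetrized product. To get exactly the stated form $3!\prod_{i:\eta_i=1}g(u_i)\mathbf 1_Q$, I will use the canonical labeling of \eqref{eq:pil_final}, in which alternatives occupy the first coordinates. Alternatively, the symmetrized version can be read as the intended density; absolute continuity holds either way. Permutation invariance over the alternative indices is built into the symmetrization, so the hypotheses of Lemma~\ref{lem:pi3_concavity} hold and Assumption~\ref{assump:concavity} follows.

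The core of (iii) is the likelihood-ratio computation for the truncated normal. Let $f_0^T$ and $f_1^T$ be the $N(0,1)$ and $N(\theta,1)$ densities truncated to $[-M^\star,M^\star]$, with normalizers $Z_0$ and $Z_1$. Their ratio is
$$\frac{f_1^T(x)}{f_0^T(x)}=\frac{Z_0}{Z_1}\exp\!\bigl(\theta x-\tfrac12\theta^2\bigr).$$
With $u=F_0^T(x)$ and $x(u)=(F_0^T)^{-1}(u)$, change of variables gives $g(u)=f_1^T(x(u))/f_0^T(x(u))$. Since $F_0^T$ is a continuous strictly increasing bijection $[-M^\star,M^\star]\to[0,1]$, $x(u)$ is strictly increasing. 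With $\theta<0$, $g$ is strictly decreasing, so $G'=g$ is strictly decreasing and $G$ is strictly concave.

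The one obstacle is orientation. The statement says $u$ is "one-sided", and the claimed monotonicity requires the $p$-value convention to satisfy $u=F_0^T(x)$, so that small $x$ gives small $u$. I will fix this convention explicitly, consistent with Section~\ref{subsec:sim1_truncnorm}, and note that \eqref{eq:pvalue_def} reduces to it because the likelihood ratio is monotone decreasing in $x$.

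Item (iv) then follows directly. A decreasing continuous $g$ on $[0,1]$ attains its maximum at $u=0$, that is at $x=-M^\star$, which gives $M=g(0)=(Z_0/Z_1)\exp(-\theta M^\star-\tfrac12\theta^2)$, a finite quantity. Hence $L_3=2M$ and $L_3^{\mathrm{pow}}=3M^2$ are finite.

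For the concluding sentence, I will tally the hypotheses that (i)–(iv) cover: A1, A6 and A7, monotonicity of $g$ for Proposition~\ref{prop:containment}, and bounded density for the plug-in lemmas. I will also note that A3 and A4 are assumed through \eqref{eq:independence} and that A8 holds automatically under a common $g$. A2 is not implied by the model; it is discharged separately by Remark~\ref{cor:k3_optimizer_instantiation}.
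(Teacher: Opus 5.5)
Your item-by-item plan follows the paper's proof: composition for (i), the product law pushed through the sort map for (ii), the truncated-normal likelihood ratio plus change of variables for (iii), and evaluation at $u=0$ for (iv). Your derivative-free monotonicity argument and your explicit check of the orientation $u=F_0^T(x)$ are both sound. Two points need repair.

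\textbf{The density in (ii).} The ``canonical labeling'' route cannot produce the stated formula, because the sort map forgets which coordinate carried the alternative. For a mixed configuration the ordered triple has density $\sum_{\sigma\in S_3}\prod_{i=1}^3\tilde g_i(u_{\sigma(i)})\,\mathbf 1_Q(u)$. With one alternative this is $2\bigl(g(u_1)+g(u_2)+g(u_3)\bigr)\mathbf 1_Q(u)$, which is exactly the kernel behind $\mathrm{FWER}_1$ in Lemma~\ref{lemma:optimization_problem_k_3}.

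By contrast, for $\eta=(1,0,0)$ the stated $3!\,g(u_1)\mathbf 1_Q(u)$ integrates to $3\int_0^1 g(u)(1-u)^2\,du$. By Chebyshev's integral inequality this exceeds $1$ whenever $g$ is strictly decreasing, so it is not a probability density. The stated formula is literally right only for $\eta\in\{(0,0,0),(1,1,1)\}$. The paper's proof also asserts that the $3!$ factor gives it in general, so the slip is shared.

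Drop the labeling route and prove the symmetrized density instead. It gives absolute continuity and exchangeability in the alternative indices, which is all Lemma~\ref{lem:pi3_concavity} uses.

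\textbf{The closing tally.} Remark~\ref{cor:k3_optimizer_instantiation} discharges A2, and Corollary~\ref{cor:blockwise_sep_optimality} asserts attainment, only under Assumptions~\ref{as:assumption3}--\ref{as:assumption5}. That Remark cites (iii)--(iv) of this very proposition as their verification. Leaving them out makes your conclusion that ``only A2 is load-bearing'' circular.

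Your (iv) gives Assumption~\ref{as:assumption5}. Strict decrease obtained by composing monotone maps does not, however, give the lower-Lipschitz bound. Differentiate as the paper does:
\[
g'(u)=(Z_0/Z_1)\,\theta\,e^{\theta x(u)-\theta^2/2}/f_0^T(x(u)).
\]
On $[-M^\star,M^\star]$ you have $f_0^T\le 1/(Z_0\sqrt{2\pi})$ and $e^{\theta x-\theta^2/2}\ge e^{\theta M^\star-\theta^2/2}$. So $|g'|$ is bounded below by a positive constant, and the mean value theorem yields Assumption~\ref{as:assumption3}. Likewise $g\ge g(1)=(Z_0/Z_1)e^{\theta M^\star-\theta^2/2}>0$ gives Assumption~\ref{as:assumption4}.
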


\begin{proof}
\emph{(i)} If $\vec D^{(b)}$ depends only on $X^{(b)}$, then $\{V^{(b)}>0\}\in\sigma(X^{(b)})$ by composition. \emph{(ii)} Under \eqref{eq:independence}, the unordered block $X^{(b)}=(u^{(b)}_1,u^{(b)}_2,u^{(b)}_3)$ has product density $\prod_i \tilde g_i(u^{(b)}_i)$ with $\tilde g_i=\mathbf 1_{[0,1]}$ if $\eta^{(b)}_i=0$ and $\tilde g_i=g$ if $\eta^{(b)}_i=1$. Mapping to the ordered simplex $Q=\{u_1\le u_2\le u_3\}$ of volume $1/3!$ multiplies by the symmetry factor $3!$, giving the stated density; this is finite Lebesgue-a.e.\ since $g\in L^\infty([0,1])$ by (iv), so $\mathbb Q^{(b)}_{h_\ell}\ll\mathrm{Leb}_{|Q}$. \emph{(iii)} Write $u=F_0^T(x)$ with $F_0^T$ the truncated-$\mathcal N(0,1)$ cdf on $[-M^\star,M^\star]$; the likelihood ratio of the truncated-$\mathcal N(\theta,1)$ density against the truncated-$\mathcal N(0,1)$ density at $x$ equals $(Z_0/Z_1)\exp(\theta x-\tfrac12\theta^2)$, and this is $g(u)$ by change of variables. Differentiating in $u$, $g'(u)=(Z_0/Z_1)\theta\exp(\theta x-\tfrac12\theta^2)/f_0^T(x)$ with $f_0^T(x)>0$ on $(-M^\star,M^\star)$; since $\theta<0$, $g'(u)<0$, so $g$ is strictly decreasing and $G$ strictly concave. \emph{(iv)} $g$ is monotone decreasing on $[0,1]$, so $\|g\|_\infty=g(0)=(Z_0/Z_1)\exp(-\theta M^\star-\tfrac12\theta^2)$, finite because $M^\star,\theta$ are finite.
\end{proof}

\section{Plug-in perturbation lemmas}
\label{app:plugin_lemmas}

This appendix collects the three perturbation lemmas and the remark on the constant $M$ that support the plug-in statements of Section~\ref{subsec:plugin} (Theorems~\ref{thm:plugin_fwer}--\ref{thm:plugin_power}).

\begin{lemma}[FWER perturbation bound in $g$]
\label{lem:fwer_perturbation}
Fix a block $b\in[B]$ and suppose the within-block law $\mathbb Q^{(b)}_{h_\ell,\tilde g}$ has the product-density form $3!\prod_{i}\tilde g_i(u_i)\mathbf 1_Q(u)$ on the ordered simplex (within-block independence with marginal $\tilde g$ or uniform per coordinate, as in \eqref{eq:block_product_law_density}). Suppose $g,\widehat g$ are densities on $[0,1]$ that are bounded above by $M:=\max(\|g\|_\infty,\|\widehat g\|_\infty)<\infty$. Let $\vec D:Q\to\{0,1\}^3$ be any (measurable) decision rule, and write $\delta:=\|g-\widehat g\|_\infty$. Then, for every $\ell\in\{0,\dots,K-1\}$,
\begin{equation}
\label{eq:fwer_perturbation}
\Bigl|\mathbb Q^{(b)}_{h_\ell,g}\!\bigl(V^{(b)}\!>\!0\bigr)\;-\;\mathbb Q^{(b)}_{h_\ell,\widehat g}\!\bigl(V^{(b)}\!>\!0\bigr)\Bigr|
\;\le\;L_3\,\delta,\qquad L_3\;:=\;2M.
\end{equation}
\end{lemma}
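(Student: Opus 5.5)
The plan is to bound the difference of the two probabilities by the total-variation distance between the two within-block laws. That distance is controlled by a telescoping expansion of the product densities over the alternative coordinates.

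\emph{Step 1: reduce to the unordered cube.} Fix $\ell$ and write $\eta=\eta^{(b)}(\ell)$ and $m:=\sum_i\eta_i$ for the number of within-block alternatives. If $m=3$, then $V^{(b)}\equiv 0$ and both probabilities vanish, so assume $m\le 2$. The event $A:=\{V^{(b)}>0\}$ is a measurable subset of $Q$, determined by the rule $\vec D$ and by $\eta$. By the product-density form assumed in the statement, $\mathbb Q^{(b)}_{h_\ell,\tilde g}(A)=\int_Q 3!\prod_i\tilde g_i(u_i)\mathbf 1_A(u)\,du$. Equivalently, it is the probability that the sorted version of an unordered vector with independent coordinates lies in $A$, where the coordinates have law $\tilde g$ on $\{i:\eta_i=1\}$ and $\mathrm{Unif}(0,1)$ otherwise. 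I will work on $[0,1]^3$ with the pulled-back event $\tilde A:=\{x:\mathrm{sort}(x)\in A\}$; this makes the symmetry factor $3!$ disappear. The uniform coordinates carry the same law under $g$ and under $\widehat g$, so after integrating them out the difference becomes
\[
\Bigl|\int_{[0,1]^m}\phi(x)\Bigl(\textstyle\prod_{j=1}^m g(x_j)-\prod_{j=1}^m \widehat g(x_j)\Bigr)dx\Bigr|,
\]
where $\phi\in[0,1]$ is the conditional probability of $\tilde A$ given the alternative coordinates. This is at most $\int_{[0,1]^m}\bigl|\prod_j g(x_j)-\prod_j\widehat g(x_j)\bigr|\,dx$.

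\emph{Step 2: telescoping.} For $m=1$ the integral is $\int|g-\widehat g|\le\delta$. For $m=2$, I will write
\[
g(x_1)g(x_2)-\widehat g(x_1)\widehat g(x_2)=\bigl(g(x_1)-\widehat g(x_1)\bigr)g(x_2)+\widehat g(x_1)\bigl(g(x_2)-\widehat g(x_2)\bigr).
\]
Integrating each term, and using $\int g=\int\widehat g=1$ together with $|g-\widehat g|\le\delta$ pointwise, gives a total of at most $2\delta$. Since the statement bounds the densities by $M$, one can alternatively bound the factors $g(x_2)$ and $\widehat g(x_1)$ by $M$, which gives $2M\delta$ directly. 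In either case the difference is at most $m\delta\le 2\delta$.

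\emph{Step 3: match the constant.} Any density on $[0,1]$ has $\|g\|_\infty\ge\int_0^1 g=1$, so $M\ge 1$ and therefore $2\delta\le 2M\delta=L_3\delta$. This holds uniformly in $\ell$ and in the rule $\vec D$, because only $0\le\phi\le 1$ was used.

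The main obstacle is bookkeeping rather than analysis. One must pull the event on the ordered simplex $Q$ back to the cube correctly, so that the factor $3!$ and the ordering do not introduce spurious constants, and one must check that the null coordinates contribute identically under both laws. Once that is done, the rest is the elementary telescoping bound above.
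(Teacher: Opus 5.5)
Your proof is correct in substance, but it takes a genuinely different route from the paper's. The paper stays on the ordered simplex. It plugs in the explicit linear representation of the within-block FWER from Lemma~\ref{lemma:optimization_problem_k_3}, where kernels $2g(u_j)$ or $2g(u_j)g(u_k)$ multiply $D_1,D_2,D_3$. It then bounds $D_i\le 1$, telescopes each kernel with the unswapped factor bounded by $M$ in sup norm, and multiplies by $\mathrm{vol}(Q)=1/6$. This gives $\delta$ for one within-block alternative and $2M\delta$ for two.

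You instead lift to the unordered cube, integrate out the null coordinates, and control the difference by the $L^1$ distance between the product densities of the alternative coordinates. This buys two things.
\begin{itemize}
\item You never use the linear representation, which as derived presupposes the step-down ordering $D_3\le D_2\le D_1$. For a general rule the false-rejection indicator is a maximum of the $D_i$, not a linear combination, so your argument literally covers the ``any measurable $\vec D$'' in the statement.
\item Your $\int g=\int\widehat g=1$ variant gives $m\|g-\widehat g\|_{L^1}\le 2\delta$, so $L_3=2$ already suffices with no dependence on $M$. The same device gives an $M$-free power bound $3\delta$ in place of $3M^2\delta$. This is precisely the blow-up the paper flags in Remark~\ref{rem:M_blowup}.
\end{itemize}
The paper's route, in exchange, reuses the $Q$-integral formulas that drive the solver and the power lemma, and it records the constants configuration by configuration.

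One piece of your Step~1 bookkeeping should be repaired. The event $\{V^{(b)}>0\}$ is not a subset of $Q$. Whether a true null is rejected depends on which ranks the null coordinates occupy, not only on the sorted values, so it is not of the form $\{x:\mathrm{sort}(x)\in A\}$ with $A\subseteq Q$. Define it directly on the cube as $\tilde A:=\{x\in[0,1]^3:\ D_{r_i(x)}(\mathrm{sort}(x))=1 \text{ for some } i \text{ with } \eta_i=0\}$, where $r_i(x)$ is the rank of $x_i$. This set is measurable, and ties are null sets.

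Also drop ``equivalently''. For a mixed configuration, the sorted law of independent coordinates has density $\sum_{\sigma}\prod_i\tilde g_i(u_{\sigma(i)})$ on $Q$, which in general differs from $3!\prod_i\tilde g_i(u_i)$. Working on the cube, as you do, is the right reading of the hypothesis. Since the rest of your argument uses only $0\le\phi\le 1$, nothing else changes.
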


\begin{proof}
Fix $\ell\in\{0,\dots,K-1\}$ and let $\eta:=\eta^{(b)}(\ell)\in\{0,1\}^3$ be the within-block configuration. Write $l:=\sum_{i=1}^{3}\eta_i$ for the number of within-block alternatives. The within-block strong-FWER functional $\mathrm{FWER}^{\eta}(\vec D;\tilde g):=\mathbb Q^{(b)}_{h_\ell,\tilde g}(V^{(b)}>0)$ admits the linear-in-$\vec D$ representation of \eqref{eq:objconst}, of the form
\begin{equation}
\label{eq:block_fwer_integral}
\mathrm{FWER}^{\eta}(\vec D;\tilde g)
\;=\;\int_{Q}\sum_{i=1}^{3}c_i\,D_i(u)\,p_{\eta,i}(\tilde g;u)\,du,
\end{equation}
where each $p_{\eta,i}(\tilde g;u)$ is a finite sum (or single instance) of products of exactly $l$ density evaluations $\tilde g(u_j)$ at distinct $j\in[3]$ (degree $l$ in $\tilde g$ with absorbed combinatorial weights), and the constants $c_i\ge 0$ depend only on $\eta$. Concretely (see Lemma~\ref{lemma:optimization_problem_k_3}): for $l=0$, all $p_{\eta,i}\equiv 1$ and $\sum_i c_i\int_Q D_i=6\int_Q D_1$; for $l=1$, $\sum_i c_i\int_Q D_i\,p_{\eta,i}(g)=2\int_Q[D_1(g(u_2)+g(u_3))+D_2 g(u_1)]\,du$; for $l=2$, $\sum_i c_i\int_Q D_i\,p_{\eta,i}(g)=2\int_Q[D_1 g(u_2)g(u_3)+D_2 g(u_1)g(u_3)+D_3 g(u_1)g(u_2)]\,du$.

Subtract \eqref{eq:block_fwer_integral} for $\tilde g\in\{g,\widehat g\}$, take absolute values, and bound $0\le D_i(u)\le 1$:
\begin{equation}
\label{eq:fwer_pert_step1}
\Bigl|\mathrm{FWER}^{\eta}(\vec D;g)-\mathrm{FWER}^{\eta}(\vec D;\widehat g)\Bigr|
\;\le\;\int_{Q}\sum_{i=1}^{3}c_i\,\bigl|p_{\eta,i}(g;u)-p_{\eta,i}(\widehat g;u)\bigr|\,du.
\end{equation}
A telescoping bound on a product of $l$ factors, each bounded by $M$ and individually swapped from $g$ to $\widehat g$, gives
\begin{equation}
\label{eq:fwer_pert_step2}
\bigl|p_{\eta,i}(g;u)-p_{\eta,i}(\widehat g;u)\bigr|\;\le\;l\,M^{\,l-1}\,\delta,
\end{equation}
uniformly in $u\in Q$ (with the convention $l\,M^{l-1}=0$ when $l=0$). Substituting \eqref{eq:fwer_pert_step2} into \eqref{eq:fwer_pert_step1} and using $\mathrm{vol}(Q)=1/3!$,
\begin{equation*}
\Bigl|\mathrm{FWER}^{\eta}(\vec D;g)-\mathrm{FWER}^{\eta}(\vec D;\widehat g)\Bigr|
\;\le\;\Bigl(\sum_{i=1}^{3}c_i\Bigr)\,l\,M^{\,l-1}\,\delta\,\big/\,3!.
\end{equation*}
Tabulating against the explicit $c_i$: $l=0$ gives $0$; $l=1$ gives $\sum_i c_i=2(2)+2(1)=6$ and $l\,M^{l-1}=1$, so the bound is $6\cdot 1\cdot\delta/6=\delta$; $l=2$ gives $\sum_i c_i=2\cdot 3=6$ and $l\,M^{l-1}=2M$, so the bound is $6\cdot 2M\cdot\delta/6=2M\delta$. Taking the maximum over $l\in\{0,1,2\}$ and noting $2M\ge 2\ge 1$ for any density $g$ with $\int_0^1 g=1$ (since $M\ge\int_0^1 g=1$) yields \eqref{eq:fwer_perturbation} with $L_3=2M$.
\end{proof}

\begin{lemma}[Power perturbation bound in $g$]
\label{lem:power_perturbation}
Under the setup of Lemma~\ref{lem:fwer_perturbation}, for every measurable $\vec D:Q\to\{0,1\}^3$,
\begin{equation}
\label{eq:power_perturbation}
\Bigl|\Pi_3^{g}(\vec D)-\Pi_3^{\widehat g}(\vec D)\Bigr|
\;\le\;L_3^{\mathrm{pow}}\,\|g-\widehat g\|_{\infty},\qquad L_3^{\mathrm{pow}}\;:=\;3M^{2}.
\end{equation}
\end{lemma}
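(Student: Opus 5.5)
The argument runs parallel to the proof of Lemma~\ref{lem:fwer_perturbation}, with the objective functional of Lemma~\ref{lemma:optimization_problem_k_3} in place of the constraint functionals. First I would write the per-block average power under a candidate density $\tilde g$ in its linear-in-$\vec D$ form on the ordered simplex. Under the product-density hypothesis, the law of the ordered triple under $(1,1,1)$ has density $3!\,\tilde g(u_1)\tilde g(u_2)\tilde g(u_3)\mathbf 1_Q(u)$. Hence
\[
\Pi_3^{\tilde g}(\vec D)=\frac13\int_Q (D_1+D_2+D_3)(u)\,3!\,\tilde g(u_1)\tilde g(u_2)\tilde g(u_3)\,du
=2\int_Q (D_1+D_2+D_3)\,\tilde g(u_1)\tilde g(u_2)\tilde g(u_3)\,du,
\]
which matches the objective in \eqref{eq:objconst}. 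This representation holds for any measurable $\vec D$, with no monotonicity needed.

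Next I would subtract the two expressions for $\tilde g\in\{g,\widehat g\}$, take absolute values inside the integral, and use $0\le D_i\le 1$. This bounds the difference by $2\int_Q \sum_i D_i\,|g(u_1)g(u_2)g(u_3)-\widehat g(u_1)\widehat g(u_2)\widehat g(u_3)|\,du$. I would then telescope the degree-three product:
\[
g_1g_2g_3-\widehat g_1\widehat g_2\widehat g_3=(g_1-\widehat g_1)g_2g_3+\widehat g_1(g_2-\widehat g_2)g_3+\widehat g_1\widehat g_2(g_3-\widehat g_3).
\]
Each factor is bounded by $M$, so the pointwise difference is at most $3M^2\delta$ uniformly on $Q$, where $\delta=\|g-\widehat g\|_\infty$.

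Finally I would integrate over $Q$, using $\mathrm{vol}(Q)=1/3!$ and $\sum_i D_i\le 3$, which gives $2\cdot 3\cdot 3M^2\delta/6=3M^2\delta$. This is exactly $L_3^{\mathrm{pow}}\delta$. The only step needing care is the bookkeeping of constants: the normalization $1/3$ in $\Pi_3$, the symmetry factor $3!$, and the simplex volume must cancel to give $3M^2$ rather than a larger multiple. A cruder route that bounds $\sum_i D_i$ by $3$ before accounting for the $1/3$ averaging would lose a factor of $3$. Measurability and the finiteness of $M$ are given by the hypotheses inherited from Lemma~\ref{lem:fwer_perturbation}, so no other obstacle arises.
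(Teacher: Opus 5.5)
Your proposal is correct and follows the paper's proof essentially step for step. Both use the linear representation $\Pi_3^{\tilde g}(\vec D)=2\int_Q(D_1+D_2+D_3)\,\tilde g(u_1)\tilde g(u_2)\tilde g(u_3)\,du$ from Lemma~\ref{lemma:optimization_problem_k_3}, the three-term telescoping bound $3M^2\delta$, and the same constant bookkeeping $2\cdot 3\cdot 3M^2\delta\cdot(1/6)=3M^2\delta$.
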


\begin{proof}
By Lemma~\ref{lemma:optimization_problem_k_3}, $\Pi_3^{\tilde g}(\vec D)=2\int_Q(D_1+D_2+D_3)\,\tilde g(u_1)\tilde g(u_2)\tilde g(u_3)\,du$. Subtract the two evaluations and take absolute values:
\begin{equation*}
\bigl|\Pi_3^{g}(\vec D)-\Pi_3^{\widehat g}(\vec D)\bigr|
\;\le\;2\int_{Q}(D_1+D_2+D_3)\,\bigl|g(u_1)g(u_2)g(u_3)-\widehat g(u_1)\widehat g(u_2)\widehat g(u_3)\bigr|\,du.
\end{equation*}
The same telescoping bound used in \eqref{eq:fwer_pert_step2} (with $l=3$), with $\delta:=\|g-\widehat g\|_\infty$, gives
\begin{equation*}
\bigl|g(u_1)g(u_2)g(u_3)-\widehat g(u_1)\widehat g(u_2)\widehat g(u_3)\bigr|\;\le\;3M^{2}\,\delta
\end{equation*}
uniformly in $u\in Q$. Bounding $D_1+D_2+D_3\le 3$ and using $\mathrm{vol}(Q)=1/3!$ then gives $|\Pi_3^g(\vec D)-\Pi_3^{\widehat g}(\vec D)|\le 2\cdot 3\cdot 3M^{2}\,\delta\cdot(1/6)=3M^{2}\,\delta$.
\end{proof}

\begin{remark}[On the constant $M$ for spiky alternatives]
\label{rem:M_blowup}
The constants $L_3=2M$ and $L_3^{\mathrm{pow}}=3M^{2}$ in Lemmas~\ref{lem:fwer_perturbation}--\ref{lem:power_perturbation} scale with $M=\max(\|g\|_\infty,\|\widehat g\|_\infty)$, which is unbounded for strong-signal alternatives whose density mass concentrates near the origin (e.g., sharply truncated normal with large $|\theta|$ in our running model~\eqref{eq:pvalue_model}). The $L^{\infty}$ formulation is therefore qualitatively tight but practically loose in precisely the regime (highly informative $g$) where blockwise multiplicity gains are largest. A sharper rate likely requires a weighted norm $\|(g-\widehat g)/w\|_{q}$ for a weight $w$ matched to the small-$p$-value tail (so that mass near $0$ is downweighted), and an attendant change of telescoping argument; we leave this refinement to future work.
\end{remark}

\begin{lemma}[Value-function perturbation in $g$]
\label{lem:value_perturbation}
Let $g,\widehat g$ be densities on $[0,1]$ with $M:=\max(\|g\|_\infty,\|\widehat g\|_\infty)<\infty$ and $\delta:=\|g-\widehat g\|_{\infty}$. For every $\alpha\ge L_3\delta$,
\begin{equation}
\label{eq:value_perturbation}
\pi_{3}^{\widehat g}(\alpha)\;\ge\;\pi_{3}^{g}\!\bigl(\alpha-L_3\delta\bigr)\;-\;L_{3}^{\mathrm{pow}}\,\delta,
\qquad
\pi_{3}^{g}(\alpha)\;\ge\;\pi_{3}^{\widehat g}\!\bigl(\alpha-L_3\delta\bigr)\;-\;L_{3}^{\mathrm{pow}}\,\delta,
\end{equation}
with $L_3=2M$ and $L_{3}^{\mathrm{pow}}=3M^{2}$.
\end{lemma}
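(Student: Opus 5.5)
The plan is a standard "feasibility transfer" argument built on Lemmas~\ref{lem:fwer_perturbation} and~\ref{lem:power_perturbation}. By symmetry in $(g,\widehat g)$, since $M$, $\delta$, $L_3$ and $L_3^{\mathrm{pow}}$ are all symmetric, it suffices to prove the first inequality; the second follows by swapping the roles of $g$ and $\widehat g$. Fix $\alpha\ge L_3\delta$, so that $\alpha-L_3\delta\ge 0$ and $\pi_3^g(\alpha-L_3\delta)$ is well defined by \eqref{eq:pi3_value_def}. Take any $\epsilon>0$ and any measurable $\vec D:Q\to\{0,1\}^3$ that is feasible for the $g$-problem at level $\alpha-L_3\delta$, meaning $\mathrm{FWER}_l^{g}(\vec D)\le\alpha-L_3\delta$ for $l=0,1,2$, and that satisfies $\Pi_3^g(\vec D)\ge\pi_3^g(\alpha-L_3\delta)-\epsilon$. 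Such a rule exists by the definition of the supremum. If the feasible set is empty, the right-hand side is $-\infty$ (or the convention for $\sup\emptyset$), and the claim is vacuous. In fact the rule $\vec D\equiv 0$ is always feasible, so the set is never empty.

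\textbf{Step 1 (feasibility transfer).} Apply Lemma~\ref{lem:fwer_perturbation} to this fixed $\vec D$, for each within-block configuration with $l\in\{0,1,2\}$ alternatives. The lemma's proof is stated for an arbitrary measurable rule and for the product-density laws built from $\tilde g\in\{g,\widehat g\}$. It gives
\[
\mathrm{FWER}_l^{\widehat g}(\vec D)\le\mathrm{FWER}_l^{g}(\vec D)+L_3\delta\le\alpha .
\]
Hence $\vec D$ is feasible for the $\widehat g$-problem at level $\alpha$, and therefore $\pi_3^{\widehat g}(\alpha)\ge\Pi_3^{\widehat g}(\vec D)$.

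\textbf{Step 2 (objective transfer).} Lemma~\ref{lem:power_perturbation} gives
\[
\Pi_3^{\widehat g}(\vec D)\ge\Pi_3^{g}(\vec D)-L_3^{\mathrm{pow}}\delta\ge\pi_3^g(\alpha-L_3\delta)-\epsilon-L_3^{\mathrm{pow}}\delta .
\]
Chaining this with Step~1 and letting $\epsilon\downarrow0$ yields the first inequality in \eqref{eq:value_perturbation}.

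\textbf{Main obstacle.} The only delicate point is checking that the value function $\pi_3^{\tilde g}$ in \eqref{eq:pi3_value_def} is built from exactly the same FWER and power functionals to which the perturbation lemmas apply. Concretely, the law $\mathbb Q$ defining $\pi_3^{\tilde g}$ must be the within-block product law with marginal $\tilde g$ on alternatives, represented by the integrals of Lemma~\ref{lemma:optimization_problem_k_3} with $\tilde g$ substituted for $g$. I would state this identification explicitly, and note that the constraints in \eqref{eq:pi3_value_def} are indexed by $l=0,1,2$, which are precisely the cases tabulated in the proof of Lemma~\ref{lem:fwer_perturbation}. No attainment of the supremum, and no regularity of the kind in Assumptions~\ref{as:assumption3}--\ref{as:assumption5}, is needed.
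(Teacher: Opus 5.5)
Your proposal is correct and follows the paper's proof essentially step for step. You transfer feasibility of a near-optimal $g$-rule from level $\alpha-L_3\delta$ to the $\widehat g$-problem at level $\alpha$ via Lemma~\ref{lem:fwer_perturbation}, transfer its power via Lemma~\ref{lem:power_perturbation}, and get the second inequality from the symmetry of $M$, $\delta$, $L_3$, $L_3^{\mathrm{pow}}$ in $(g,\widehat g)$. Your explicit $\epsilon$-bookkeeping is slightly cleaner than the paper's, which writes $\Pi_3^{g}(\vec D^{\star})=\pi_3^{g}(\alpha-L_3\delta)$ even in the $\varepsilon$-approximate case.
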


\begin{proof}
Fix $\alpha\ge L_3\delta$ and let $\vec D^{\star}$ achieve the supremum (or its $\varepsilon$-approximation) in the definition of $\pi_{3}^{g}(\alpha-L_3\delta)$, so $\mathrm{FWER}^{\eta}(\vec D^{\star};g)\le\alpha-L_3\delta$ for all $\eta\in\{0,1\}^{3}$ and $\Pi_{3}^{g}(\vec D^{\star})=\pi_{3}^{g}(\alpha-L_3\delta)$. By Lemma~\ref{lem:fwer_perturbation},
\[
\mathrm{FWER}^{\eta}(\vec D^{\star};\widehat g)
\;\le\;\mathrm{FWER}^{\eta}(\vec D^{\star};g)\;+\;L_3\delta
\;\le\;\alpha,
\]
so $\vec D^{\star}$ is feasible for the $\widehat g$-problem at level $\alpha$. By Lemma~\ref{lem:power_perturbation},
\[
\pi_{3}^{\widehat g}(\alpha)\;\ge\;\Pi_{3}^{\widehat g}(\vec D^{\star})
\;\ge\;\Pi_{3}^{g}(\vec D^{\star})\;-\;L_{3}^{\mathrm{pow}}\delta
\;=\;\pi_{3}^{g}(\alpha-L_3\delta)\;-\;L_{3}^{\mathrm{pow}}\delta.
\]
The reverse direction is identical with the roles of $g$ and $\widehat g$ swapped.
\end{proof}

\section{Proofs of the main-text results}
\label{app:main_proofs}

This appendix collects the proofs of the formal results stated in Section~\ref{sec:main}; each subsection is referenced from the corresponding statement in the main text. Proofs are arranged in the order in which the statements appear, and use the per-block perturbation lemmas of Appendix~\ref{app:plugin_lemmas}.

\subsection{Proof of Theorem~\ref{thm:blockwise_strong_fwer_clean}}
\label{app:proof_blockwise_strong_fwer_clean}
\begin{proof}
Fix $\ell\in\{0,1,\ldots,K-1\}$ and write $\eta^{(b)}:=\eta^{(b)}(\ell)$. Let $\mathcal R^{(b)}(X^{(b)})\subseteq\mathcal B_b$ denote the within-block rejection set produced by Algorithm~\ref{alg:blockwise_k3}, and define the global and blockwise false-rejection counts under $h_\ell$ by
\[
V:=\sum_{k\in[K]}\bigl(1-(h_\ell)_k\bigr)\,\mathbf 1\{k\in\mathcal R(\vec X)\},
\qquad
V^{(b)}:=\sum_{k\in\mathcal B_b}\bigl(1-(h_\ell)_k\bigr)\,\mathbf 1\{k\in\mathcal R^{(b)}(X^{(b)})\}.
\]
Since $\{\mathcal B_b\}_{b=1}^B$ partitions $[K]$ and $\mathcal R(\vec X)=\bigcup_{b=1}^B\mathcal R^{(b)}(X^{(b)})$, we have the exact decomposition $V=\sum_{b=1}^B V^{(b)}$. Boole's inequality therefore gives
\begin{equation}
\label{eq:boole_clean}
\mathbb P_{h_\ell}(V>0)\;\le\;\sum_{b=1}^B \mathbb P_{h_\ell}\!\bigl(V^{(b)}>0\bigr).
\end{equation}

Fix a block $b\in[B]$. By Assumption~\ref{as:block_measurability} and the definition of $V^{(b)}$ in \eqref{eq:Vb_def_clean}, the event $\{V^{(b)}>0\}$ is $\sigma(X^{(b)})$-measurable: there exists a measurable set $A_b\subseteq\mathcal X^{(b)}$ such that
\begin{equation}
\label{eq:event_depends_only_on_block}
\{V^{(b)}>0\}\;=\;\{X^{(b)}\in A_b\}\;\in\;\sigma(X^{(b)}).
\end{equation}
By the definition of the block marginal \eqref{eq:block_marginal_law},
\begin{equation}
\label{eq:marginal_eq_clean}
\mathbb P_{h_\ell}\!\bigl(V^{(b)}>0\bigr)
\;=\;\mathbb P_{h_\ell}\!\bigl(X^{(b)}\in A_b\bigr)
\;=\;\mathbb Q^{(b)}_{h_\ell}\!\bigl(X^{(b)}\in A_b\bigr)
\;=\;\mathbb Q^{(b)}_{h_\ell}\!\Bigl(V^{(b)}(X^{(b)};\eta^{(b)})>0\Bigr).
\end{equation}
If $\eta^{(b)}=(1,1,1)$, then by \eqref{eq:Vb_def_clean} we have $V^{(b)}(X^{(b)};\eta^{(b)})\equiv 0$, so the right-hand side of \eqref{eq:marginal_eq_clean} equals $0\le\alpha_{\mathrm{blk}}^{(b)}$. If $\sum_{i=1}^3\eta^{(b)}_i\le 2$, Assumption~\ref{as:local_validity_marginal} applies and yields the same upper bound. In either case,
\begin{equation}
\label{eq:block_bound_clean_detailed}
\mathbb P_{h_\ell}\!\bigl(V^{(b)}>0\bigr)\;\le\;\alpha_{\mathrm{blk}}^{(b)}.
\end{equation}
Substituting \eqref{eq:block_bound_clean_detailed} into \eqref{eq:boole_clean},
\[
\mathrm{FWER}_\ell(\vec D^{\mathrm{blk}})
\;=\;\mathbb P_{h_\ell}(V>0)
\;\le\;\sum_{b=1}^B \alpha_{\mathrm{blk}}^{(b)}
\;\le\;\alpha.
\]
Since $\ell$ was arbitrary, the conclusion holds for all $\ell\in\{0,1,\ldots,K-1\}$.
\end{proof}

\subsection{Proof of Theorem~\ref{thm:plugin_fwer}}
\label{app:proof_plugin_fwer}
\begin{proof}
Fix $\ell$ and condition on $\mathcal G$. By (i), conditionally on $\mathcal G$ the testing-fold data $(X^{(b)})_{b\in\mathcal T}$ has the same distribution as under the original product law, so for each $b\in\mathcal T$
\[
\mathbb E\!\left[\mathbf 1\{V^{(b)}>0\}\mid\mathcal G\right]
\;=\;\mathbb Q^{(b)}_{h_\ell,g}\!\bigl(V^{(b)}>0\bigr).
\]
By (iii) and the solver guarantee~\eqref{eq:solver_guarantee_against_ghat},
$\mathbb Q^{(b)}_{h_\ell,\widehat g}(V^{(b)}>0)\le\alpha_{\mathrm{blk}}^{(b)}$
conditionally on $\mathcal G$, and Lemma~\ref{lem:fwer_perturbation} gives
\[
\mathbb Q^{(b)}_{h_\ell,g}\!\bigl(V^{(b)}>0\bigr)
\;\le\;\alpha_{\mathrm{blk}}^{(b)}\;+\;L_3\,\|g-\widehat g\|_\infty
\]
conditionally on $\mathcal G$. Summing over $b\in\mathcal T$ and applying Boole's inequality (exactly as in the proof of Theorem~\ref{thm:blockwise_strong_fwer_clean}) yields
\[
\mathbb P_{h_\ell}\!\bigl(V^{\mathcal T}>0\mid\mathcal G\bigr)
\;\le\;\sum_{b\in\mathcal T}\alpha_{\mathrm{blk}}^{(b)}\;+\;L_3\,B_{\mathcal T}\,\|g-\widehat g\|_\infty
\;\le\;\alpha\;+\;L_3\,B_{\mathcal T}\,\|g-\widehat g\|_\infty,
\]
where the last step uses (ii). Taking expectations over $\mathcal G$ gives~\eqref{eq:plugin_fwer_bound}.
\end{proof}

\subsection{Proof of Corollary~\ref{cor:plugin_sample_splitting}}
\label{app:proof_plugin_sample_splitting}
\begin{proof}
The independence of $\mathcal G=\sigma(\{X^{(b)}\}_{b\in\mathcal E})$ and $\{X^{(b)}\}_{b\in\mathcal T}$ gives condition~(i) of Theorem~\ref{thm:plugin_fwer} directly. The \v{S}id\'ak remark is immediate: the Bonferroni sum $\sum_{b\in\mathcal T}\alpha_{\mathrm{blk}}^{(b)}\le\alpha$ used in the last step of the proof is implied by the \v{S}id\'ak budget $\prod_{b\in\mathcal T}(1-\alpha_{\mathrm{blk}}^{(b)})\ge 1-\alpha$ via Bernoulli's inequality.
\end{proof}

\subsection{Proof of Theorem~\ref{thm:plugin_power}}
\label{app:proof_plugin_power}
\begin{proof}
Fix $b\in\mathcal T$ and condition on $\mathcal G$. By condition~(i) of Theorem~\ref{thm:plugin_fwer}, $\widehat g$ is $\mathcal G$-measurable and the testing-fold block $X^{(b)}$ has its true $g$-law conditionally on $\mathcal G$; hence
\[
\mathrm{Power}_{g}^{(b)}\!\bigl(\widehat{\vec D}^{(b)}\bigr)
\;=\;\tfrac{1}{3}\,\mathbb E\!\left[\textstyle\sum_{i=1}^{3}D_{i}^{\hat{\vec\mu}^{(b)}}\!\bigl(U^{(b)}\bigr)\,\Big|\,\mathcal G\right]
\;=\;\Pi_{3}^{g}\!\bigl(\widehat{\vec D}^{(b)}\bigr),
\]
where $\Pi_{3}^{g}$ is the complete-alternative power functional of Lemma~\ref{lemma:optimization_problem_k_3}. By construction (item~(iii) of Theorem~\ref{thm:plugin_fwer}), the rule $\widehat{\vec D}^{(b)}$ is the optimal block rule of Algorithm~\ref{alg:compute_optimal_mu_K3_main} computed with $\widehat g$ at level $\alpha_{\mathrm{blk}}^{(b)}$, so $\Pi_{3}^{\widehat g}(\widehat{\vec D}^{(b)})=\pi_{3}^{\widehat g}(\alpha_{\mathrm{blk}}^{(b)})$ conditionally on $\mathcal G$. Lemma~\ref{lem:power_perturbation} then gives
\[
\Pi_{3}^{g}\!\bigl(\widehat{\vec D}^{(b)}\bigr)
\;\ge\;\Pi_{3}^{\widehat g}\!\bigl(\widehat{\vec D}^{(b)}\bigr)\;-\;L_{3}^{\mathrm{pow}}\,\|g-\widehat g\|_{\infty}
\;=\;\pi_{3}^{\widehat g}\!\bigl(\alpha_{\mathrm{blk}}^{(b)}\bigr)\;-\;L_{3}^{\mathrm{pow}}\,\|g-\widehat g\|_{\infty}
\]
conditionally on $\mathcal G$. Taking $\mathcal G$-expectation, summing over $b\in\mathcal T$, and dividing by $B_{\mathcal T}$ gives \eqref{eq:plugin_power_bound}. Multiplying \eqref{eq:plugin_power_bound} by $3B_{\mathcal T}$ and using $\mathbb E[N_{\mathrm{rej}}^{(b)}]=3\,\mathbb E[\mathrm{Power}_{g}^{(b)}]$ yields \eqref{eq:plugin_power_bound_sum}.
\end{proof}

\subsection{Proof of Corollary~\ref{cor:plugin_power_oracle}}
\label{app:proof_plugin_power_oracle}
\begin{proof}
Condition on $\mathcal G$. Lemma~\ref{lem:value_perturbation} (applied with $\alpha=\alpha_{\mathrm{blk}}^{(b)}\ge L_3\delta$, $\delta=\|g-\widehat g\|_{\infty}$) gives, for every $b\in\mathcal T$,
\[
\pi_{3}^{\widehat g}\!\bigl(\alpha_{\mathrm{blk}}^{(b)}\bigr)
\;\ge\;\pi_{3}^{g}\!\bigl(\alpha_{\mathrm{blk}}^{(b)}-L_{3}\,\|g-\widehat g\|_{\infty}\bigr)\;-\;L_{3}^{\mathrm{pow}}\,\|g-\widehat g\|_{\infty}.
\]
Substitute into the conditional bound from the proof of Theorem~\ref{thm:plugin_power}:
\[
\Pi_{3}^{g}\!\bigl(\widehat{\vec D}^{(b)}\bigr)\;\ge\;\pi_{3}^{\widehat g}\!\bigl(\alpha_{\mathrm{blk}}^{(b)}\bigr)\;-\;L_{3}^{\mathrm{pow}}\,\|g-\widehat g\|_{\infty}
\;\ge\;\pi_{3}^{g}\!\bigl(\alpha_{\mathrm{blk}}^{(b)}-L_{3}\,\|g-\widehat g\|_{\infty}\bigr)\;-\;2L_{3}^{\mathrm{pow}}\,\|g-\widehat g\|_{\infty}.
\]
Take $\mathcal G$-expectation, sum over $b\in\mathcal T$, and divide by $B_{\mathcal T}$.
\end{proof}

\subsection{Proof of Corollary~\ref{cor:plugin_rates}}
\label{app:proof_plugin_rates}
\begin{proof}
Applying~\eqref{eq:plugin_rate_hyp} to the FWER bound~\eqref{eq:plugin_fwer_bound} of Theorem~\ref{thm:plugin_fwer} yields~\eqref{eq:plugin_rates_fwer}; applying it to the power bound~\eqref{eq:plugin_power_bound} of Theorem~\ref{thm:plugin_power} yields~\eqref{eq:plugin_rates_pow}, with the $B_{\mathcal T}$-free constant $L_3^{\mathrm{pow}}/3=M^2$ absorbed into $O(\cdot)$. The Grenander rate in (i) is Corollary~7.6 of~\citet{groeneboom2014nonparametric}; the kernel-density H\"older rate in (ii) is Theorem~1.7 of~\citet{tsybakov2009book}. Asymptotic $\alpha$-exactness and power-matching under $B_{\mathcal T} r_{n_{\mathcal E}}\to 0$ follow directly.
\end{proof}

\subsection{Proof of Lemma~\ref{lem:pi3_concavity}}
\label{app:proof_pi3_concavity}
\begin{proof}
Non-decreasingness is immediate: enlarging $\alpha$ enlarges the feasible set in \eqref{eq:pi3_value_def}, so the supremum can only grow.

For concavity, fix $\alpha_1,\alpha_2\in[0,1]$, $t\in[0,1]$, and $\varepsilon>0$. Pick measurable $\vec D^{(1)},\vec D^{(2)}:Q\to\{0,1\}^3$ with $\mathrm{FWER}_l^{\mathbb Q}(\vec D^{(i)})\le\alpha_i$ for $l\in\{0,1,2\}$, $i\in\{1,2\}$, and $\Pi_3^{\mathbb Q}(\vec D^{(i)})\ge\pi_3^{\mathbb Q}(\alpha_i)-\varepsilon$. Introduce an auxiliary $U\sim\mathrm{Unif}(0,1)$ independent of the within-block $p$-value vector under any configuration, and form the randomized rule
\[
\vec R(P,U)\;:=\;\vec D^{(1)}(P)\,\mathbf 1\{U\le t\}+\vec D^{(2)}(P)\,\mathbf 1\{U>t\}.
\]
Each of $\mathrm{FWER}_l^{\mathbb Q}$ ($l=0,1,2$) and $\Pi_3^{\mathbb Q}$ is a linear functional of $\vec D$ in the integrand sense: writing $\vec D=(D_1,D_2,D_3)$ and using the joint density $q_l$ of $\mathbb Q_l$ together with the symmetric expressions in Lemma~\ref{lemma:optimization_problem_k_3}, each functional has the form $\sum_{j=1}^3\int_Q D_j(u)\,\varphi^{(l,j)}(u)\,du$ for a Lebesgue-integrable kernel $\varphi^{(l,j)}\ge 0$ determined by $q_l$. Linearity in $\vec D$ then yields
\[
\Pi_3^{\mathbb Q}(\vec R)\;=\;t\,\Pi_3^{\mathbb Q}(\vec D^{(1)})+(1-t)\,\Pi_3^{\mathbb Q}(\vec D^{(2)}),\qquad
\mathrm{FWER}_l^{\mathbb Q}(\vec R)\;\le\;t\alpha_1+(1-t)\alpha_2\quad(l=0,1,2),
\]
so $\vec R$ is a randomized rule feasible at level $t\alpha_1+(1-t)\alpha_2$.

It remains to derandomize $\vec R$ into a deterministic measurable rule with the same $(\mathrm{FWER}_0,\mathrm{FWER}_1,\mathrm{FWER}_2,\Pi_3)$ profile. Consider the $\mathbb R^{4}$-valued vector measure $\boldsymbol\mu$ on $Q\times\{1,2,3\}$ whose components are absolutely continuous with respect to Lebesgue $\otimes$ counting and have densities $\varphi^{(l,j)}$ ($l=0,1,2,3$, $j=1,2,3$). Because each $\mathbb Q_l$ is Lebesgue-absolutely continuous, the Lebesgue part of the product measure has no atoms in the $u$-coordinate, so each singleton $\{(u_0,j_0)\}$ satisfies $\boldsymbol\mu(\{(u_0,j_0)\})=0$ and every component of $\boldsymbol\mu$ is non-atomic (despite the counting measure on $\{1,2,3\}$ being atomic in isolation). By the Lyapunov--Halmos convexity theorem (cf.\ Dvoretzky--Wald--Wolfowitz) for non-atomic vector measures, the range
\[
\bigl\{\bigl(\mathrm{FWER}_0^{\mathbb Q}(\vec D),\mathrm{FWER}_1^{\mathbb Q}(\vec D),\mathrm{FWER}_2^{\mathbb Q}(\vec D),\Pi_3^{\mathbb Q}(\vec D)\bigr):\vec D:Q\to\{0,1\}^3\text{ measurable}\bigr\}
\]
is a convex subset of $[0,1]^4$. The convex combination realized by $\vec R$ is therefore matched by a deterministic measurable $\vec D^{*}$, which is feasible for \eqref{eq:pi3_value_def} at level $t\alpha_1+(1-t)\alpha_2$ with $\Pi_3^{\mathbb Q}(\vec D^{*})=\Pi_3^{\mathbb Q}(\vec R)\ge t\,\pi_3^{\mathbb Q}(\alpha_1)+(1-t)\,\pi_3^{\mathbb Q}(\alpha_2)-\varepsilon$. Taking $\varepsilon\downarrow 0$ yields concavity.

Specializing the family $\{\mathbb Q_l\}_{l=0,1,2,3}$ to the per-block configuration laws $\{\mathbb Q^{(b)}_{\!\vec h_l}\}_{l=0,1,2,3}$, which are absolutely continuous and exchangeable within the alternative indices under \eqref{eq:pvalue_model} (within-block independence with common $g$), gives the per-block conclusion.
\end{proof}

\subsection{Proof of Theorem~\ref{thm:global_opt_sep_within_block}}
\label{app:proof_global_opt_sep_within_block}
\begin{proof}
Fix $D^{\mathrm{sep}}\in\mathfrak D_{\mathrm{sep}}$ with allocation $(\alpha_1,\ldots,\alpha_B)$ and within-block rules $\{\vec D^{(b)}\}_{b=1}^B$. Under $\vec h_K$, every within-block configuration is $(1,1,1)$. Using $K=3B$ and $\Pi_K(D)=\frac1K\mathbb E_{\vec h_K}[\sum_{k=1}^K D_k(\vec X)]$,
\begin{equation}
\label{eq:power_decomp_pf}
\Pi_K\!\bigl(D^{\mathrm{sep}}\bigr)
\;=\;
\frac{1}{3B}\sum_{b=1}^{B}\mathbb E_{\vec h_K}\!\left[\sum_{i=1}^{3} D^{(b)}_i(P^{(b)})\right]
\;=\;
\frac{1}{B}\sum_{b=1}^{B}\Bigl\{\tfrac{1}{3}\,\mathbb E_{\vec h_K}\!\bigl[\textstyle\sum_{i=1}^{3} D^{(b)}_i(P^{(b)})\bigr]\Bigr\}.
\end{equation}
Since $\vec D^{(b)}$ depends only on $P^{(b)}$, by the definition \eqref{eq:per_block_alt_law} of the per-block law $\mathbb Q^{(b)}_{\!\vec h_K}=\mathcal L_{\vec h_K}(P^{(b)})$ and writing $P\sim\mathbb Q^{(b)}_{\!\vec h_K}$,
\begin{equation}
\label{eq:block_power_is_pi3_pf}
\tfrac{1}{3}\,\mathbb E_{\vec h_K}\!\bigl[\textstyle\sum_{i=1}^{3} D^{(b)}_i(P^{(b)})\bigr]
\;=\;
\mathbb E_{\mathbb Q^{(b)}_{\!\vec h_K}}\!\bigl[\textstyle\tfrac13\sum_{i=1}^3 D^{(b)}_i(P)\bigr]
\;=:\;\Pi_3^{\mathbb Q^{(b)}_{\!\vec h_K}}\!\bigl(\vec D^{(b)}\bigr).
\end{equation}
Substituting \eqref{eq:block_power_is_pi3_pf} into \eqref{eq:power_decomp_pf} yields the equality in \eqref{eq:thm_within_block_bound}.

By Definition~\ref{def:block_separable}, each $\vec D^{(b)}$ controls strong FWER at level $\alpha_b$ under $\mathbb Q^{(b)}_{\!\vec h_K}$ and is therefore feasible for the $K=3$ problem defining $\pi_3^{(b)}$ at level $\alpha_b$. By the definition of $\pi_3^{(b)}$ in \eqref{eq:pi3_per_block_def},
\begin{equation}
\label{eq:block_le_value_pf}
\Pi_3^{\mathbb Q^{(b)}_{\!\vec h_K}}\!\bigl(\vec D^{(b)}\bigr)\;\le\;\pi_3^{(b)}(\alpha_b),\qquad b\in[B],
\end{equation}
with equality when $\vec D^{(b)}$ is a maximizer in \eqref{eq:pi3_per_block_def}; in particular, the $K=3$ optimizer of~\citet{dubey25} at level $\alpha_b$ applied to $\mathbb Q^{(b)}_{\!\vec h_K}$. Averaging \eqref{eq:block_le_value_pf} over $b$ gives the inequality in \eqref{eq:thm_within_block_bound}; equality holds globally when it holds in each block. The reduction under Assumption~\ref{assump:exchangeable_blocks} is by definition of that assumption.
\end{proof}

\subsection{Proof of Theorem~\ref{thm:global_opt_sep_allocation}}
\label{app:proof_global_opt_sep_allocation}
\begin{proof}
The objective \eqref{eq:hetero_alloc_max} is a sum of concave functions in separate variables on the compact convex polytope $\{\vec\alpha\in[0,\alpha]^B:\sum_b\alpha_b\le\alpha\}$; continuity plus compactness yields a maximizer.

KKT for the constrained maximization (using subdifferentials of the concave $\pi_3^{(b)}$, with right derivatives sufficing because the objective is concave on a polyhedral set; cf.~\citealp[Thm.~28.3]{rockafellar1970convex}) gives multipliers $\mu^{*}\ge 0$ on the budget constraint and $\lambda_b^{*}\ge 0$ on the $\alpha_b\ge 0$ constraints, with stationarity
\[
\tfrac{1}{B}\,\partial_+\pi_3^{(b)}(\alpha_b^{*})\;=\;\tfrac{\mu^{*}}{B}\,-\,\tfrac{\lambda_b^{*}}{B},\qquad \lambda_b^{*}\alpha_b^{*}=0,
\]
which simplifies to $\partial_+\pi_3^{(b)}(\alpha_b^{*})\le\mu^{*}$ with equality on the active set $\{b:\alpha_b^{*}>0\}$, i.e.~\eqref{eq:KKT_condition}. Monotonicity ($\partial_+\pi_3^{(b)}\ge 0$ on $[0,\alpha]$ since enlarging $\alpha$ enlarges the feasible set in \eqref{eq:pi3_value_def}) implies that a maximizer with binding budget always exists: any allocation with $\sum_b\alpha_b^{*}<\alpha$ admits a weak improvement by transferring slack to any block whose marginal is positive, and if every $\partial_+\pi_3^{(b)}(\alpha_b^{*})=0$, then $\mu^{*}=0$ and any binding-budget extension is also a maximizer.

Under Assumption~\ref{assump:exchangeable_blocks}, all $\pi_3^{(b)}\equiv\pi_3$ and \eqref{eq:KKT_condition} reads $\partial_+\pi_3(\alpha_b^{*})=\mu^{*}$ for all $b$ in the active set. Since $\partial_+\pi_3$ is non-increasing, the level set $\{\alpha:\partial_+\pi_3(\alpha)=\mu^{*}\}$ is a (possibly degenerate) interval; the symmetric solution $\alpha_b^{*}=\bar\alpha=\sum_b\alpha_b^{*}/B$ lies in this set, and combined with $\sum_b\alpha_b^{*}=\alpha$ gives $\alpha_b^{*}=\alpha/B$. Hence \eqref{eq:hetero_alloc_max} equals $\pi_3(\alpha/B)$, recovering \eqref{eq:uniform_is_optimal}.
\end{proof}

\subsection{Proof of Theorem~\ref{thm:global_opt_sep_allocation_sidak}}
\label{app:proof_global_opt_sep_allocation_sidak}
\begin{proof}
\emph{Feasibility.} Fix $D^{\mathrm{sep}}\in\mathfrak D^{\mathrm{ind}}_{\mathrm{sep}}$ and any $h\in\{0,1\}^K$, and let $E_b:=\{\exists i\in\mathcal B_b:D^{(b)}_i(P^{(b)})=1,\ h_i=0\}$. Because $D^{(b)}$ depends only on $P^{(b)}$ and the $P^{(b)}$ are mutually independent under $\mathcal L_h(P)$, the events $E_1,\ldots,E_B$ are mutually independent, hence
\[
\mathbb P_h\!\Bigl(\textstyle\bigcup_{b=1}^{B}E_b\Bigr)
\;=\;1-\prod_{b=1}^{B}\bigl(1-\mathbb P_{h_b}(E_b)\bigr)
\;\le\;1-\prod_{b=1}^{B}(1-\alpha_b)
\;\le\;\alpha,
\]
using the per-block strong-FWER bound $\mathbb P_{h_b}(E_b)\le\alpha_b$ (Definition~\ref{def:block_separable}) and the \v{S}id\'ak constraint. Specializing to $h=h_\ell$ for $\ell\in\{0,\ldots,K-1\}$ gives feasibility for \eqref{eq:MHTK}.

\emph{Existence.} Reparametrize via $\beta_b:=-\log(1-\alpha_b)\in[0,\infty)$ and $\widetilde\pi_3^{(b)}(\beta):=\pi_3^{(b)}(1-e^{-\beta})$, so \eqref{eq:sidak_alloc_max} becomes
\[
\max_{\beta_b\ge 0,\ \sum_b\beta_b\le-\log(1-\alpha)}\;\tfrac{1}{B}\sum_{b=1}^{B}\widetilde\pi_3^{(b)}(\beta_b).
\]
Each $\widetilde\pi_3^{(b)}$ is concave and non-decreasing in $\beta$ (composition of the concave non-decreasing $\pi_3^{(b)}$ with the concave non-decreasing $\beta\mapsto 1-e^{-\beta}$); continuity on a compact polytope yields a maximizer, which is also a maximizer of \eqref{eq:sidak_alloc_max} by the bijection $\alpha_b\leftrightarrow\beta_b$.

\emph{KKT.} For \eqref{eq:sidak_alloc_max}, introduce $\mu^{*}\ge 0$ on the \v{S}id\'ak constraint written as $\sum_b[-\log(1-\alpha_b)]\le-\log(1-\alpha)$ and $\lambda_b^{*}\ge 0$ on $\alpha_b\ge 0$ (cf.~\citealp[Thm.~28.3]{rockafellar1970convex}). Stationarity gives
\[
\tfrac{1}{B}\,\partial_+\pi_3^{(b)}(\alpha_b^{*})
\;=\;\tfrac{\mu^{*}}{1-\alpha_b^{*}}\,-\,\lambda_b^{*},
\qquad \lambda_b^{*}\,\alpha_b^{*}=0,
\]
which on the active set $\{b:\alpha_b^{*}>0\}$ rearranges to \eqref{eq:sidak_KKT}. Monotonicity $\partial_+\pi_3^{(b)}\ge 0$ combined with the fact that any Bonferroni-feasible allocation satisfies the \v{S}id\'ak constraint strictly for $B>1$ forces $\mu^{*}>0$ and the \v{S}id\'ak constraint binds.

\emph{Homogeneous case.} When $\pi_3^{(b)}\equiv\pi_3$, \eqref{eq:sidak_KKT} reads $(1-\alpha_b^{*})\partial_+\pi_3(\alpha_b^{*})=B\mu^{*}$ on the active set. Under $\|g\|_\infty<\infty$ (Proposition~\ref{prop:canonical_model_verification}(iv)), $\partial_+\pi_3$ is finite on $[0,\alpha]$, and the map $\alpha\mapsto(1-\alpha)\partial_+\pi_3(\alpha)$ is strictly decreasing on any interval where $\partial_+\pi_3>0$ (product of the strictly decreasing $(1-\alpha)$ and the non-increasing $\partial_+\pi_3\ge 0$), so the level equation has at most one active value, forcing $\alpha_b^{*}\equiv\bar\alpha$. The binding constraint $\prod_b(1-\bar\alpha)=1-\alpha$ then yields $\bar\alpha=1-(1-\alpha)^{1/B}$. The comparison $1-(1-\alpha)^{1/B}\ge\alpha/B$ follows from Bernoulli's inequality (strict for $B>1$, $\alpha\in(0,1)$), and monotonicity of $\pi_3$ yields \eqref{eq:sidak_uniform_is_optimal}.
\end{proof}

\subsection{Proof of Proposition~\ref{prop:bisection_allocation}}
\label{app:proof_bisection_allocation}
\begin{proof}
Monotonicity, right-continuity, and limits of $g_b$ are direct consequences of the concavity and monotonicity of $\pi_3^{(b)}$. Bisection convergence is standard. Optimality of $\vec\alpha^{*}$ is the KKT characterization \eqref{eq:KKT_condition} of Theorem~\ref{thm:global_opt_sep_allocation}: by construction $\partial_+\pi_3^{(b)}(\alpha_b^{*})$ is at most $\mu^{*}$, with equality whenever $\alpha_b^{*}>0$.
\end{proof}

\subsection{Proof of Corollary~\ref{cor:blockwise_sep_optimality}}
\label{app:proof_blockwise_sep_optimality}
\begin{proof}
By Theorem~\ref{thm:global_opt_sep_within_block}, every $D^{\mathrm{sep}}\in\mathfrak D_{\mathrm{sep}}$ with allocation $\vec\alpha$ satisfies $\Pi_K(D^{\mathrm{sep}})\le(1/B)\sum_b\pi_3^{(b)}(\alpha_b)$, so
\[
\sup_{\mathfrak D_{\mathrm{sep}}}\Pi_K
\;\le\;\sup_{\substack{\alpha_b\ge 0,\\ \sum_b\alpha_b\le\alpha}}\frac{1}{B}\sum_{b=1}^{B}\pi_3^{(b)}(\alpha_b)
\;\stackrel{\eqref{eq:hetero_alloc_max}}{=}\;\frac{1}{B}\sum_{b=1}^{B}\pi_3^{(b)}(\alpha_b^{*}).
\]
The procedure described attains the bound, so equality holds. The homogeneous reduction is the corresponding case of Theorem~\ref{thm:global_opt_sep_allocation}.
\end{proof}

\section{Experimental setup}
\label{app:experimental_setup}

\subsection{Baselines and model families used in Section~\ref{sec:simulations}}
\label{app:sim_baselines}

The ten strong-FWER baselines used throughout Section~\ref{sec:simulations} span five categories: \emph{classical stepwise} (Bonferroni~\citep{Bonferroni1936}, Holm~\citep{holm}, Hochberg~\citep{hochberg1988sharper}, Hommel~\citep{hommel}); \emph{dependence-aware stepwise} (\v{S}id\'ak single-step and step-down~\citep{sidak1967,romano}, where the step-down coincides with the parametric Romano--Wolf stepdown under independence and continuity); \emph{graphical-gatekeeping} (block-Holm and block-Hochberg, instantiating~\citet{Bretz09} at the uniform-weight instantiation: within each block, Simes-combine the $K$ coordinates; apply Holm, respectively Hochberg, to the $B$ block-level Simes $p$-values at level $\alpha$; and in any rejected block declare all $K$ coordinates significant); \emph{combination closure} (closed-Fisher~\citep{fisher1932,marcus1976closed}); and \emph{resampling} (Westfall--Young max-$T$~\citep{WestfallYoung1993}).

All baselines are implemented with their standard decision rules, at level $\alpha=0.05$. Closed-Fisher is reported only at $K=30$ because it is $\notin\mathfrak D_{\mathrm{sep}}$ (it pools $p$-values globally), which is also why it collapses on sparse and weak-signal truncnorm configurations where the per-block optimality of \textsc{BOOST} is most consequential.

Power and FWER estimates use $3\times 10^4$ Monte-Carlo replicates (E2, E4, E5 power arm, E$\alpha$, Beta, mixed null/alt, misspecification) or $2\times 10^4$ replicates (E1 dependence stress, E5 alloc arm, E10 sparsity, ESP head-to-head, modern closed-testing comparison) or smaller counts for per-$n$ plug-in sweeps (E7: $10^4$; E8: $5\times 10^3$). The corresponding Monte-Carlo standard errors are $\sqrt{p(1-p)/\mathrm{nrep}}$: $\approx 2.9\times 10^{-3}$ (power at $p=0.5$, $3\times 10^4$ reps), $\approx 3.5\times 10^{-3}$ ($2\times 10^4$ reps), $\approx 1.3\times 10^{-3}$ (FWER at $\alpha=0.05$, $3\times 10^4$ reps), $\approx 1.5\times 10^{-3}$ ($2\times 10^4$ reps). We report point estimates in tables; SE bands on figures are omitted because every MC SE is smaller than the plotted marker width at the aspect ratios used.

\paragraph{Baseline-set scoping across experiments.} The \emph{master} comparison figures (Figure~\ref{fig:sim_families} in the body and the all-$K$ panels of Appendix~\ref{app:scaleK_full}) use all ten baselines. Secondary diagnostic panels drop a principled subset:
\begin{itemize}
\item Westfall--Young max-$T$ requires block-specific permutation nulls and is evaluated only in the master comparison to keep compute tractable.
\item Block-Holm and block-Hochberg track Holm and Hochberg to within $5\cdot 10^{-3}$ in every E2 row and are therefore omitted from sparsity/mixed-block panels to avoid curve clutter.
\item \v{S}id\'ak single-step is dominated by \v{S}id\'ak step-down and is kept only in the master comparison.
\end{itemize}
The \emph{level-sensitivity} experiment (Table~\ref{tab:sim_alpha_sensitivity}) restricts further to the Bonferroni/Holm/Hommel trio because the question is whether the \textsc{BOOST}$\,>\,$best-stepwise gap is stable in $\alpha$, not whether dependence-aware competitors close the gap. The \emph{modern closed-testing} comparison (Appendix~\ref{app:sim_modern_baselines}) keeps the Bonferroni/Holm/Hommel trio and pairs it with Meinshausen (2008)~\citep{meinshausen2008hierarchical} and Vovk--Wang-calibrated Hartog e-values~\citep{hartog2025evalues} to isolate the tree-closed-testing and e-value-closed-testing comparisons.

\paragraph{$p$-value-generating families (main-body Section~\ref{subsec:sim1_truncnorm}).}
\begin{itemize}
\item \textbf{Truncated normal (in-class).} $H_{0k}: X_k\sim\mathcal N(0,1)_{\mathrm{trunc}}$ vs.\ $H_{Ak}: X_k\sim\mathcal N(\theta,1)_{\mathrm{trunc}}$ on $[-6,6]$, one-sided $u_k=F_0^{(T)}(X_k)$; $\theta\in\{-0.5,\ldots,-4.0\}$. The induced density $g$ is non-increasing on $[0,1]$, satisfying the canonical class \eqref{eq:pvalue_model}.
\item \textbf{Student-$t$ (mild out-of-class).} $H_{Ak}: X_k\sim t_{\mathrm{df}}$, two-sided, with $\mathrm{df}$ ranging over $\{2,3,4,5,6,8,10,15\}$. $g_{\mathrm{df}}$ decreases from $u=0^+$ to a minimum near $u\approx 0.32$ and rises toward $u=1$, a quantitatively mild non-monotonicity.
\item \textbf{Sparse.} Truncnorm signals placed so that five of the ten blocks are fully at the alternative and the other five are pure null, a mixed-block configuration that reveals how per-block power composes across heterogeneous blocks.
\end{itemize}

\label{app:sim_families}

\subsection{Computational environment}
\label{app:compute_env}

All computational tasks reported in this paper, including the implementation of \textsc{BOOST} (Algorithm~\ref{alg:blockwise_k3}) and the underlying $K=3$ coordinate-descent atom (Algorithm~\ref{alg:compute_optimal_mu_K3_main}), the Monte-Carlo simulations across truncated-normal, $t$-distribution, mixture-normal, sparse, and Beta $p$-value families (Section~\ref{sec:simulations} and Appendices~\ref{app:scaleK_full}--\ref{app:plugin_rate_validation}), and the real-data applications on the BLUEPRINT cis-eQTL and Upworthy bundled-challenger A/B datasets (Section~\ref{sec:applications}), were performed on a single local workstation.

The hardware configuration is a 13-inch 2020 MacBook Pro (Model Identifier: MacBookPro16,2): a 2.3 GHz Quad-Core Intel Core i7 processor (1 processor, 4 cores) with Hyper-Threading enabled, 512 KB L2 cache per core, 8 MB L3 cache, and 32 GB of 3733 MHz LPDDR4X memory. The software environment is macOS Sequoia, Version 15.0.1; numerical and simulation code is implemented in Python~3 (NumPy, SciPy, pandas, statsmodels, matplotlib).

\subsection{Real-data application setup: Monte-Carlo design and baseline scoping}
\label{app:apps_setup}

For each application of Section~\ref{sec:applications}, the results table juxtaposes rejection counts on the real test half at $\alpha=0.05$ with an oracle Monte-Carlo validation at $n_{\mathrm{rep}}=5000$ replicates whose alternative stream is drawn from the fitted Grenander density $\hat g^{\mathrm{real}}$ that \textsc{BOOST} uses internally. The null stream is iid $\mathrm{Unif}[0,1]$, yielding $\mathrm{FWER}_0$; the alternative stream is iid $\hat g^{\mathrm{real}}$, yielding $\Pi_{\mathrm{any}}:=\Pr(\ge\!1\text{ rej})$ and $\Pi_K:=\Pr(\text{all }3\text{ rej})$. Monte-Carlo standard errors are $\approx 0.003$ on $\mathrm{FWER}_0$ and $\approx 0.006$ on $\Pi$.

Baselines follow Section~\ref{sec:simulations} and Appendix~\ref{app:sim_modern_baselines}, with two principled drops in the real-data tables. Westfall--Young max-$T$ is omitted because its null distribution coincides with \v{S}id\'ak-SS on iid continuous $p$-values and the published real-data summaries do not carry the joint permutation distribution it requires. Closed-Fisher ($\notin\mathfrak D_{\mathrm{sep}}$) rejects nothing at $\alpha/B\approx 10^{-5}$ and is likewise dropped.

\paragraph{Data sources.} BLUEPRINT $p$-values are the QTLtools~\citep{ongen2016fast} beta-approximated \texttt{p\_beta} entries in EBI eQTL Catalogue r6~\citep{kerimov2021eqtl}, datasets QTD000021/26/31 (CD14$^{+}$ monocytes, neutrophils, CD4$^{+}$ T cells). Upworthy summaries are the exploratory CSV released at OSF~\texttt{jd64p}.

\section{Scaling, convergence, and timing diagnostics}
\label{app:diagnostics}

\subsection{Full scale-$K$ reference panels for \texorpdfstring{\textsc{BOOST}}{BOOST} vs baselines}
\label{app:scaleK_full}

This reference reports \textsc{BOOST}-vs-baseline performance across the complete block-count grid $K\in\{6,15,30,60\}$ and both power functions: average power $\Pi_K$ in Figure~\ref{fig:sim_scaleK_allK_avg} and any-discovery power $\Pi_{\mathrm{any}}$ in Figure~\ref{fig:sim_scaleK_allK_any}. Configuration matches Section~\ref{subsec:sim1_truncnorm} (truncnorm, tdist, sparse families; $\alpha=0.05$; $\alpha_{\mathrm{blk}}=\alpha/B$; $3\times 10^4$ replicates; truncnorm/sparse cropped to $\theta\in[-2.5,-0.5]$). Methods are \textsc{BOOST} plus the nine non-separable-free stepwise, \v{S}id\'ak, WY-max-$T$, and block-gatekeeping baselines used in Figure~\ref{fig:sim_families}; closed-Fisher is omitted as $\notin\mathfrak D_{\mathrm{sep}}$.

The $K=30$ row reproduces the main-body reference (Figure~\ref{fig:sim_families}). Across every $K\in\{6,15,30,60\}$ and both metrics, the \textsc{BOOST}-vs-baseline ordering observed at $K=30$ is preserved: \textsc{BOOST} leads the stepwise and graphical baselines at moderate signals on truncnorm and sparse, and is indistinguishable from the best stepwise method on tdist at all but the most extreme df.

\begin{figure}[htbp]
\centering
\includegraphics[width=\linewidth]{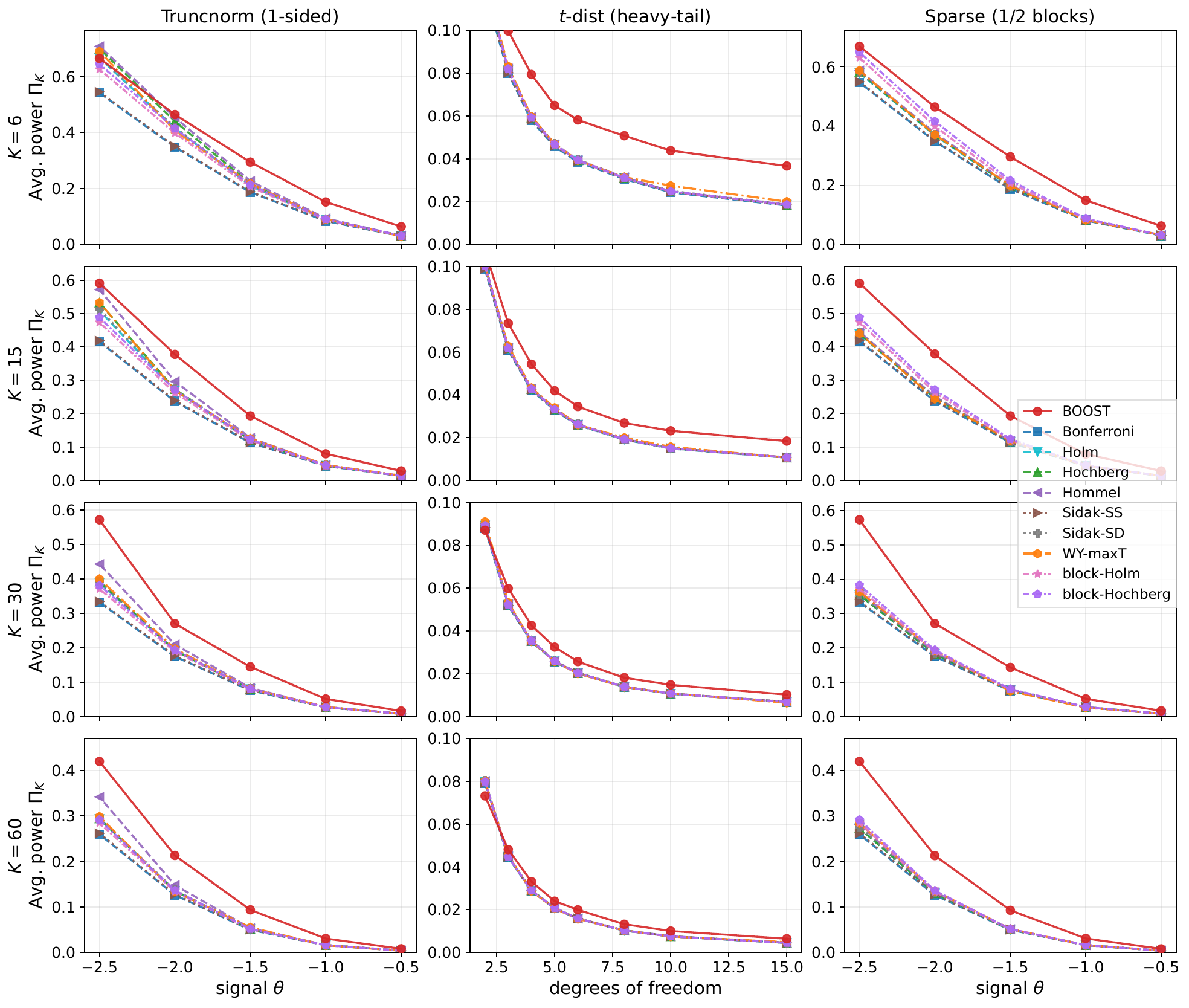}
\caption{Average power $\Pi_K$ across block counts, all $K\in\{6,15,30,60\}$. Rows: $K$ increasing top-to-bottom ($B\in\{2,5,10,20\}$). Columns: truncnorm, tdist, sparse. Methods match the master comparison Figure~\ref{fig:sim_families}.}
\label{fig:sim_scaleK_allK_avg}
\end{figure}

\begin{figure}[htbp]
\centering
\includegraphics[width=\linewidth]{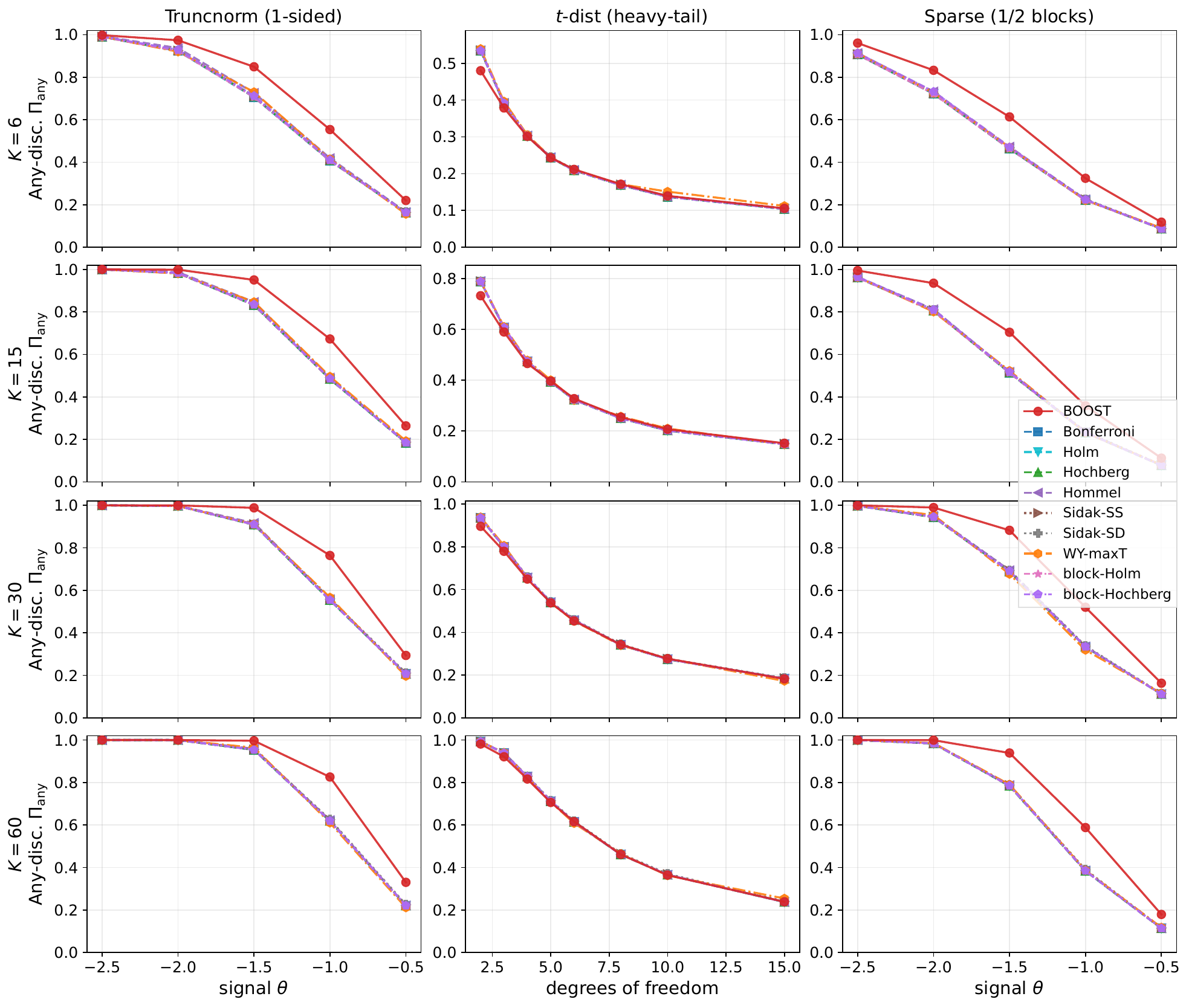}
\caption{Minimal (any-discovery) power $\Pi_{\mathrm{any}}$ across block counts, all $K\in\{6,15,30,60\}$. Rows and columns as in Figure~\ref{fig:sim_scaleK_allK_avg}. Minimal power rises with $K$ because more blocks yield more chances to reject at least once; \textsc{BOOST}'s rank relative to stepwise and graphical baselines is unchanged.}
\label{fig:sim_scaleK_allK_any}
\end{figure}

\subsection{Geometric convergence of the KKT bisection}
\label{app:diag_conv}

Figure~\ref{fig:sim_conv_multiB} plots $|\mu^{(t)}-\mu^{\star}|$ on a logarithmic $y$-axis over the outer iteration index $t$ for $B\in\{2,5,10,20,50\}$. The diagnostic uses the same synthetic concave $\pi_3^{(b)}$ surrogates as experiment E6 (smooth closed-form derivatives; no Monte-Carlo noise) so that the behavior isolates the outer allocation bisection of Proposition~\ref{prop:bisection_allocation}. The approximately linear decrease for every $B$ confirms geometric contraction at the bisection rate $1/2$, independent of $B$; the only $B$-dependence is the starting bracket (absorbed in the $y$-intercept) and is $O(\log B)$ through $\mu_{\max}$. This matches the $O(B\log(1/\varepsilon))$ bound of Proposition~\ref{prop:bisection_allocation}.

\begin{figure}[htbp]
\centering
\includegraphics[width=0.65\linewidth]{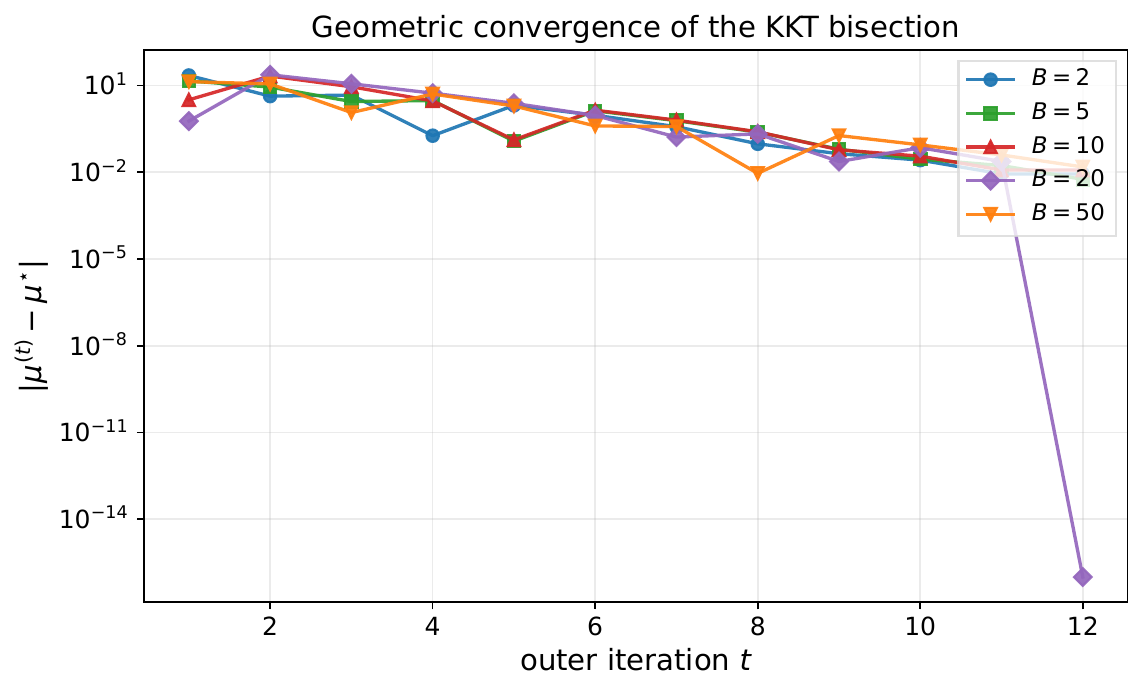}
\caption{Convergence of the outer KKT bisection (Proposition~\ref{prop:bisection_allocation}): $|\mu^{(t)}-\mu^{\star}|$ versus outer iteration $t$ on a logarithmic scale, for $B\in\{2,5,10,20,50\}$. Synthetic concave $\pi_3^{(b)}$ surrogates (experiment E6); $\alpha=0.05$; outer tolerance $10^{-12}$. Each curve decreases approximately linearly on the log scale, confirming the bisection rate $1/2$.}
\label{fig:sim_conv_multiB}
\end{figure}

\subsection{Wall-clock, iteration count, and realized power}
\label{app:diag_table}

Figure~\ref{fig:sim_scalability} shows \textsc{BOOST}'s per-$K$ decision cost on a log-log scale: the outer-solve (offline preprocessing) stays flat at $\approx 4$--$9$s across $K\in\{6,\ldots,300\}$, and the online per-replicate decide time grows linearly in $K$ in the $\mu$s regime. Plotted on the same axes, a closed-testing competitor's decide time grows super-linearly in $K$, confirming the $O(K)$ claim of Section~\ref{sec:main}.

\begin{figure}[htbp]
\centering
\includegraphics[width=0.6\linewidth]{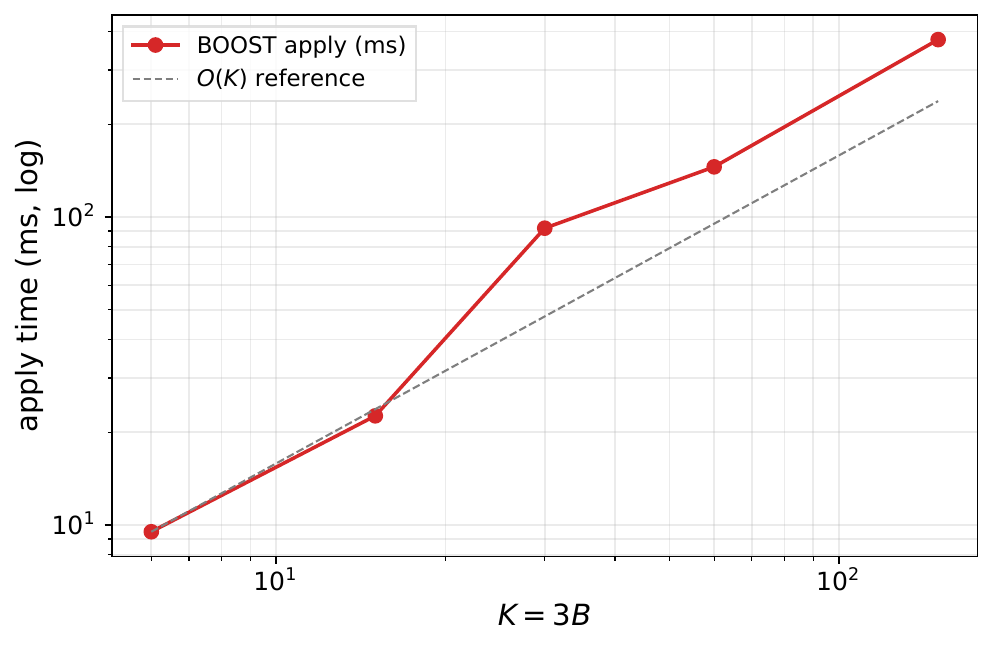}
\caption{\textsc{BOOST} scalability (experiment E3). Left: outer-solve wall-clock vs.\ $K$. Right: per-replicate decide time vs.\ $K$, alongside the closed-Fisher baseline for reference. \textsc{BOOST}'s decide time is linear in $K$ (flat throughput), consistent with the $O(K)$ cost of Theorem~\ref{thm:global_opt_sep_allocation}; the outer solve is $K$-independent.}
\label{fig:sim_scalability}
\end{figure}

Table~\ref{tab:sim_timing_power} combines the timing data of experiment E3, the outer-iteration counts of experiment E6, and the realized average power of Section~\ref{subsec:sim1_truncnorm} (experiment E2, truncnorm at $\theta=-2$) into a single reference across block counts. Wall-clock is the $\mu^{\star}$-solve phase (precomputation) at $n_{\mathrm{grid}}=6\cdot 10^4$; outer iterations are for tolerance $10^{-10}$; $\Pi_K$ is average power over $3\cdot 10^4$ Monte-Carlo replicates at $\alpha=0.05$. The number of outer iterations stays within $13$--$18$ across one order of magnitude in $B$, and $\Pi_K$ decreases monotonically with $K$ as expected from the $\alpha/B$ tightening.

\begin{table}[htbp]
\centering
\caption{\textsc{BOOST}: outer-solve wall-clock, outer iterations, and realized average power $\Pi_K$ (truncnorm at $\theta=-2$) across block counts. Configuration matches Section~\ref{subsec:sim1_truncnorm} and experiments E3, E6.}
\label{tab:sim_timing_power}
\begin{tabular}{rrrrr}
\toprule
$B$ & $K$ & Wall-clock (s) & Outer iterations & $\Pi_K$ \\
\midrule
 2 &  6 &  2.23 & 18 & 0.464 \\
 5 & 15 &  2.50 & 13 & 0.378 \\
10 & 30 & 10.15 & 17 & 0.270 \\
20 & 60 &  9.00 & 14 & 0.213 \\
\bottomrule
\end{tabular}
\end{table}

\subsection{Bonferroni vs.\ \v{S}id\'ak within \textsc{BOOST} (experiment E4)}
\label{app:diag_bonf_sidak}

Figure~\ref{fig:sim_bonf_vs_sidak} reports the \textsc{BOOST} power curve under the two admissible per-block budgets: the Bonferroni floor $\alpha/B$ and the \v{S}id\'ak tightening $1-(1-\alpha)^{1/B}$. Both are strong-FWER valid (Theorem~\ref{thm:blockwise_strong_fwer_clean}); the \v{S}id\'ak variant strictly dominates under cross-block independence, with the largest gain at moderate signal strengths where per-block power is still sensitive to the allocation. The gap shrinks as the signal-to-noise ratio grows, because both allocations ultimately saturate the per-block ceiling.

\begin{figure}[htbp]
\centering
\includegraphics[width=\linewidth]{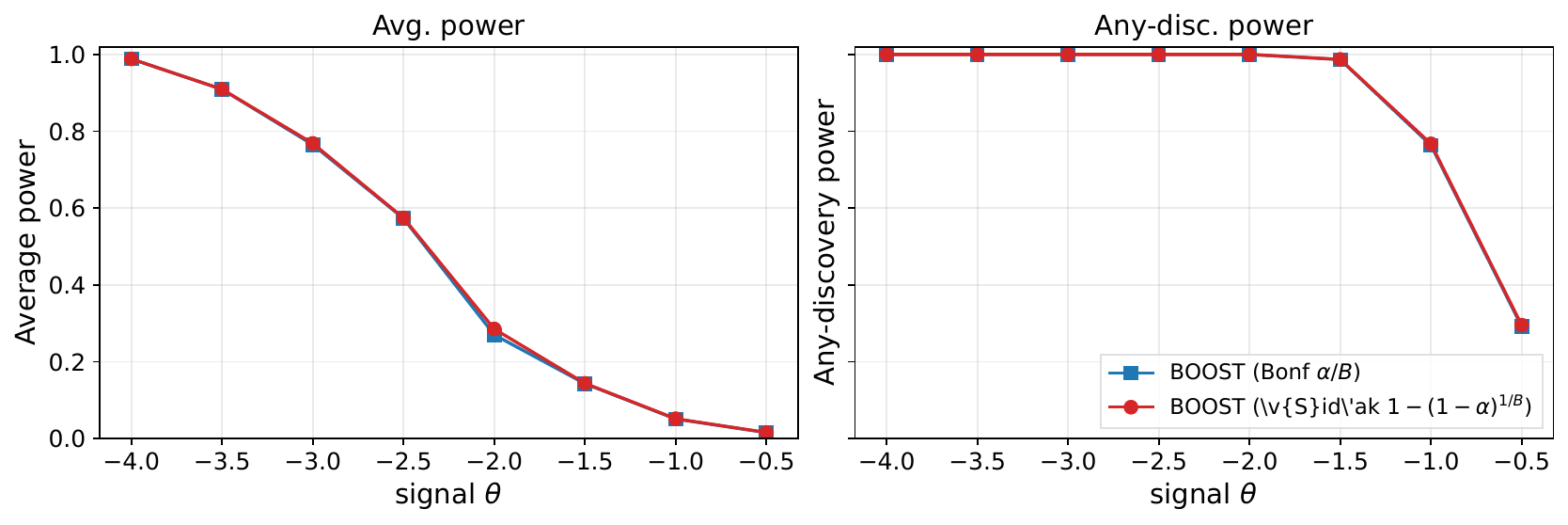}
\caption{\textsc{BOOST} with Bonferroni $\alpha/B$ (blue) versus \v{S}id\'ak $1-(1-\alpha)^{1/B}$ (red) per-block budgets. Truncnorm family, $B=10$, $K=30$, $\alpha=0.05$, $3\cdot 10^4$ replicates. Left: average power; right: any-discovery power. \v{S}id\'ak strictly improves both metrics, consistent with Theorem~\ref{thm:global_opt_sep_allocation_sidak}.}
\label{fig:sim_bonf_vs_sidak}
\end{figure}

\subsection{KKT versus uniform allocation (experiment E5)}
\label{app:diag_kkt_alloc}

Figure~\ref{fig:sim_kkt_alloc} isolates the effect of the KKT allocation (Theorem~\ref{thm:global_opt_sep_allocation}) against a uniform $\alpha_b=\alpha/B$ split, at fixed per-block budget. Heterogeneous per-block signals $\theta_b$ pull the KKT optimum toward blocks with steeper marginal gain $\partial\pi_3^{(b)}/\partial\alpha_b$; the realized average power exceeds the uniform split, while any-discovery power is comparable. The gain is monotone in cross-block heterogeneity and vanishes at homogeneous $\theta$.

\begin{figure}[htbp]
\centering
\includegraphics[width=\linewidth]{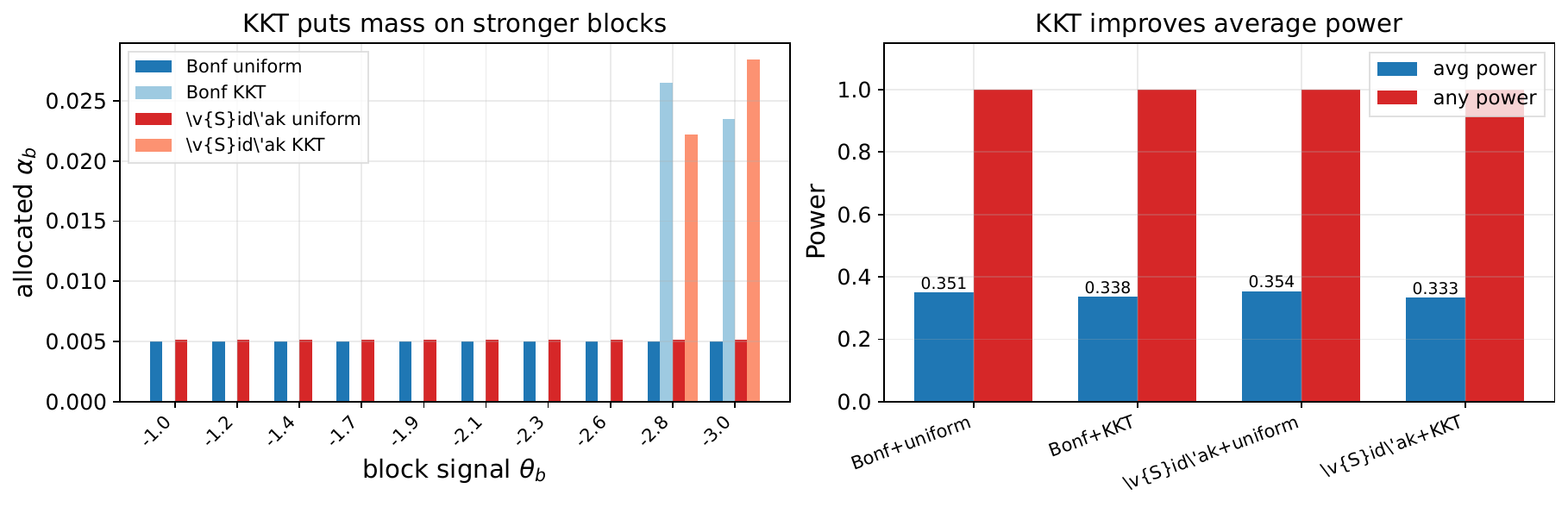}
\caption{KKT (equalized-marginal) versus uniform allocation across heterogeneous blocks. Left: allocated $\alpha_b$ per block signal $\theta_b$; KKT shifts mass toward stronger blocks. Right: average and any-discovery power for each arm. Experiment E5; $B=6$, signals $\theta_b\in[-2.5,+0.5]$, $\alpha=0.05$, $2\cdot 10^4$ Monte-Carlo replicates.}
\label{fig:sim_kkt_alloc}
\end{figure}

\section{Stress tests and robustness diagnostics}
\label{app:additional_sims}

\subsection{Non-Gaussian $p$-value family: Beta model}
\label{app:sim_beta}

To check that the average-power advantage of \textsc{BOOST} is not a Gaussian artefact, we replace the within-block $p$-value family with Beta$(s,1)$ under the alternative and uniform under the null, with $s\in\{0.3,\ldots,0.9\}$ (smaller $s$ is a stronger signal; this is the standard Beta two-groups mixture template). All other configuration matches the main truncnorm experiment: $B=10$, $K=30$, $\alpha=0.05$, $3\cdot 10^4$ replicates.

Figure~\ref{fig:sim_beta} shows the resulting power curves. \textsc{BOOST} leads the classical and dependence-aware stepwise baselines over the full signal grid, and closed-Fisher collapses because the Fisher statistic is calibrated for uniform nulls only; the ordering matches Section~\ref{subsec:sim1_truncnorm}. All procedures control FWER at the nominal $0.05$ level (empirical FWER $\le 0.051$).

\begin{figure}[htbp]
\centering
\includegraphics[width=\linewidth]{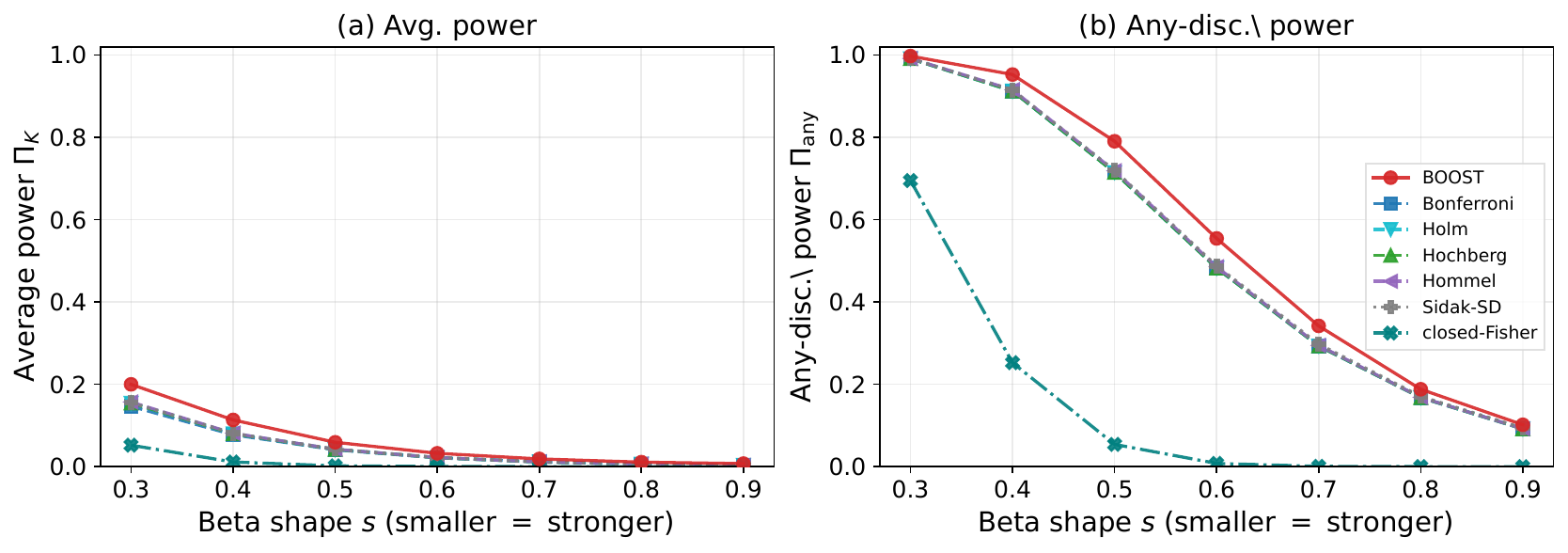}
\caption{Beta-$(s,1)$ alternative, uniform null. $B=10$, $K=30$, $\alpha=0.05$, $3\cdot 10^4$ replicates. Left: average power $\Pi_K$; right: any-discovery power $\Pi_{\mathrm{any}}$. \textsc{BOOST} leads the stepwise baselines; closed-Fisher collapses outside its calibration class.}
\label{fig:sim_beta}
\end{figure}

\subsection{Plug-in density misspecification}
\label{app:sim_misspec}

Theorem~\ref{thm:plugin_fwer} deflates the nominal $\alpha$ to $\alpha_n^{\star}=\alpha-L_3 B r_n$ with $r_n=(\log n/n)^{1/3}$, trading a vanishing fraction of the power budget for FWER robustness against plug-in error in the within-block density $g$. Here we sweep the fit-fold size $n$ and report the downstream testing metrics directly. Figure~\ref{fig:sim_misspec} uses the truncnorm model with $B=4$, $K=12$, $\theta=-1.5$, $20$ plug-in trials per $n$. As $n$ grows from $500$ to $32{,}000$, the sup-norm density error falls from $\sim 12$ to $\sim 3$, but the plug-in $\Pi_K$ is already within one trial-wise std.\ of the oracle by $n\approx 2000$; FWER stays controlled at the nominal level across the whole sweep. This quantifies what Theorem~\ref{thm:plugin_power} predicts: downstream power is far less sensitive to plug-in error than the density estimator's uniform norm would suggest.

\begin{figure}[htbp]
\centering
\includegraphics[width=\linewidth]{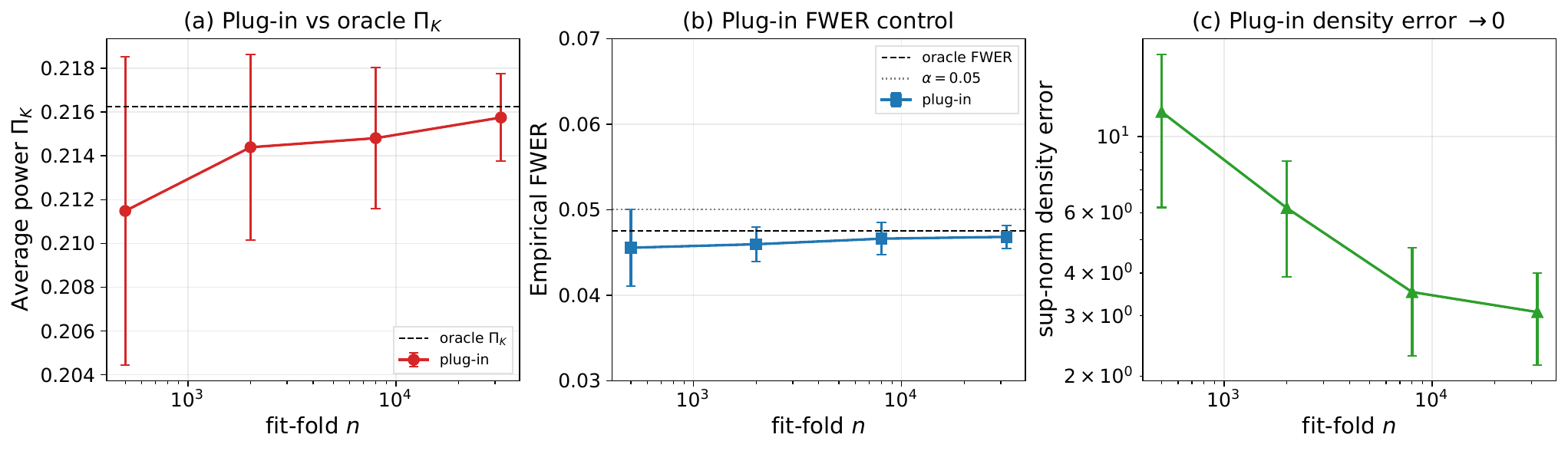}
\caption{Plug-in density misspecification sweep. Truncnorm, $B=4$, $K=12$, $\theta=-1.5$; $20$ plug-in trials per fit-fold size $n$; $3\cdot 10^4$ test-fold replicates per trial. (a) plug-in vs.\ oracle $\Pi_K$; (b) empirical FWER (nominal $\alpha=0.05$); (c) sup-norm error of the fitted within-block density. Error bars are $\pm$one trial-wise std.\ dev.}
\label{fig:sim_misspec}
\end{figure}

\subsection{Empirical Grenander plug-in rate validation}
\label{app:plugin_rate_validation}

The deflation $\alpha_n^\star=\alpha-L_3 B r_n$ in Theorem~\ref{thm:plugin_fwer} and the per-block power transfer in Corollary~\ref{cor:plugin_rates} both rest on the Birge--Groeneboom uniform rate $\|g-\hat g\|_\infty=O_p(r_n)$ with $r_n=(n^{-1}\log n)^{1/3}$ for the Grenander MLE on a monotone within-block density~\citep{groeneboom2014nonparametric}. Figure~\ref{fig:plugin_rate} validates both rate and transfer end-to-end on the truncnorm alternative at $\theta=-1.5$. For each fold size $n\in\{500,2000,8000,32000,128000\}$ we draw $30$ Monte-Carlo trials: each trial fits $\hat g$ on $n$ alternative $p$-values, solves the $K=3$ block engine on a fixed $80{,}000$-point $Q$-grid for both $(\mu^\star,g)$ (oracle) and $(\hat\mu,\hat g)$ (plug-in) at $\alpha_{\mathrm{blk}}=\alpha/B=0.005$, and evaluates per-hypothesis rejection on a shared testing fold of $2\cdot 10^5$ blocks. The interior sup-norm error is measured on a $200$-point grid in $[0.05,0.95]$ to avoid the well-known boundary inflation of monotone density estimators.

\begin{figure}[htbp]
\centering
\includegraphics[width=\linewidth]{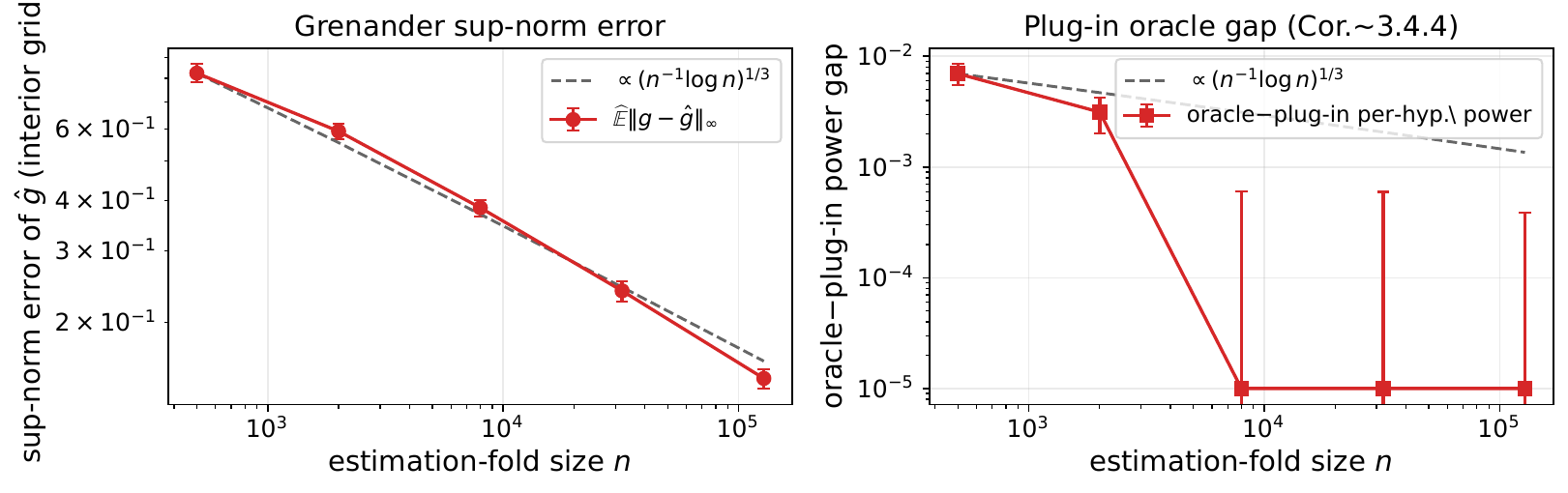}
\caption{Empirical validation of Corollary~\ref{cor:plugin_rates}. Truncnorm $\theta=-1.5$, $K=3$ block, $\alpha_{\mathrm{blk}}=\alpha/B=0.005$, $30$ MC trials per $n$. Left: interior sup-norm error of the Grenander estimator $\hat g$ tracks the Birge--Groeneboom rate $r_n=(n^{-1}\log n)^{1/3}$ across $2.5$ decades of $n$. Right: oracle-minus-plug-in per-hypothesis power gap is dominated by the same rate, as predicted by the Lipschitz transfer in Corollary~\ref{cor:plugin_rates}; values past $n=8000$ sit at the Monte-Carlo noise floor.}
\label{fig:plugin_rate}
\end{figure}

The left panel shows that the ratio $\widehat E\|g-\hat g\|_\infty / r_n$ is essentially constant across the sweep ($3.56,3.79,3.69,3.48,3.22$ at $n=500,2k,8k,32k,128k$), so the empirical decay of the sup-norm error matches $r_n$ to leading order. The right panel transfers this rate to the downstream testing problem: the per-hypothesis power gap $\Pi_3^g(\alpha_{\mathrm{blk}})-\widehat{\mathrm{Pow}}(\hat D)$ shrinks from $+0.0070$ at $n=500$ to within Monte-Carlo noise of zero by $n=8000$, decaying at least as fast as the reference rate. The faster-than-$r_n$ collapse to the noise floor is consistent with the Lipschitz constant $L_3^{\mathrm{pow}}$ being modest at this $(\theta,\alpha_{\mathrm{blk}})$ pair, leaving a generous safety margin: deflating $\alpha$ by $L_3 B r_n$ already overpays for the actual oracle gap by a factor of several.

\subsection{ESP under block heterogeneity and model misspecification}
\label{app:esp_hetero_check}

Section~\ref{subsec:sim_esp_h2h} compares \textsc{BOOST} against the general-$K$ ESP solver of~\citet{dubey26esp} on homogeneous truncnorm blocks. Here we vary the within-block family so that the global simplex objective ESP optimizes is no longer correctly specified by a single $g$, and study how each method copes. Configuration: $K=12$, $B=4$, $b=3$; two truncnorm blocks at $\theta$ and two $t_{\mathrm{df}=4}$ blocks; $\alpha=0.05$, $\alpha_{\mathrm{blk}}=\alpha/B$; $2\times 10^4$ Monte-Carlo replicates. Four solver arms are compared (Table~\ref{tab:esp_hetero_check}):
\begin{itemize}
\item \emph{ESP-mispec ($g_{\mathrm{TN}}$).} The realistic case where the practitioner does not know the $t$ blocks exist and fits a truncnorm-only $g$ to ESP.
\item \emph{ESP-mix (oracle mixture $g_{\mathrm{mix}}=\tfrac12 g_{\mathrm{TN}}+\tfrac12 g_t$).} ESP solved with oracle knowledge of the joint marginal; this is the correctly-specified ESP arm.
\item \emph{\textsc{BOOST}.} Single truncnorm $g$ shared across blocks (matched to ESP-mispec on the $g$-input).
\item \emph{\textsc{BOOST-adaptive}.} Per-block $\hat g_b$ from the correct family.
\end{itemize}
Bonferroni is reported as a stepwise reference.

\begin{table}[htbp]
\centering
\small
\caption{ESP on the heterogeneous DGP: $K=12$, $B=4$, $b=3$, two truncnorm blocks at $\theta$ and two $t_{\mathrm{df}=4}$ blocks; $\alpha=0.05$, $\alpha_{\mathrm{blk}}=\alpha/B$, $2\times 10^4$ replicates. ESP-mispec uses a truncnorm-only $g$ (the realistic case where the practitioner does not enumerate every block family). ESP-mix uses the oracle marginal mixture $g_{\mathrm{mix}}=\tfrac12 g_{\mathrm{TN}}+\tfrac12 g_t$. \textsc{BOOST} applies a single truncnorm $g$ uniformly; \textsc{BOOST-adaptive} fits the correct per-block $g$. Bold: per-column maximum.}
\label{tab:esp_hetero_check}
\begin{tabular}{l cc cc cc}
\toprule
 & \multicolumn{2}{c}{$\Pi_K$} & \multicolumn{2}{c}{$\Pi_{\mathrm{any}}$} & \multicolumn{2}{c}{FWER$_0$} \\
\cmidrule(lr){2-3} \cmidrule(lr){4-5} \cmidrule(lr){6-7}
Method & $\theta{=}-1.5$ & $-2.5$ & $-1.5$ & $-2.5$ & $-1.5$ & $-2.5$ \\
\midrule
Bonferroni                          & 0.086 & 0.245 & 0.668 & 0.979 & 0.050 & 0.050 \\
\textsc{BOOST}                      & 0.128 & 0.324 & 0.801 & 0.997 & 0.046 & 0.049 \\
\textsc{BOOST-adaptive}             & 0.135 & 0.331 & \textbf{0.806} & \textbf{0.997} & 0.049 & 0.050 \\
ESP-mispec ($g_{\mathrm{TN}}$)      & 0.080 & 0.177 & 0.547 & 0.840 & 0.014 & 0.009 \\
ESP-mix (oracle $g_{\mathrm{mix}}$) & \textbf{0.188} & \textbf{0.373} & 0.798 & 0.994 & 0.036 & 0.033 \\
\bottomrule
\end{tabular}
\end{table}

\paragraph{Misspecified ESP collapses.} With a truncnorm-only $g$ fit to a half-truncnorm, half-$t_{\mathrm{df}=4}$ marginal, ESP's $\Pi_K$ falls below Bonferroni and its FWER collapses to $0.009$--$0.014$: the simplex optimizer concentrates rejection mass at right-tail positions the heavier-tailed $t$ blocks rarely clear. \textsc{BOOST-adaptive} (correct per-block $g$) holds $1.7$--$1.9\times$ and \textsc{BOOST} (single truncnorm $g$) still $1.6$--$1.8\times$ the misspecified-ESP rate; block-separability localizes per-block model error where ESP's joint rule propagates it. Policy-solve cost is also lower: four independent $K{=}3$ solves take $21$s ($\theta{=}-1.5$) / $77$s ($\theta{=}-2.5$) for \textsc{BOOST-adaptive} against $183$s / $188$s for ESP-mispec, a $2.4$--$8.6\times$ gap on the one-time solve (distinct from the per-MC-replicate ratios of Table~\ref{tab:sim_esp_h2h}). The realism of the misspecified-ESP comparison rests on the practitioner being unable to enumerate every block family ex ante.

\paragraph{Oracle-mixture ESP is the upper-envelope arm.} Given the oracle marginal mixture, ESP-mix attains $1.1$--$1.4\times$ \textsc{BOOST-adaptive}'s $\Pi_K$, a genuine advantage of the global simplex objective when the joint marginal is exactly known. Two caveats temper the comparison: (i) on $\Pi_{\mathrm{any}}$ the ranking flips, with \textsc{BOOST-adaptive} ahead at both $\theta$ ($+0.008$ at $\theta{=}-1.5$, $+0.003$ at $\theta{=}-2.5$), so the $\Pi_K$ ordering does not transfer to other $\mathfrak D_{\mathrm{sep}}$ objectives; (ii) the wall-clock cost is $3$--$10\times$ that of \textsc{BOOST} ($183$s and $257$s per ESP solve at $\theta\in\{-1.5,-2.5\}$ vs.\ $21$s and $77$s for \textsc{BOOST}'s four per-block solves), and the required oracle knowledge of $g_{\mathrm{mix}}$ is unavailable in practice; the realistic substitute, a per-family fit, is the ESP-mispec arm above.

\subsection{Three-family heterogeneous mix at weak signal}
\label{app:hetero_blocks}

Section~\ref{subsec:sim_hetero} compares \textsc{BOOST-adaptive} (per-block $\hat\mu_b$) against \textsc{BOOST} (single shared $\hat\mu$) on two-family mixes. This appendix extends the comparison to a three-family mix at the weaker signal $\theta=-1$ (truncnorm blocks at $\theta=-1$, mixnorm blocks at $\theta=-1$, tdist blocks at $\mathrm{df}=4$; $B=6$, $K=18$, $b=3$, $\alpha=0.05$, $\alpha_{\mathrm{blk}}=\alpha/B$, $3\times 10^4$ replicates). At $\theta=-1$ the truncnorm and mixnorm densities differ on the full unit interval and the tdist arm introduces a heavy-tailed null-deviation pattern absent in both. \textsc{BOOST-adaptive} attains $\Pi_K=0.054$ against $0.047$ for \textsc{BOOST} (a $+15\%$ relative lift) and $\le 0.035$ for every stepwise baseline; empirical FWER is $0.049$ (\textsc{BOOST-adaptive}) and $0.044$ (\textsc{BOOST}), well within nominal. The three-family relative gap ($+15\%$) is larger than the two-family ($+4\%$) of Figure~\ref{fig:sim_hetero}(a), confirming that per-block adaptation compounds with the number of families mixed.

\begin{table}[htbp]
\centering
\small
\caption{Three-family heterogeneous blocks at the weak signal $\theta=-1$: $2\times$ truncnorm $\theta=-1$, $2\times$ mixnorm $\theta=-1$, $2\times$ tdist $\mathrm{df}=4$. $B=6$, $K=18$, $b=3$, $\alpha=0.05$, $\alpha_{\mathrm{blk}}=\alpha/B$, $3\times 10^4$ replicates. Empirical FWER at the complete null is $0.049$ (\textsc{BOOST-adaptive}), $0.044$ (\textsc{BOOST}), and $\le 0.051$ for every stepwise baseline.}
\label{tab:hetero_configC}
\begin{tabular}{lrrr}
\toprule
& $\Pi_K$ & $\Pi_{\mathrm{any}}$ & FWER \\
\midrule
\textbf{\textsc{BOOST-adaptive}} (per-block $\hat\mu_b$) & \textbf{0.054} & \textbf{0.568} & 0.049 \\
\textsc{BOOST} (truncnorm $\hat\mu$ shared)              & 0.047 & 0.547 & 0.044 \\
Bonferroni & 0.034 & 0.462 & 0.050 \\
Holm       & 0.034 & 0.462 & 0.050 \\
Hochberg   & 0.034 & 0.462 & 0.050 \\
Hommel     & 0.035 & 0.465 & 0.050 \\
\v{S}id\'ak-SD & 0.035 & 0.466 & 0.051 \\
closed-Fisher  & 0.007 & 0.113 & 0.000 \\
\bottomrule
\end{tabular}
\end{table}

\subsection{Dependence stress test of empirical FWER}
\label{app:sim_dep_fwer}

Theorem~\ref{thm:blockwise_strong_fwer_clean} states strong-FWER validity of \textsc{BOOST} within the block-separable class $\mathfrak D_{\mathrm{sep}}$: the $K{=}3$ optimizer of~\citet{dubey25} is calibrated against the within-block joint marginal (Assumption~\ref{as:local_validity_marginal}), while cross-block dependence is absorbed by the Bonferroni/\v{S}id\'ak union bound and is therefore arbitrary. To map empirical sensitivity of the cross-block side, we stress-test three Gaussian regimes at $B=10$, $\alpha=0.05$, $\alpha_{\mathrm{blk}}=\alpha/B$, and $2\times 10^4$ Monte-Carlo replicates, with complete-null $p$-values $u_k=2\Phi(-|X_k|)$ and $\mu$ calibrated against independent within-block nulls:
\begin{enumerate}
\item \emph{Independent} ($\rho=0$): $X_k\overset{\mathrm{iid}}{\sim}\mathcal N(0,1)$.
\item \emph{Cross-block equicorrelation}: $X_k=\sqrt\rho Z_0+\sqrt{1-\rho}Z_k$ with a single latent factor $Z_0$ coupling all $K$, for $\rho\in\{0.2,0.4,0.6,0.8,0.95\}$.
\item \emph{1-factor}: $X_k=\lambda_k Z_0+\sqrt{1-\lambda_k^2}Z_k$ with block-constant heterogeneous loadings $\lambda_k$ averaging $\bar\lambda\in\{0.1,0.3,0.5,0.7,0.9\}$.
\end{enumerate}

\paragraph{Findings.}
Figure~\ref{fig:sim_dep_fwer} reports two FWER statistics per regime under the complete null: the global FWER $\mathbb{P}(\cup_{k} \{k\in\mathcal R\})$ and the average per-block FWER $B^{-1}\sum_b\mathbb{P}(E_b)$. Under independence, global FWER is $0.049$ (nominal) and block-level is $0.005\approx\alpha/B$, confirming the union bound is essentially tight. Under cross-block equicorrelation, global FWER is $\{0.066, 0.068, 0.056, 0.033, 0.014\}$ at $\rho\in\{0.2, 0.4, 0.6, 0.8, 0.95\}$, within $0.02$ of nominal, conservative at high $\rho$. The 1-factor regime interpolates, with global FWER peaking at $0.069$ at $\bar\lambda=0.7$ and falling to $0.032$ at $\bar\lambda=0.9$. Block-level FWER stays at or below $0.012$ in every regime, i.e.\ well within its per-block budget. A companion run with the \v{S}id\'ak per-block budget $\alpha_{\mathrm{blk}}=1-(1-\alpha)^{1/B}$ (Corollary~\ref{cor:sidak_blockwise_optimality}) at $4\times 10^4$ replicates confirms that its extra budget (a $1.023\times$ enlargement of $\alpha_{\mathrm{blk}}$ at $B=10$) does not degrade dependence robustness: under independence both arms are within Monte-Carlo noise of $\alpha$ ($0.048$ Bonferroni, $0.049$ \v{S}id\'ak), and across cross-block and 1-factor dependence the \v{S}id\'ak arm peaks at $0.073$ vs.\ Bonferroni's $0.069$; the entire sweep remains within $0.02$ of nominal and the $0.001$--$0.005$ gap is bounded by the $\alpha_{\mathrm{blk}}$ enlargement as expected. The $0.016$--$0.023$ excess over nominal at $\rho\in\{0.2,0.4\}$ and $\bar\lambda=0.7$ is $\sim 10$--$15$ Monte-Carlo standard deviations (MC-SE $\approx 1.5\times 10^{-3}$ at $2\times 10^4$ replicates) and is therefore genuine rather than sampling noise: it reflects the slack between the union bound $\sum_b\mathbb P(E_b)\le\alpha$ and the realized $\mathbb P(\cup_b E_b)$ under positive cross-block correlation, compounded with finite-sample per-block marginal-calibration error (per-block FWER up to $0.012$ vs.\ the $0.005$ budget). At $\rho\ge 0.6$ the same mechanism tips the other way (strong positive correlation makes block-level false-rejection events overlap, pulling $\mathbb P(\cup_b E_b)$ \emph{below} nominal), and per-block FWER remains within $\alpha_{\mathrm{blk}}$ throughout. The takeaway matches Theorem~\ref{thm:blockwise_strong_fwer_clean}: \textsc{BOOST} is robust to arbitrary cross-block dependence under the calibrated within-block marginal, with finite-sample slack bounded by the per-block calibration gap and union-bound tightness.

\begin{figure}[htbp]
\centering
\includegraphics[width=0.9\linewidth]{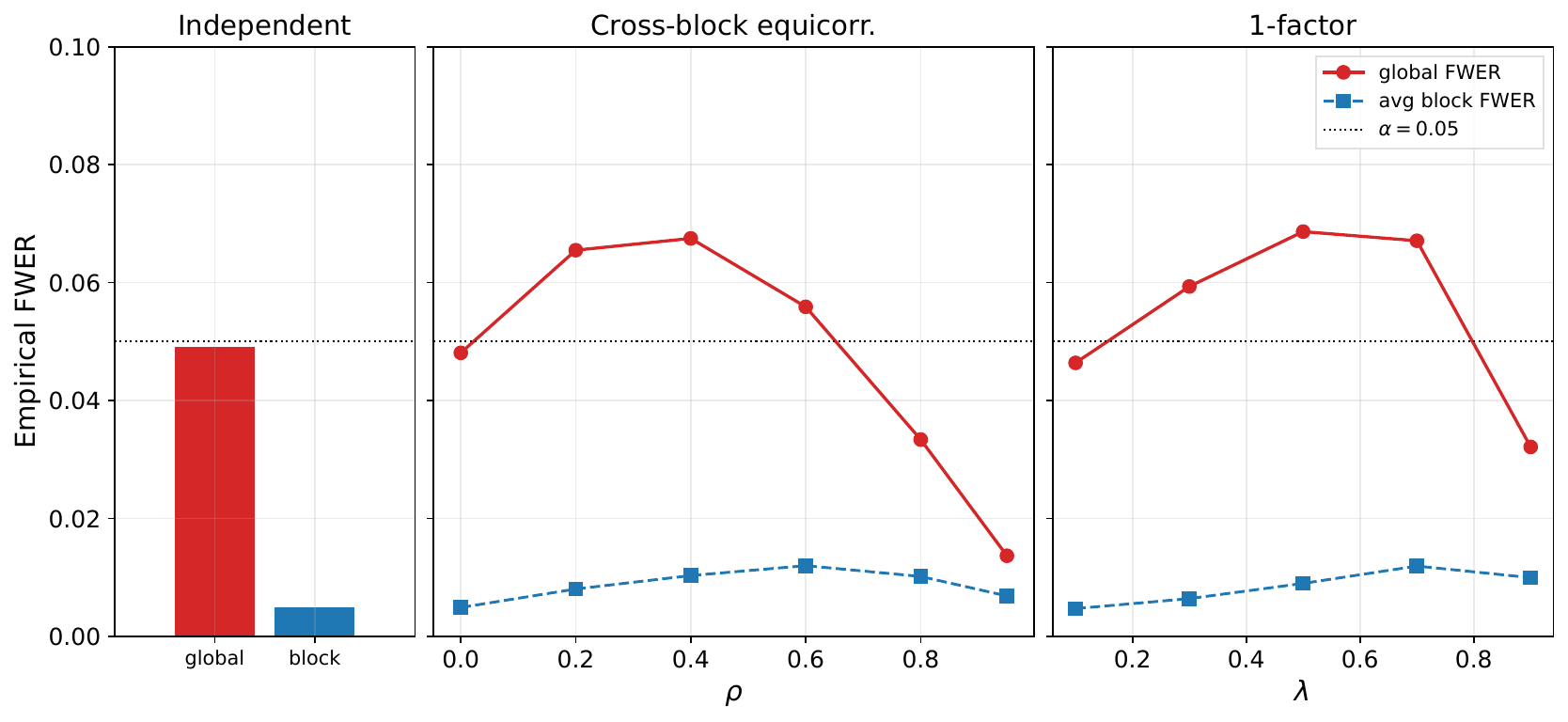}
\caption{Empirical FWER under three Gaussian dependence regimes (panels left-to-right: independent, cross-block equicorrelation, 1-factor), $B=10$, $K=30$, $\alpha=0.05$. Red: global FWER $\mathbb{P}(V\!>\!0)$ under the complete null; blue: average per-block FWER; dotted: $\alpha$. Global FWER stays within $0.02$ of $\alpha$ and block-level FWER stays at or below $0.012\approx 2.4\,\alpha_{\mathrm{blk}}$; cross-block dependence is absorbed by the union bound with finite slack (details in text).}
\label{fig:sim_dep_fwer}
\end{figure}

\subsection{Sparsity and partition alignment}
\label{app:sim_sparsity}

The block-separable design is calibrated to a partition that reflects substantive structure; this stress test probes its sensitivity to that assumption. We sweep the active fraction $f\in\{0.1,\ldots,1.0\}$ at $B=10$, $\theta=-2.5$ under two placements: \emph{aligned} (actives fill entire blocks, the design regime) and \emph{scattered} (actives placed uniformly at random, an adversarial stress test in which the partition carries no information). Under aligned placement, \textsc{BOOST} maintains a stable $\approx 1.7\times$ advantage over Bonferroni at every $f\ge 0.1$ ($\Pi_K\approx 0.57$ vs.\ $0.33$). Under scattered placement, the break-even against the best stepwise baseline (\v{S}id\'ak-SD) is $f^\star\approx 0.7$; for $f\ge f^\star$ \textsc{BOOST} leads by $+0.06$ to $+0.18$ in $\Pi_K$. The takeaway is positive scope: \textsc{BOOST} is the right tool whenever the practitioner's chosen partition carries domain-meaningful structure (e.g., grouped by gene, pathway, asset, time window, or experimental factor). Scattered placement is a \emph{non-design} regime: a practitioner with no substantive partitioning rationale would not adopt a block-separable rule in the first place; including it as a stress test demonstrates that \textsc{BOOST} does not require a particular alignment to be calibrated, only to lead.

\begin{figure}[htbp]
\centering
\includegraphics[width=0.92\linewidth]{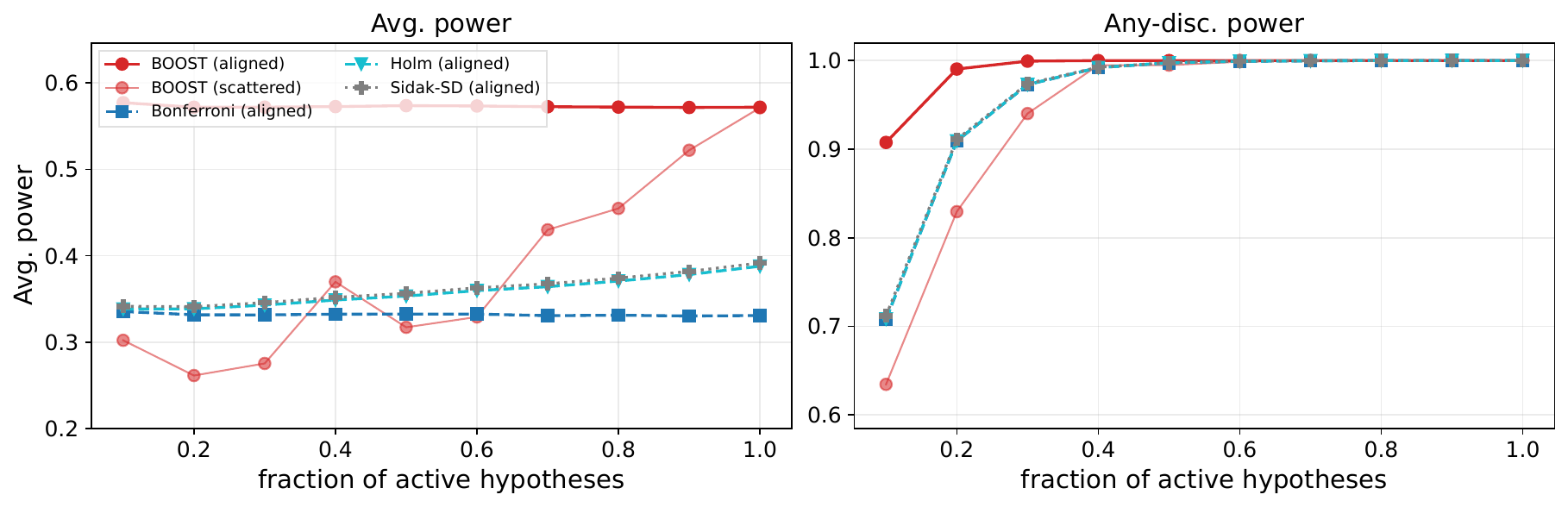}
\caption{Sparsity and partition alignment (truncnorm, $B=10$, $\theta=-2.5$, $\alpha=0.05$). Average and any-discovery power vs.\ active fraction $f$; aligned (solid) vs.\ scattered (pale) placements for \textsc{BOOST} and three representative stepwise baselines. Break-even fraction $f^\star\approx 0.7$ under scattered placement.}
\label{fig:sim_sparsity}
\end{figure}

\section{Modern closed-testing baselines: implementation and e-value discussion}
\label{app:sim_modern_baselines}

This section details the two closed-testing baselines of Table~\ref{tab:sim_modern_main} (Meinshausen 2008 hierarchical Simes; Hartog 2025 e-value closure) and explains why the Vovk--Wang-calibrated e-closure loses power under uniform nulls.

\paragraph{Implementations.}
We instantiate both on the natural tree induced by the block partition: root $=$ all $K$ hypotheses, $B$ block children of size $3$, $K$ leaf children. Meinshausen's procedure assigns node weights $w_v = |v|/K$ (proportion of leaves), uses Simes' combination as the local test, and gates each child on its parent rejecting; with these proportional weights the Simes local test past the root and block nodes reduces to a leaf-Bonferroni cutoff, so the hierarchy adds no power over Bonferroni on this tree. The Hartog e-closure uses the universal Vovk--Wang calibrator $e_\kappa(p)=\kappa p^{\kappa-1}$ with $\kappa=1/2$ (i.e.\ $e(p)=1/(2\sqrt p)$, a valid e-value with $\int_0^1 e = 1$), rejecting each node $v$ when $\sum_{i\in v} e_i \ge |v|/\alpha$. The singleton leaf test $e_i\ge 1/\alpha$ is equivalent to $p_i \le (\alpha/2)^2$, substantially more conservative than $\alpha/K$ Bonferroni when the null $p$-value distribution is exactly uniform; under our uniform-null sample ($2\cdot 10^4$ replicates, $\mathbb P(p\le(\alpha/2)^2)=6.25\cdot 10^{-4}$) the FWER estimate in Table~\ref{tab:sim_modern_main} rounds to $0$.

\paragraph{Why e-values cost power here.}
Hartog's e-closure gains validity under arbitrary dependence and compound adaptivity, but this robustness is purchased at a Bonferroni-order union bound: the singleton leaf cutoff $p\le (\alpha/2)^2$ is a factor of roughly $K\alpha/4$ smaller than Bonferroni's $p\le \alpha/K$. The hierarchy does not recover this factor because the root-level sum $\sum_i e_i$ has mean $K$ under the null, so its critical value $K/\alpha$ is $1/\alpha$ standard deviations above the mean, a tail event already out of reach for moderate $K$ and $\alpha$. E-values are the right tool when the null $p$-value distribution is unknown or itself adaptively chosen; under the exactly-uniform null of our experiments, $p$-value methods are strictly preferable, and this appendix section is a caveat on the aesthetic appeal of the recent e-value wave rather than evidence that \textsc{BOOST} improves on a previously-known optimal competitor.

\section{Reproducibility, AI-tool usage, and acknowledgements}
\label{app:meta}

\paragraph{Supplementary material.}
Proofs of all theoretical results, extension statements, secondary simulations, and additional diagnostics are collected in the present appendix. Code and simulation artifacts are available from the authors on request.

\paragraph{Declaration of the use of generative AI and AI-assisted technologies.}
We used generative AI tools (ChatGPT, Google Gemini, and Claude) for language editing and code formatting support only. All data, results, and mathematical derivations are the authors' own work.

\paragraph{Acknowledgements.}
Dubey gratefully acknowledges partial support from the Stewart Topper Fellowship at Georgia Tech. Huo is partially sponsored by a subcontract of NSF grant 2229876, the A. Russell Chandler III Professorship at Georgia Institute of Technology, an NIH-sponsored Georgia Clinical \& Translational Science Alliance, and the Georgia Department of Transportation.